\numberwithin{equation}{section}
 \newtheorem{Theorem}{Theorem}[section]
\newtheorem{Proposition}[Theorem]{Proposition}
\newtheorem{remark}[Theorem]{Remark}
\newtheorem{theo}{Theorem}[section] 
\newtheorem{lem}[theo]{Lemma} 
\newtheorem{rem}[theo]{Remark} 
\newtheorem{pro}[theo]{Proposition} 
\newtheorem{definition}[theo]{Definition}
\def\captionfont@{\footnotesize}
\def\captionheadfont@{\scshape}
\long\def\@makecaption#1#2{%
  \vspace{2mm}
  \setbox\@tempboxa\vbox{\color@setgroup
    \advance\hsize-6pc\noindent
    \captionfont@\captionheadfont@#1\@xp\@ifnotempty\@xp
        {\@cdr#2\@nil}{.\captionfont@\upshape\enspace#2}%
    \unskip\kern-6pc\par
    \global\setbox\@ne\lastbox\color@endgroup}%
  \ifhbox\@ne 
    \setbox\@ne\hbox{\unhbox\@ne\unskip\unskip\unpenalty\unkern}%
  \fi
  \ifdim\wd\@tempboxa=\z@ 
    \setbox\@ne\hbox to\columnwidth{\hss\kern-6pc\box\@ne\hss}%
  \else 
    \setbox\@ne\vbox{\unvbox\@tempboxa\parskip\z@skip
        \noindent\unhbox\@ne\advance\hsize-6pc\par}%
\fi
  \ifnum\@tempcnta<64 
    \addvspace\abovecaptionskip
    \moveright 3pc\box\@ne
  \else 
    \moveright 3pc\box\@ne
    \nobreak
    \vskip\belowcaptionskip 
  \fi
\relax
}
\def\writefig#1 #2 #3 {\rlap{\kern #1 truecm
\raise #2 truecm \hbox{#3}}}
\DeclareMathSymbol{\leqslant}{\mathalpha}{AMSa}{"36} 
\DeclareMathSymbol{\geqslant}{\mathalpha}{AMSa}{"3E} 
\DeclareMathSymbol{\eset}{\mathalpha}{AMSb}{"3F}     
\renewcommand{\leq}{\;\leqslant\;}                   
\renewcommand{\geq}{\;\geqslant\;}                   
\newcommand{\bra}{\langle}
\newcommand{\ket}{\rangle}
\newcommand{\cA}{\ensuremath{\mathcal A}}
\newcommand{\cB}{\ensuremath{\mathcal B}}
\newcommand{\cC}{\ensuremath{\mathcal C}}
\newcommand{\cD}{\ensuremath{\mathcal D}}
\newcommand{\cE}{\ensuremath{\mathcal E}}
\newcommand{\cG}{\ensuremath{\mathcal G}}
\newcommand{\cH}{\ensuremath{\mathcal H}}
\newcommand{\cL}{\ensuremath{\mathcal L}}
\newcommand{\cM}{\ensuremath{\mathcal M}}
\newcommand{\cO}{\ensuremath{\mathcal O}}
\newcommand{\cR}{\ensuremath{\mathcal R}}
\newcommand{\cV}{\ensuremath{\mathcal V}}
\newcommand{\cW}{\ensuremath{\mathcal W}}
\newcommand{\bbA}{{\ensuremath{\mathbb A}} }
\newcommand{\bbP}{{\ensuremath{\mathbb P}} }
\newcommand{\bbR}{{\ensuremath{\mathbb R}} }
\newcommand{\bbZ}{{\ensuremath{\mathbb Z}} }
\newcommand{\ga}{\alpha}
\newcommand{\gb}{\beta}
\newcommand{\gga}{\gamma}            
\newcommand{\gd}{\delta}
\newcommand{\gep}{\varepsilon}       
\newcommand{\gp}{\varphi}
\newcommand{\gr}{\rho}
\newcommand{\go}{\omega}
\newcommand{\gO}{\Omega}
\newcommand{\gl}{\lambda}
\newcommand{\gL}{\Lambda}
\newcommand{\gs}{\sigma}
\newcommand{\gS}{\Sigma}
\newcommand{\gt}{\vartheta}
\newcommand{\mm}{m}
\newcommand{\bk}[1]{\left\bra #1 \right\ket}
\newcommand{\bkt}[1]{\mu_{\ga,T} \left( #1 \right )}
\title[Activity phase transition]{Activity phase transition\\ for constrained dynamics}
\author{T. Bodineau, C. Toninelli}
\begin{document}

\maketitle

\begin{abstract}

We consider two cases of kinetically constrained models, namely East and FA-1f models. The object of interest of our work is the activity $\cA(t)$ defined as the total number of configuration changes in the interval $[0,t]$ for the dynamics on a finite domain.
It has been shown in \cite{GJLPDW1,GJLPDW2} that the large deviations of the activity exhibit a non-equilibirum phase transition in the thermodynamic limit and that reducing the activity is more likely than increasing it due to a blocking mechanism induced by the constraints. 
In this paper, we study the finite size effects around this first order phase transition and analyze the phase coexistence between the active and inactive dynamical phases in dimension 1. In higher dimensions, we show that the finite size effects  are also determined 
by the dimension and the choice of boundary conditions.

\end{abstract}

\textit{Mathematics Subject Classification: 60K35, 82C22, 60F10}

\textit{Keywords: kinetically constrained models, non-equilibrium dynamics, large deviations, glassy dynamics, interacting particle systems.}

\section{Introduction}
\label{intro}

\medskip

Kinetically constrained spin models (KCSM) are
interacting particle systems which have been introduced and very much studied  in the physics
literature  to model liquid/glass transition and more
generally glassy dynamics (see \cite{RS,GST} and references therein).
A configuration is given by
assigning to each vertex $x$ of a (finite or infinite) connected graph
$\cG$ its occupation variable
$\eta_x \in\{0,1\}$ which corresponds to an empty or filled site,
respectively.  The evolution is given by  Markovian stochastic dynamics
of Glauber type.   Each site with rate one refreshes its occupation variable to
a filled or to an empty state with probability $\rho$ or $1-\rho$
respectively provided that the current configuration satisfies an
a priori specified local constraint.  
Here we focus on two of the most studied KCSM, the East  \cite{JE} and FA-1f models \cite{FA1,FA2} on hypercubic lattices
($\cG\subset \mathbb Z^d$): the constraint  at $x$ requires for East model  its right nearest neighbour to be empty, for FA-1f model at least one of its nearest neighbours to be empty.
Note that in both cases (and this is a general feature of KCSM) the  constraint which should be satisfied to allow creation/annihilation of a particle at $x$ does not
involve $\eta_x$, thus detailed balance w.r.t. the Bernoulli product
measure at density $\rho$ is an invariant reversible measure for the process. Both models are ergodic on $\cG=\mathbb Z^d$ for any $\rho\in(0,1)$ with a positive spectral gap  which shrinks to zero as $\rho\to 1$ corresponding to the occurrence of diverging mixing times \cite{CMRT}.

Several numerical works and approximated analytical treatments have  shown that relaxation for both models occurs in a more and more spatially heterogenous way as density is increased (see Section 1.5 of \cite{GST} and references therein). For example when measuring the persistence field $p_x(t)$ which equals to one if  site $x$ has never changed its state up to time $t$  and zero otherwise  a clear spatial segregation is observed among sites with 0/1 values of $p$ at time scales corresponding to the typical relaxation time of the persistence function which corresponds to the spatial average of the persistence field.
More quantitatively, if one measures the spatial correlation function of this persistence field, a dynamical correlation length corresponding to the extent of these heterogeneities can be extracted. This length increases as the density is increased.  The occurrence of these dynamical heterogeneites has  lead to  the idea that the dynamics of KCM
takes place on a first-order coexistence line between active and inactive dynamical phases \cite{MGC,JGC}. In
order to exploit this idea in \cite{GJLPDW1,GJLPDW2} the fluctuation of the dynamical activity $\cA(t)$ defined as the number of microscopic configuration changes on a volume of linear size $N$ in the time interval $[0,t]$ has been investigated. 
The mean activity scales as
$$\lim_{N\to\infty}\lim_{t\to\infty} \frac{\bra \cA(t) \ket}{N t} = \mathbb A \, , $$ 
where $\mathbb A$ depends on the density and on the choice of the constraints, as we will detail in Section \ref{models}. 
Thus  one could expect that the probability of observing a deviation from the mean value scales as
\begin{equation}
\label{eq: LD intro}
\lim_{N\to\infty}\lim_{t\to\infty} \;  \frac{1}{Nt} \log \left \bra \frac{\cA(t)}{N t} \simeq a \right \ket =-f(a) \, ,
\end{equation} 
with $0<f (a) <\infty$ for $a\neq \mathbb A$ as it occurs for the models without constraints. However, as it has been observed in \cite{GJLPDW1,GJLPDW2}, due to the presence of the constraint  it is possible to realize at a low cost a trajectory with zero 
activity by starting from a completely filled configuration and imposing that a single site does not change its state (see Section \ref{heuristics} for a detailed explanation of the mechanism behind this phenomenon).
 Analogously  one can obtain a smaller activity than the mean one by blocking for a fraction of time a  single site. As a consequence of this sub-extensive cost for lowering the activity
$f(a)=0$ for $a<\mathbb A$.
For the same reason, the moment generating function 
\begin{eqnarray}
\label{eq: 1st order intro}
\psi(\gl) = \lim_{N \to \infty} \lim_{t \to \infty} \; 
\frac{1}{N t} \log \left \bra \exp \big( \gl \cA(t) \big) \right \ket 
\end{eqnarray}
is non analytic at $\lambda=0$ with a discontinuous first order derivative \cite{GJLPDW1,GJLPDW2}.


\medskip

In this paper, we investigate the finite size scaling of the first order transition \eqref{eq: 1st order intro}.
Our main results are estimates of the cost of phase coexistence between the active and inactive dynamical phases.
From these estimates, the relevant scaling asymptotic in  \eqref{eq: 1st order intro} can be determined.
For East and FA1f in  one  dimension, we prove (Theorem \ref{teo:phasetrans})  that
$$\gp(\alpha)                                                                                                                                                                                                                                                                                        :=\lim_{N\to\infty}\lim_{t\to\infty} \frac{1}{t} \log \left \bra \exp \left( \frac{\alpha \cA (t)}{N} \right ) \right \ket$$  
satisfies $\gp(\alpha)=\alpha \mathbb A$ if $\alpha>\alpha_0$ and $\gp(\alpha)=-\Sigma$ if $\alpha<\alpha_1$.  
This shows that a transition in \eqref{eq: 1st order intro} occurs for a value $\gl = \frac{\ga_c}{N}$ with 
$\ga_1 < \ga_c < \ga_0$.
As a consequence, the scaling of the large deviations differs  for increasing or decreasing the activity (see Theorem \ref{fluctu1}). 

We also analyze the measure on the space-time configurations 
$$\mu_{\alpha,T}^{N}: =  \frac{\bk{\cdot \; \exp( \frac{\ga}{N} \cA(T) )}}{\bk{\exp( \frac{\ga}{N} \cA(T) )}} $$
which corresponds  to the conditional measure with a fixed activity $\frac{\cA(t)}{t} = \bar A$ where $\ga$ is the parameter conjugated to $\bar A$ in Legendre transform
and prove (Theorem \ref{teo:condmes})  that depending on the value of $\alpha$ this measure 
 has very different typical configurations which can be interpreted as active and inactive dynamical phases : for $\alpha>\alpha_0$,  $\mu_{\alpha,T}^{N}$ concentrates  on trajectories with the mean activity and for $\alpha<\alpha_1$,  
 it concentrates on trajectories with zero activity.

Finally, we investigate the higher dimensional cases and show that the finite size scaling depends not only on the dimension but also on the boundary conditions (Theorem \ref{linearityd>1}). This leads to a variety of scalings for the large deviations when  the activity is reduced
 (Theorem \ref{fluctu2}).


\section{Models and results}
\label{models}

\subsection{East and FA-1f in $d=1$: the phase transition}

The East and FA-1f models in one dimension are Glauber type Markov processes on the configuration space
$\gO = \{0,1\}^{\Lambda}$ where $\Lambda\subset\bbZ$. 
Both models depend on a parameter $\rho$, with $\rho\in(0,1)$, which we will call the {\sl density}.
Here we will consider the models in finite volumes $\Lambda=\Lambda_N:=[1,N]$ and we will be interested in the thermodynamic limit $N\to\infty$.
We call $\gO_N$ the configuration space correspondent to $\Lambda_N$ and denote by greek letters $\eta,\omega$ the elements of $\Omega_N$. Then for any site  $i\in\Lambda_N$ we let  $\eta_i\in(0,1)$ be the value of configuration $\eta$ at site $i$ and we  say that $i$ is {\sl empty} ({\sl filled}) if $\eta_i=0$ ($\eta_i=1$, respectively).

The Markov process corresponding to both models can be informally described as follows. 
Each site $i\in\Lambda_N$ waits an independent mean one exponential time and then, provided  the current configuration satisfies a proper local constraint, we refresh the value of the configuration at $i$ by setting it to $1$ with probability $\rho$ and to zero with probability $1-\rho$. Instead if the constraint is not satisfied nothing occurs. Then the procedure starts again. The specific choice of the constraint identifies the model: for East 
one requires that
  the right nearest neighbour of $i$  is empty; for FA-1f model  one requires that at least one among the right and left nearest neighbours of $i$ is empty. In formulas the constraint at $i$ is satisfied for East and FA-1f in the configuration $\eta$ iff $\eta_{i+1}=0$ and $\eta_{i+1} \, \eta_{i-1}=0$, respectively. Note that in both models the constraint is local (it depends on the configuration on a finite neighborhood of the to-be-updated site) and does not depend on the value of the configuration on the to-be-updated site. Both models belongs to the larger class of Kinetically Constrained Models (KCM in short), which have been introduced and widely studied in physics literature (see for reviews 
  \cite{RS,GST}). 

In order for the above description to be complete, we need to specify what happens at sites near the boundary.
A standard choice in statistical mechanics is to defined the dynamics in finite volume by imposing a fixed boundary condition.
Here the choice of this boundary condition is very delicate, indeed  due to the presence of the constraints both models are very sensitive to the specific choice of these conditions even on large volumes. For example for the East model it is easy to verify that if we fix a boundary condition equal to one at $N+1$, we start the evolution from $\eta \in \Omega_N$ and we let $x$ be the position of the rightmost zero of $\eta$, then at any subsequent time site  $x$ stays empty and sites $[x+1,N]$ stay filled. In this case we say that the configuration is {\sl frozen} on $[x,N]$ meaning that  under the  evolution, the configuration on these sites remains unchanged. 
On the other hand, for a boundary condition equal to zero at $N+1$, it can be easily verified that there is no site on which the configuration is frozen no matter which is the choice of the initial configuration. From the above observation it follows that in the case of filled boundary condition  the configuration space (even on finite volume) is not irreducible. Indeed there exists configurations $\sigma,\eta\in \Omega_N$ such that it is not possible to devise a path of elementary moves with strictly positive rates 
which connects $\sigma$ to $\eta$. Instead if we take an empty boundary condition the configuration space is irreducible. This can be easily verified by constructing  a path which completely empties any configuration starting from the right boundary.   Analogously for FA-1f model if one imposes filled boundary conditions both at $0$ and $N+1$ the   configuration space is reducible. On the other hand
any of the choices which has at least one empty site in the couple  $(0,N+1)$ is sufficient to guarantee irreducibility.
Here for both models we will only be interested on choices of the boundary conditions which guarantee irreducibility (and therefore ergodicity as we consider finite systems).

Note also that for both models, no matter which choice we perform for the boundary condition, the dynamics satisfy detailed balance with respect to Bernoulli product measure $\nu$ at  density $\rho$, namely $\nu(\eta):=\prod_{i\in\Lambda_N}\nu_i(\eta_i)$ with $\nu_i(1)=\rho$
(this is a direct consequence of the above observed fact that the constraint at $i$ does not depend on the value of $\eta_i$).
 Therefore $\nu$ is  an invariant measure for the process and
 in the irreducible case this is the unique invariant measure.

\smallskip

Let us now give a formal definition of these processes
via the action of their generator $\cL_N$ on local functions $f:\Omega_N\to\bbR$. 
We introduce
\begin{eqnarray}
\label{gene}
\cL_N f (\eta) = \sum_{i\in \gL_N}
c_i(\eta) 
  \big( f(\eta^i) - f (\eta) \big)
\end{eqnarray}
where $\eta^i$ stands for the configuration $\eta$ changed at $i$, namely
\begin{equation}
  \eta^i_j=
  \begin{cases}
 \eta_j & \text{ if }\, j\neq i\\
1-\eta_j & \text{ if }\, j=i
  \end{cases}
\label{flip}
\end{equation}
and we let
\begin{equation}\label{ci}
c_i(\eta):=r_i (\eta)[\eta_i(1-\rho)+(1-\eta_i)\rho]
\end{equation}
with $r_i$ the function that encodes the constraint at site $i$, namely $r_i(\eta)=1$ ($r_i(\eta)=0)$ iff the constraint at $i$ is (is not) satisfied. Thus $r_i$ is model dependent and for the East model with frozen empty boundary condition at the right boundary we set
\begin{equation}\label{ceast}
r_i(\eta):=(1-\eta_{i+1})\,\,\,\,\,\,{\mbox{ if }} i\in[1,N-1];\,\,\,\,\,\,\,\,\,r_N=1
\end{equation}
for FA-1f with empty boundary condition at the right and left boundary we set
\begin{equation}\label{cFA-1ftwo}
r_i(\eta):=(1-\eta_{i+1}\eta_{i-1})\,\,\,\,\,\,{\mbox{ if }} i\in[2,N-1];\,\,\,\,\,\,\,\,\,r_1=1,\,\,\,r_N=1
\end{equation}
for FA-1f with empty boundary condition at the right boundary and occupied boundary condition at the left boundary we set
\begin{equation}\label{cFA-1fone}
r_i(\eta):=(1-\eta_{i+1}\eta_{i-1})\,\,\,\,\,\,{\mbox{ if }} i\in[2,N-1];\,\,\,\,\,\,\,\,\,r_1=(1-\eta_2),\,\,\,r_N=1
\end{equation}
and finally for
FA-1f with empty boundary condition at the left boundary and occupied boundary condition at the right boundary we set
\begin{equation}r_i(\eta):=(1-\eta_{i+1}\eta_{i-1})\,\,\,\,\,\,{\mbox{ if }} i\in[2,N-1];\,\,\,\,\,\,\,\,\,r_1=1,\,\,\,r_N=(1-\eta_{N-1}).
\label{cfas}
\end{equation}
As already mentioned our analysis will focus on the above choices which are the only ones which guarantee ergodicity.  Therefore in the following when we refer to the East model, to FA-1f with two empty boundaries and to FA-1f with one empty boundary  we mean respectively the choice
\eqref{ceast}, \eqref{cFA-1ftwo} and  \eqref{cFA-1fone} (by 
symmetry reasons the choices \eqref{cFA-1fone} and \eqref{cfas} for FA-1f are equivalent therefore we never consider the case
\eqref{cfas}). Also, when we state results referred to FA-1f model without further
specifying the boundary conditions it means that these results hold for both the choices  \eqref{cFA-1ftwo}
and \eqref{cFA-1fone}.

%
Note that the generator \eqref{gene} can be equivalently rewritten as 
\begin{eqnarray}
\cL_N f (\eta) = \sum_{i\in \gL_N}
r_i(\eta)  \big( \nu_{i}(f) - f (\eta) \big)
\end{eqnarray}
with $\nu_i(f)=\int d\nu_i(\eta_i)f(\eta)$ the local mean at site $i$ .

\medskip

The object of interest of our work is
 the total activity \begin{equation}
 \label{defactivity}\cA(t): = \sum_{i\in \gL_N} \cA_i (t)\end{equation}
  where $\cA_i(t)$ is the random variable which corresponds to the number of configuration changes on site $i$ during the time interval $[0,t]$. It is easy to verify that $\cA_i(t)-\int_0^t c_i(\eta(s))ds$ is a martingale and therefore $\cA(t)$ satisfies a law of large numbers with
 $$\lim_{N\to\infty}\lim_{t\to\infty}\frac{\cA(t)}{Nt} =\mathbb A$$ where $\mathbb A$,
which will be referred to in the following as  the mean instantaneous activity, is defined as
 \begin{equation}
 \label{bbA}
 \bbA:=\nu(c_j(\eta)), \qquad j\in \Lambda_{N}\setminus (1,N)
 \end{equation}
note that the definition is well posed since by  translation invariance $\nu(c_j(\eta))=\nu(c_{j'}(\eta))$ for $j,j'\in[2,N-1]$. 
For East, we get  $\mathbb A=2\rho(1-\rho)^2$, for FA-1f instead $\mathbb A=2\rho(1-\rho)(1-\rho^2)$.

Here we will be interested in the study  of the  generating function which controls the
 fluctuations of $\cA(t)$ for a given $N$
\begin{equation}
\label{eq:free}
\gp^{(N)}(\gl) = \lim_{t\to\infty} \frac{1}{t} \log \bra \exp( \gl \cA(t) ) \ket
\end{equation}
where here and in the following $\bra \ket$ denotes the mean  over the evolution of the process and over   the initial configuration which is distributed with the equilibrium Bernoulli measure $\nu$ at density $\rho$ (where the density $\rho$ is fixed by the rates \eqref{ci}).
With a slight abuse of notation for any event $\cE$ we will also denote by $\bra \cE\ket$ the probability of $\cE$, namely we set $\bra\cE\ket:=\bra\mathds{1}_{\cE}\ket$. 
The main result of this paper is   that in the scaling $\gl = \alpha/N$  a phase transition occurs for this generating function. More precisely
if we define
\begin{equation}
\label{rescaled}
\gp(\ga): = \limsup_{N\to\infty} \gp^{(N)}\left(\frac{\ga}{N}\right)
\end{equation} 
then the following holds:
\begin{Theorem}
\label{teo:phasetrans}
Consider East or FA-1f model in $d=1$ at any $\rho\in(0,1)$. There exists $\alpha_1<\alpha_0<0$ and a constant $\gS> 0$ such that
\begin{itemize}
\item[(i)] for $\alpha>\alpha_0$ it holds $\gp\left(\ga\right) =  \bbA \ga$;
\item[(ii)] for $\alpha<\alpha_1$ it holds $\gp(\ga)= - \gS$.\\
\end{itemize}
\end{Theorem}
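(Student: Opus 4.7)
The plan is to establish matching upper and lower bounds on $\gp^{(N)}(\alpha/N)$ in each regime and then let $N\to\infty$. By standard Feynman--Kac, $\gp^{(N)}(\lambda)$ is the principal eigenvalue of the self-adjoint (in $L^2(\nu)$) tilted generator $\cL_\lambda = e^\lambda \cL_N + (e^\lambda-1)V$ with $V(\eta) := \sum_{i\in\Lambda_N}c_i(\eta)$, which yields the variational formula
\[
\gp^{(N)}(\lambda) \;=\; \sup_{f:\,\nu(f^2)=1} \bigl\{ -e^\lambda\,\cE(f,f) + (e^\lambda-1)\,\nu(Vf^2) \bigr\},
\]
where $\cE$ is the Dirichlet form of $\cL_N$.

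Two natural trial functions give the two lower bounds. Setting $f\equiv 1$ and using $\nu(V) = N\bbA+O(1)$ gives $\gp^{(N)}(\lambda)\geq (e^\lambda-1)\nu(V)$, hence $\gp(\alpha)\geq\alpha\bbA$ at $\lambda=\alpha/N$. For the inactive trial, let $\eta^*$ denote the fully occupied configuration; the constraint structure forces $c_i(\eta^*)=0$ except at a small set of boundary sites whose rates sum to $V(\eta^*) =: \Sigma>0$ ($\Sigma = 1-\rho$ for East and for FA-1f with one empty boundary, $\Sigma = 2(1-\rho)$ for FA-1f with two empty boundaries). Either lower-bounding $\langle e^{\lambda\cA(t)}\rangle \geq \rho^N e^{-\Sigma t}$ by restricting to the no-flip event starting from $\eta^*$, or computing directly the Rayleigh quotient of $f = \mathds{1}_{\eta = \eta^*}/\sqrt{\nu(\eta^*)}$ (which a short calculation using detailed balance shows equals $-\Sigma$ exactly, independently of $\lambda$), gives $\gp(\alpha)\geq -\Sigma$ for every $\alpha$.

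For the upper bound in (i), the plan is to exploit the uniform-in-$N$ positive spectral gap $\gamma>0$ of $\cL_N$ for East and FA-1f established in \cite{CMRT}. Together with $\mathrm{Var}_\nu(V)=O(N)$, Rayleigh--Schr\"odinger perturbation of the top eigenvalue of $\cL_\lambda$ around $\lambda=0$ gives $\gp^{(N)}(\lambda) = (e^\lambda-1)\nu(V) + O(\lambda^2 N/\gamma)$, valid as long as the perturbative parameter $|\lambda|\,\|V\|_\infty = O(|\lambda|N)$ stays below a fixed multiple of $\gamma$. At $\lambda=\alpha/N$ this reads $\gp^{(N)}(\alpha/N) = \alpha\bbA + O(1/N)$ for all $|\alpha|$ below a threshold $|\alpha_0|$ determined by $\gamma$ and the variance constant, which identifies the active phase.

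The upper bound in (ii) is the main obstacle. One must show that for $\alpha$ sufficiently negative the top eigenvector of $\cL_\lambda$ is effectively localized near $\eta^*$ and the eigenvalue is within $o(1)$ of $-\Sigma$. The key structural input is that for East and FA-1f one has a uniform gap $V(\eta) \geq \Sigma + c_1$ for all $\eta\neq\eta^*$, a direct consequence of the constraint definitions (any $\eta\neq\eta^*$ creates extra unfrozen sites beyond the boundary contribution). The plan is to split $L^2(\nu) = \mathbb{R}\cdot\mathds{1}_{\eta^*}\oplus W$, to obtain the Rayleigh quotient $-\Sigma$ exactly on the first direction, a bound of order $(e^\lambda-1)(\Sigma+c_1) = -O(1/N)$ on $W$, and to control the off-diagonal coupling, whose size is governed by $\sqrt{\nu(\eta^*)} = \rho^{N/2}\to 0$. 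Optimizing in $\alpha$ then yields $\gp^{(N)}(\alpha/N) \leq -\Sigma + o(1)$ for $\alpha<\alpha_1$ chosen below the balance threshold between the active and inactive estimates. The transition interval $[\alpha_1,\alpha_0]$ corresponds to the region where neither bound dominates.
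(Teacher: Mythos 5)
The proposal goes wrong at a structural level in part (ii), and the error begins already in the claimed lower bound. You identify the constant $\gS$ with $V(\eta^*)=\sum_i c_i(\eta^*)$, which equals $1-\gr$ for East and for FA-1f with one empty boundary, and $2(1-\gr)$ for FA-1f with two. Your computation of the Rayleigh quotient of the indicator trial $f=\mathds{1}_{\eta=\eta^*}/\sqrt{\nu(\eta^*)}$ giving exactly $-V(\eta^*)$ is correct; but $V(\eta^*)$ is \emph{not} the constant $\gS$ of the theorem. The paper defines $\gS$ as the interface energy $\gS=\lim_{L\to\infty}\lim_{L'\to\infty}\inf_{f\in\cC_{L,L'}}\cD_{L+L'+2}(\sqrt f)$, an infimum over densities which force occupation on two interior sites only and leave an arbitrarily large boundary region free; by Lemma \ref{monotonicity} this infimum is nonincreasing, and as stated explicitly in Section \ref{heuristics}, $-\gS>-(1-\gr)$ strictly. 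The reason is exactly the mechanism depicted in Figure \ref{fig: interface}: the optimal strategy for suppressing the activity is \emph{not} to keep the entire system frozen, but to leave an $o(N)$ active region near the empty boundary with a fluctuating interface, which lowers the probabilistic cost below $e^{-(1-\gr)t}$. Your indicator trial only sees the crude fully-frozen strategy, and so it proves $\gp(\ga)\geq -V(\eta^*)$, which is strictly weaker than $\gp(\ga)\geq -\gS$. The paper's Lemma \ref{lowereasy} instead uses the event that the \emph{interior} sites $[K+1,N-K]$ are occupied throughout, and takes $K\to\infty$ after $N\to\infty$ to recover $\gS_{K,K}\downarrow\gS$; this limiting procedure is essential and absent from your plan.

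Consequently, your strategy for the upper bound in (ii) cannot work: you aim to show $\gp^{(N)}(\ga/N)\leq -V(\eta^*)+o(1)$ by splitting $L^2(\nu)$ into $\bbR\cdot\mathds{1}_{\eta^*}\oplus W$ and treating the coupling perturbatively. But $-V(\eta^*)<-\gS$, so this target upper bound is simply false (it would contradict the correct lower bound). The top tilted eigenvector is not localized on the single configuration $\eta^*$; it is spread over the exponentially many configurations which are fully occupied in the bulk but arbitrary in an $O(1)$ (eventually $K\to\infty$) window next to the empty boundary, and the rank-one decomposition cannot capture this. The paper instead controls the upper bound via the coarse-grained activity-density labels (Section \ref{zeroblocks}) and the approximation Lemma \ref{prop: interface energy approx}, which shows that the event "some block is almost always fully occupied" has Donsker--Varadhan cost close to $\gS_{L,L'}$, not close to $V(\eta^*)$. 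On the lower bound estimate $V(\eta)\geq \gS+c_1$ for all $\eta\neq\eta^*$ that your argument relies on: even replacing $\gS$ by $V(\eta^*)$, a uniform gap fails, since one can flip a single site and change $V$ by an arbitrarily small amount in $\gr$; and in any case a pointwise lower bound on $V$ does not control the Dirichlet form of the complement $W$.

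On part (i) the issue is less fatal but the perturbative argument needs more justification than offered. With $\gl=\ga/N$ one has $|\gl|\,\|V\|_\infty=O(|\ga|)$, which is order one, not small; so the standard radius of convergence of Rayleigh--Schr\"odinger perturbation theory is not obviously reached. The uniform spectral gap does let one bound the quadratic correction by $C\gl^2 N$, but proving that the remainder is truly $O(1/N)$ uniformly for $\ga$ up to some fixed negative threshold requires an argument (the paper handles it via a decomposition into coarse-grained events using Lemma \ref{prop: bad blocksnew} and Lemma \ref{lem: variance 0}, which separate the "near-equilibrium" contribution from the "frozen-block" contribution and bound each). If you pursue the perturbative route, the key technical point to supply is an a priori bound showing that, for $|\ga|<|\ga_0|$, the maximizer in the variational formula \eqref{eq: exact} has $O(1/N)$ variance and its density of fully-frozen blocks is exponentially small; without that, the second-order expansion has no a priori control.
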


As a consequence of this theorem, estimates on the large deviations for a reduced activity can be obtained.

\begin{Theorem}
\label{fluctu1}
Consider East or FA-1f model in $d=1$ at any $\rho\in(0,1)$.  
For any  $u \in[0,1)$ it holds
\begin{eqnarray}
\label{eqfluc}
-\gS (1-u)
&\leq&  \liminf_{N\to\infty} \lim_{t\to\infty}\frac{1}{t}
\log \left \bra \frac{\cA(t)}{Nt} \simeq u \bbA \right \ket  \\
&\leq&  \limsup_{N\to\infty} \lim_{t\to\infty}\frac{1}{t}
\log \left \bra \frac{\cA(t)}{Nt} \simeq u \bbA \right \ket 
\leq \alpha_0 \bbA (1-u) \, . \nonumber
\end{eqnarray}

\end{Theorem}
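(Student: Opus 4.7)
The plan is to prove the two inequalities separately, using Theorem~\ref{teo:phasetrans} as a black box together with Chernoff's inequality for the upper bound and a time--splitting construction based on the Markov property for the lower bound.

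For the upper bound, for any $\alpha\in(\alpha_0,0)$ and $\delta>0$, Markov's inequality applied to $e^{\alpha\cA(t)/N}$ (noting $\alpha<0$ reverses the direction) yields
\begin{equation*}
\bk{\cA(t)/(Nt) \leq (u+\delta)\bbA} \leq \exp\bigl(-\alpha(u+\delta)\bbA\, t\bigr)\,\bk{e^{\alpha \cA(t)/N}}.
\end{equation*}
Taking $\tfrac{1}{t}\log$, letting $t\to\infty$ at fixed $N$, then passing to $\limsup_{N\to\infty}$, and invoking Theorem~\ref{teo:phasetrans}(i) (so $\gp(\alpha)=\alpha\bbA$ for $\alpha>\alpha_0$) produces the bound $\bbA\alpha(1-u-\delta)$. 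Since $\bbA(1-u)>0$, minimizing over $\alpha\downarrow\alpha_0$ and $\delta\downarrow 0$ yields the claimed upper bound $\alpha_0\bbA(1-u)$.

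For the lower bound, set $s:=(1-u)t$ and split $[0,t]=[0,s]\cup[s,t]$, requiring the trajectory to be nearly inactive on the first piece and typical on the second. For any $\epsilon>0$ the Markov property gives
\begin{equation*}
\bk{\cA(t)/(Nt) \simeq u\bbA} \;\geq\; \bk{\cA(s)\leq \epsilon N t}\,\cdot\,\inf_{\omega\in\gO_N} \Prob_\omega\!\Bigl(\bigl|\cA(ut)/(Nt) - u\bbA\bigr|<\epsilon\Bigr).
\end{equation*}
The second factor tends to one as $t\to\infty$ (for $N$ large) by the law of large numbers for the ergodic finite chain on $\gO_N$: the compensator of the martingale $\cA(T)-\int_0^T\sum_i c_i(\eta_r)\,\dd r$ is $TN\bbA_N(1+o(1))$ with $\bbA_N\to\bbA$, and irreducibility makes the convergence uniform in the starting state. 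For the first factor, pick $\alpha<\alpha_1$ with $|\alpha|\epsilon>\gS$; then Theorem~\ref{teo:phasetrans}(ii) gives $\tfrac{1}{t}\log\bk{e^{\alpha\cA(s)/N}}\to -(1-u)\gS$, and splitting this expectation over $\{\cA(s)\leq \epsilon Nt\}$ versus its complement (where the integrand is at most $e^{-|\alpha|\epsilon t}$, negligible by the choice of $\alpha$) together with $e^{\alpha\cA(s)/N}\leq 1$ yields $\bk{\cA(s)\leq \epsilon Nt}\geq \tfrac12 e^{-\gS(1-u)t(1+o(1))}$. Taking $\tfrac{1}{t}\log$, sending $t\to\infty$, $N\to\infty$, and $\epsilon\downarrow 0$ produces the lower bound $-\gS(1-u)$.

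The main obstacle is the coordination of the $N$ and $t$ limits in the lower bound. The uniform--in--$\omega$ LLN for $\cA(T)/(NT)$ is a standard consequence of irreducibility but must be combined with the vanishing boundary correction $|\bbA_N-\bbA|\to 0$ at the right moment. In addition, lower bounding $\bk{e^{\alpha\cA(s)/N}}$ (rather than merely upper bounding it) requires the genuine convergence $\gp^{(N)}(\alpha/N)\to-\gS$ for $\alpha<\alpha_1$, slightly stronger than the $\limsup$ appearing in the definition of $\gp$, which should be extractable from the proof of Theorem~\ref{teo:phasetrans}.
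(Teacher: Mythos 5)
Your upper bound is the same Chernoff argument as the paper's (Markov's inequality applied to $e^{\alpha\cA(t)/N}$ combined with Theorem~\ref{teo:phasetrans}(i)), with the minor refinement of approaching $\alpha_0$ from above and then optimizing, rather than applying the theorem directly at $\alpha_0$.

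The lower bound follows a genuinely different route. The paper places the active window first (on $[0,ut]$), invokes reversibility so that the law at time $ut$ is again $\nu$, and then lower-bounds the probability of the explicit blocking event $\cB = \{\pi_{(1-u)t}(\eta_{\sqrt N +1}\cdots\eta_{N-\sqrt N})=1\}$ by re-running the Donsker--Varadhan estimate already used for the event $\cO$ in the proof of Lemma~\ref{lowereasy}. You instead place the inactive window first and extract its cost from Theorem~\ref{teo:phasetrans}(ii) via a complementary tail split of the moment generating function, $\bk{\cA(s)\leq\epsilon N t}\geq\bk{e^{\alpha\cA(s)/N}}-e^{-|\alpha|\epsilon t}$. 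This is more modular --- the free energy $\gp$ is used as a black box instead of reproducing the variational construction --- but it requires two things the paper's argument does not: the genuine limit $\gp^{(N)}(\alpha/N)\to-\gS$ for $\alpha<\alpha_1$ (not merely the $\limsup$ appearing in \eqref{rescaled}), and a choice of $\alpha$ negative enough that $|\alpha|\epsilon$ exceeds $(1-u)\gS$ so the subtracted term is subdominant. You correctly flag both; the first is in fact available from Lemmas~\ref{lowereasy} and~\ref{upperdifficult} and is acknowledged in the remark following Theorem~\ref{fluctu1}, so there is no actual gap, and the swap of the order of the two windows is immaterial. Both routes prove the theorem; the paper's is shorter because it reuses the machinery already set up for the inactive-regime lower bound.
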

\begin{remark} We conjecture that in \eqref{rescaled} the $\limsup_N$ can be replaced by $\lim_N$.
In the regime $\alpha>\alpha_0$ and $\alpha<\alpha_1$, this follows from the  proof of  Theorem \ref{teo:phasetrans}.
\end{remark} 
Theorem \ref{teo:phasetrans} (i) and (ii) will be proved in Section \ref{linearity} and \ref{largealpha} respectively.
Theorem \ref{fluctu1} will be proven in Section \ref{new}.

\smallskip

We also analyze the measure on the space-time configurations defined as

\begin{eqnarray}
\label{eq: measure}
\mu_{\alpha,T}^{N} =  \frac{\bk{\cdot \; \exp( \frac{\ga}{N} \cA(T) )}}{\bk{\exp( \frac{\ga}{N} \cA(T) )}} 
\end{eqnarray}
and Theorem \ref{teo:condmes} states that depending on the value of $\alpha$ this measure 
 has very different typical configurations.
 For any configuration $\eta\in\Omega_{N}$, we call $\{\eta(s)\}_{s\geq 0}$ the trajectory of the Markov process generated by $\cL_N$  starting at time zero from $\eta$. Then
 the following holds 

\begin{Theorem}
\label{teo:condmes}
Consider East model and FA-1f model in $d=1$ with $\rho\in(0,1)$. Then there exists $\alpha_1<\alpha_0<0$  and a sequence $\gamma_N$ with  $\lim_{N\to\infty} \gamma_N = 0$ such that 
\begin{itemize}
\item[(i)] for $\alpha>\alpha_0$ 
\begin{eqnarray}
\label{eq: densite haute tris}
\lim_{N\to\infty}\lim_{T \to \infty}
\mu_{\alpha,T}^{N}\left({\int_0^T dt \,  |\sum_{i \in \gL_N } \,\eta_i(t)-N\rho| \leq \gamma_N N T}\right) = 1 \, ;
\end{eqnarray}
\begin{eqnarray}
\label{eq: densite haute tqua}
\lim_{N\to\infty}\lim_{T \to \infty}
\mu_{\alpha,T}^{N}\left({\int_0^T dt \,  |\sum_{i \in \gL_N } \,c_i(\eta(t))-N\bbA| \leq \gamma_N N T}\right) = 1 \, .
\end{eqnarray}
\item[(ii)]
if $\alpha<\alpha_1$ 
\begin{eqnarray}
\label{eq: densite haute tris}
\lim_{N\to\infty}\lim_{T \to \infty}
\mu_{\alpha,T}^{N}\left({\int_0^T dt \,  \sum_{i \in \gL_N } \,\eta_i(t) \geq (1-\gamma_N) N T} \right)= 1 \, ;
\end{eqnarray}
\begin{eqnarray}
\label{eq: densite haute tris}
\lim_{N\to\infty}\lim_{T \to \infty}
\mu_{\alpha,T}^{N}\left({\int_0^T dt \,  \sum_{i \in \gL_N } \,c_i(\eta(t)) \leq \gamma_N N T}\right) = 1 \, .
\end{eqnarray}
\end{itemize}
\end{Theorem}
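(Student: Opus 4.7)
The strategy is an exponential Chebyshev inequality for the tilted measure $\mu^N_{\alpha,T}$, exploiting the fact that Theorem \ref{teo:phasetrans} identifies $\varphi$ on the full \emph{open} intervals $(\alpha_0,\infty)$ and $(-\infty,\alpha_1)$, so that the perturbations required by Chebyshev remain in the same regime as $\alpha$.

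\emph{Part (i), $\alpha>\alpha_0$.} Fix $\epsilon>0$ small enough that $\alpha\pm\epsilon>\alpha_0$. Writing
\[
\mu^{N}_{\alpha,T}\!\left[\exp\!\left(\tfrac{\pm\epsilon}{N}\cA(T)\right)\right]
= \frac{\bra\exp\!\left(\tfrac{\alpha\pm\epsilon}{N}\cA(T)\right)\ket}{\bra\exp\!\left(\tfrac{\alpha}{N}\cA(T)\right)\ket},
\]
Theorem \ref{teo:phasetrans}(i) yields, in the iterated limit $T\to\infty$ then $N\to\infty$, that $\tfrac{1}{T}\log$ of the left-hand side converges to $\pm\epsilon\,\bbA$. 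Exponential Chebyshev then shows that $\mu^N_{\alpha,T}(|\cA(T)/(NT)-\bbA|>\gamma)\to 0$ for every fixed $\gamma>0$; using that $\cA_i(t)-\int_0^t c_i(\eta(s))\,ds$ is a martingale whose angle bracket grows like $O(T)$, this transfers to concentration of $\tfrac{1}{NT}\int_0^T\sum_i c_i(\eta(s))\,ds$ around $\bbA$, giving \eqref{eq: densite haute tqua}. For the density deviation event $E$ in \eqref{eq: densite haute tris} I combine H\"older's inequality,
\[
\bra\mathds{1}_E\exp(\tfrac{\alpha}{N}\cA(T))\ket \leq \bra\mathds{1}_E\ket^{1/q}\,\bra\exp(\tfrac{p\alpha}{N}\cA(T))\ket^{1/p}, \qquad 1/p+1/q=1,
\]
with the Donsker--Varadhan LDP bound $\bra\mathds{1}_E\ket\leq e^{-NTI(\gamma)}$ for additive functionals under the equilibrium measure $\nu$. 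Choosing $p>1$ small enough that $p\alpha>\alpha_0$, Theorem \ref{teo:phasetrans}(i) controls the second factor by $e^{Tp\alpha\bbA(1+o(1))}$, and dividing by the denominator $\sim e^{T\alpha\bbA}$ yields $\mu^N_{\alpha,T}(E)\leq e^{-NTI(\gamma)/q+o(T)}\to 0$.

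\emph{Part (ii), $\alpha<\alpha_1$.} The analogous Chebyshev argument with Theorem \ref{teo:phasetrans}(ii) gives $\tfrac{1}{T}\log\mu^{N}_{\alpha,T}[\exp(\pm\tfrac{\epsilon}{N}\cA(T))]\to 0$, since now $\varphi$ is \emph{constant} on $(-\infty,\alpha_1)$; hence $\mu^N_{\alpha,T}(\cA(T)/(NT)>\gamma)\to 0$ for every fixed $\gamma>0$, which is the activity statement. For the density, I use the pointwise bound
\[
\sum_i c_i(\eta) \geq c_0\sum_i(1-\eta_i)-O(1), \qquad c_0:=\min(\rho,1-\rho),
\]
valid for both East and FA-1f up to boundary corrections: upon time-integration this implies that $\int_0^T\sum_i\eta_i(s)\,ds<(1-\gamma)NT$ forces $\int_0^T\sum_i c_i(\eta(s))\,ds\geq c_0\gamma NT-O(T)$, and hence $\cA(T)\gtrsim c_0\gamma NT$ with high probability by martingale concentration, contradicting the activity bound just proved. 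In both parts, a standard diagonal extraction turns the fixed-$\gamma$ statements into a universal vanishing sequence $\gamma_N\to 0$.

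\emph{Main obstacle.} The substantive input is entirely packaged into Theorem \ref{teo:phasetrans}: once $\varphi$ is identified on \emph{open} intervals (and the $\limsup$ in \eqref{rescaled} is a genuine $\lim$, as noted in the remark following that theorem), the concentration statement follows from soft convexity/Chebyshev arguments. The only delicate bookkeeping concerns the $O(1)$ boundary corrections stemming from the special rates \eqref{ceast}--\eqref{cfas} at the endpoints of $\Lambda_N$, which must be shown to remain subleading both in the pointwise lower bound on $\sum_i c_i$ and in the martingale concentration estimate relating $\cA(T)$ to its compensator.
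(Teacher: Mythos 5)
Your proof has a genuine gap in part (i). To establish the density concentration \eqref{eq: densite haute tris} you invoke the bound $\bra\mathds{1}_E\ket\leq e^{-NTI(\gamma)}$ with $I(\gamma)>0$ independent of $N$, where $E$ is the event that the time-averaged density deviates from $\gr$ by at least $\gamma$. This is the LDP scaling for \emph{unconstrained} Glauber dynamics, but it fails for East and FA-1f: the all-filled configuration can only escape through the unconstrained boundary site(s), so for East one has $\bra\mathds{1}_E\ket\geq\gr^N e^{-(1-\gr)T}$ whenever $\gamma<1-\gr$, and similarly for FA-1f with rate $2(1-\gr)$. The Donsker--Varadhan rate per unit time for the empirical density is therefore at most $2(1-\gr)$, not of order $N$, and this sub-extensive cost of density deviations through blocking is precisely the mechanism the whole paper is built around. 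With the correct $O(1)$ rate, the H\"older step leaves an exponent of the form $-R_N/q+O(\gd)+O(N^{-1/2})$ with $R_N\leq 2(1-\gr)$, and there is no room to absorb the error terms coming from Proposition \ref{altra} uniformly in $\gr$. The paper instead goes through the coarse-graining of Section \ref{zeroblocks}: the activity-density labels $u^i_{K,\gep}$ separate ``bad'' blocks (label $0$), whose probability is suppressed at rate $\propto N/K$ by Lemma \ref{prop: bad blocksnew}, from ``blocked'' blocks (label $1$), suppressed only at rate $\propto 1$ by Lemma \ref{lem: variance 0}, and then rules out a positive fraction of blocked blocks under $\mu^N_{\ga,T}$ for $\ga>\ga_0$ by comparing their activity cost to their probabilistic gain. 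That is the content of Lemma \ref{usolemma}, from which Theorem \ref{teo:condmes}(i) is an immediate corollary.

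A secondary issue, present in both parts: you pass from concentration of $\cA(T)$ to concentration of $\int_0^T\cH(\eta(s))\,ds$ by citing martingale concentration for $\cA_i(t)-\int_0^t c_i\,ds$. That process is a martingale under $\Prob_t$, and (with an extra factor $e^\gl$) under the rescaled dynamics $\Prob_t^\gl$, but not under the tilted path measure $\mu^N_{\ga,T}$, which is a Feynman--Kac weighting; this step therefore needs justification. The clean device, used throughout the paper, is the Girsanov identity \eqref{cm}, which converts expectations involving $\exp(\gl\cA(T))$ into expectations of $\exp\bigl((e^\gl-1)\int_0^T\cH\bigr)$ under $\bk{\cdot}_\gl$, so that Donsker--Varadhan applies directly to the additive functional $\int_0^T\cH$. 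Your part (ii) --- which correctly uses that $\gp$ is constant on $(-\infty,\ga_1)$ and correctly deduces the pointwise inequality $\sum_i(1-\eta_i)\leq c_0^{-1}\sum_i c_i+O(1)$ --- is otherwise on the right track, and is essentially what Lemma \ref{usolemma2} proves, but at the block level and with the Girsanov step made precise.
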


Theorem \ref{teo:condmes}(i) and (ii) will be proven in Section \ref{condmes1} and \ref{condmes2} respectively where
 stronger results (Lemma \ref{usolemma} and \ref{usolemma2}) concerning the concentration of the number of particles and the activity on mesoscopic boxes (and not on the whole volume) will also be established.

\subsection{Heuristics of the phase transition and open problems}
\label{heuristics}

As we already mentioned in the introduction, the occurrence of a phase transition for the activity large deviations was first discovered in \cite{GJLPDW2}, where it was shown that 
\begin{eqnarray}
\label{eq: psi}
\gl \in \bbR, \qquad
\psi(\gl) = \lim_{N \to \infty} \lim_{t \to \infty} \; 
\frac{1}{N t} \log \left \bra \exp \big( \gl \cA(t) \big) \right \ket  
\end{eqnarray}
has a critical value at $\gl =0 $ (see the left part of figure \ref{fig: courbes}). 
We recall below the mechanism of this phase transition in the case of the one-dimensional East model (the case of FA-1f model is analogous). 
When $\gl >0$, the activity is increased and the large deviation functional is expected to be smooth.
Negative values of $\gl$ lead to a decay of the activity and the constraint will play a crucial role. To have no activity, a possible strategy is to start at time $t = 0$ from a configuration totally filled ($\eta_i = 1$ for all $i$) and to remain in this configuration at any time. 
This can be achieved by preventing $\eta_N$  from flipping to 0, indeed the site $N$ is the only site allowed to flip due to the constraints and if it is maintained equal to $1$ then the rest of the configuration is blocked. This leads to the lower bound 
\begin{eqnarray}
\label{eq: crude LB}
\left \bra  \cA(t)= 0 \right \ket \geq \gr^N \exp( -(1-\gr) t) \, ,
\end{eqnarray}
where $\gr^N$ stands for the cost of the initial configuration and the last term is the probability that a Poisson process of intensity $1-\gr$ has no jump up to time $t$.
After rescaling \eqref{eq: psi}, this shows that $\psi(\gl) \geq  0$ for $\gl <0$ and since $\psi(0)=0$ and $\psi$ is increasing in $\lambda$ it follows that $\psi(\lambda)=0$ for $\lambda\leq 0$. On the other hand by convexity $\psi(\lambda)\geq \lambda\mathbb A$ and therefore at $\lambda=0$ the first order derivative of $\psi$ has a jump. 
As noted in \cite{GJLPDW1,GJLPDW2}, it is remarkable that the phase transition occurs at $\gl = 0$ which corresponds to the unperturbed dynamics. Thus one may wonder if the singularity of the  large deviation functional would lead to 
specific properties of the constrained systems.

In this paper, we investigate the finite size scaling of this first order phase transition through the 
function $\gp(\ga)$ introduced in \eqref{rescaled}. This refined thermodynamic scaling corresponds to a blow up of the region $\gl = 0$. The results of Theorem \ref{teo:phasetrans} are depicted 
in the right part of figure \ref{fig: courbes}.  In a range of $\gl$ of order $1/N$,
we have shown that the transition is shifted from 0.
\begin{figure}[h]
\begin{center}
\psfragscanon 
\centerline{
\epsfxsize = 5 cm
\psfrag{F}[c][l]{\small $\psi(\gl)$}
\psfrag{l}[c][l]{\small  $\gl$}
\epsfbox{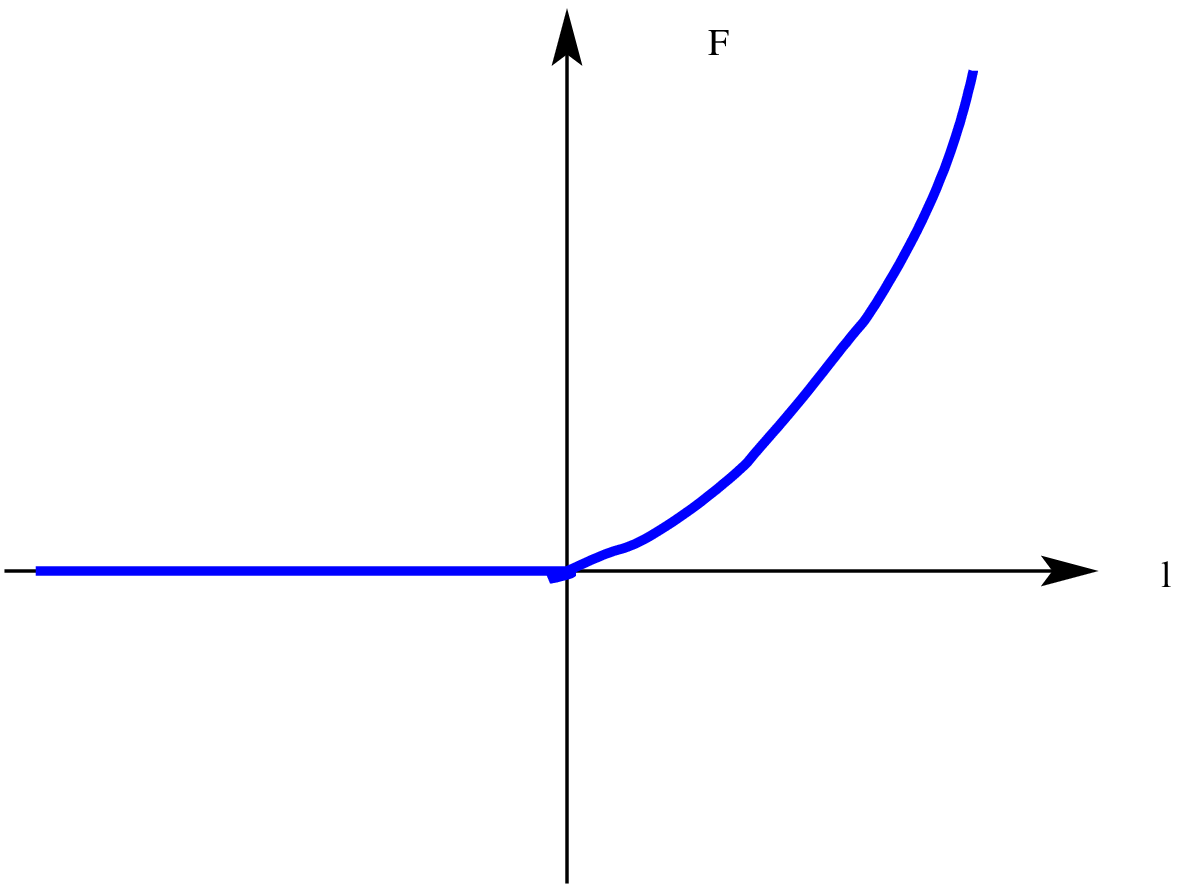} 
\hspace{1.5cm}
\epsfxsize = 6 cm
\psfrag{F}[b][l]{\small $\gp(\ga)$}
\psfrag{l}[c][l]{\small  $\ga$}
\psfrag{a}[c][l]{\small  $\ga_1$}
\psfrag{b}[c][l]{\small  $\ga_0$}
\psfrag{c}[c][l]{}
\psfrag{S}[c][c]{\small  $\ \ - \gS$}
\epsfbox{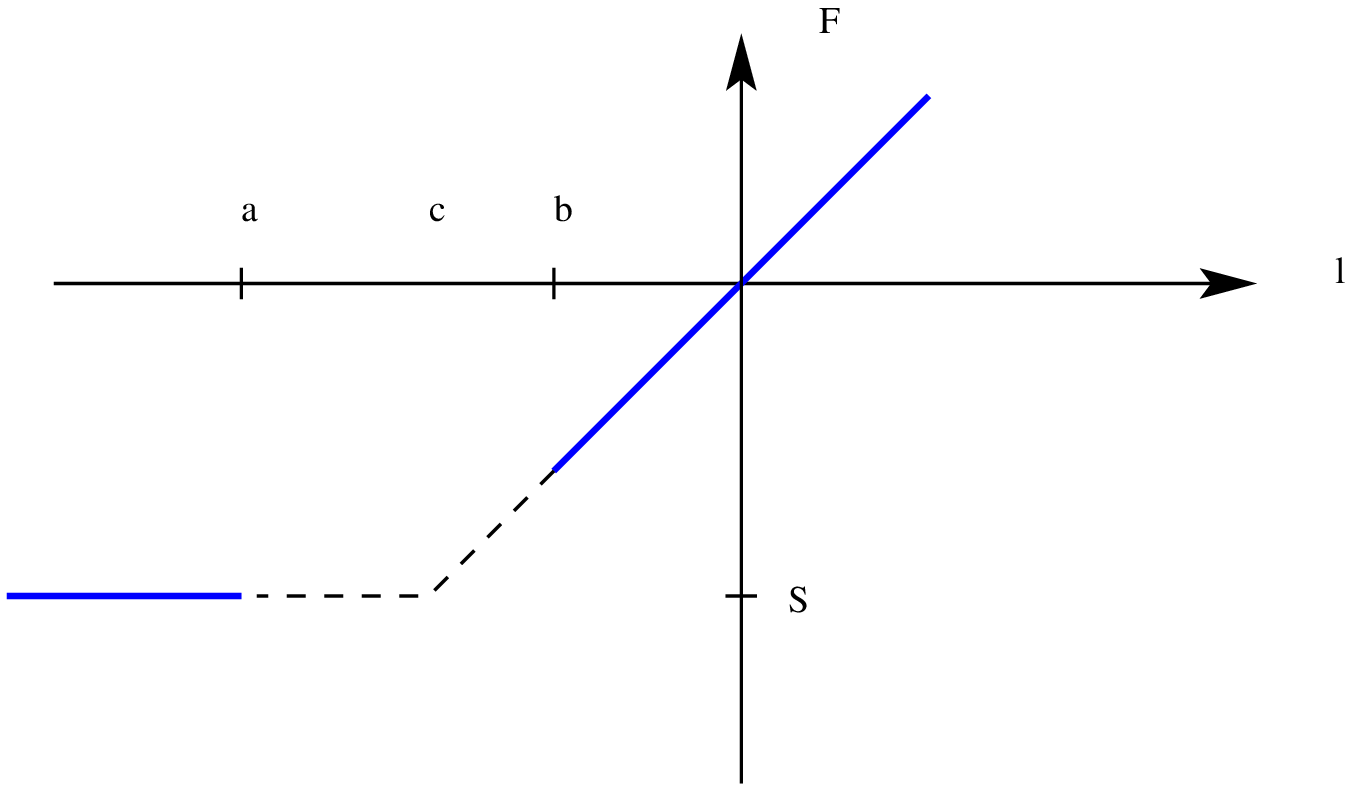} }
\end{center}
\caption{The function $\psi$ \eqref{eq: psi} is depicted on the left. The right figure represents the graph of the function $\gp$ \eqref{rescaled}: the results of Theorem \ref{teo:phasetrans} are in thick line and the conjectured behavior in dashed lines.}
\label{fig: courbes}
\end{figure}
To understand this, we first suppose that no transition takes place at 0 and that $\psi$ could be expanded (analytically). If this was the case then we would expect that for large $N$
\begin{eqnarray*}
\label{eq: expansion}
\gp (\ga) \simeq N \psi ( \frac{\ga}{N}) =   N \left[ \psi (0) + \psi' (0)  \frac{\ga}{N} + {\small O \left( \frac{1}{N^2} \right)} \right] \,
= \bbA \ga+ {\small O \left( \frac{1}{N}\right)} \, ,
\end{eqnarray*}
where we used that $\psi (0) = 0$ and $\psi' (0)$ is equal to the mean activity $\bbA$ 
(in fact $\psi$ is not differentiable at 0, but its right-derivative is equal to $\bbA$).
Part (i) of Theorem \ref{teo:phasetrans} shows that this behavior persists for  negative values of $\ga$ provided $\ga > \ga_0$.
In this regime, we expect that the total activity is shifted from its mean value $\bbA N t$ by an order 
$\ga \,  t \,  \psi''(0)$ which does not scale with $N$. 
For such small shifts of the activity, the system remains very close to its typical state (when $N$ is large). In particular, Theorem \ref{teo:condmes} asserts that the mean density is very concentrated close to its equilibrium value $\gr$.

A phase transition occurs for smaller values of $\ga$ and $\gp$ becomes equal to the constant 
$- \gS > - (1-\gr)$. The estimate \eqref{eq: crude LB} leads to the lower bound $- (1-\gr)$ and thus it is too crude to justify the claimed behavior.
Indeed for $\gl = \frac{\ga}{N}$, an activity of order $o(N)$ will not contribute to the scaling limit
\eqref{rescaled}, thus it is more favorable to leave  a small portion of the system active as depicted in 
figure \ref{fig: interface} instead of forcing the whole configuration to remain totally filled.
\begin{figure}[h]
\begin{center}
\leavevmode
\epsfysize = 4 cm
\psfrag{N}[t][l]{{\tiny  $o(N)$}}
\epsfbox{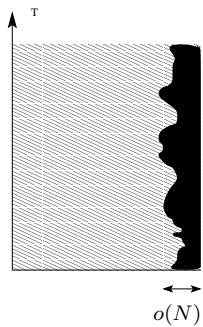}
\end{center}
\caption{A space/time picture of the system is depicted. The left part is  inactive  and separated from the active boundary (at the right) by a fluctuating interface.}
\label{fig: interface}
\end{figure}
By analogy with equilibrium, one can interpret $\gS$ as a surface tension between the inactive and the active region (per unit of time)
\begin{eqnarray*}
\label{eq: refine LB}
\left \bra  \frac{1}{N t} \cA(t) \approx 0 \right \ket  \simeq \exp( - \gS t) \, ,
\end{eqnarray*}
where $\frac{1}{N t} \cA(t) \approx 0$ means that the rescaled activity is close to 0 in the thermodynamic limit. Contrary to the strategy in \eqref{eq: crude LB}, the interface between the inactive and the active region is now allowed to fluctuate and the probabilistic cost is lowered.
The surface tension $\gS$ can be obtained from a variational problem which is specified for FA-1f with two empty boundaries in Section \ref{largealpha} and for East and FA-1f with one empty boundary in Section \ref{largealpha2}. However, our results do not provide a complete description of the system and it remains to prove that the typical configurations look like figure \ref{fig: interface}.
Nevertheless, Theorem \ref{teo:condmes} ensures that in the inactive regime almost all the sites are equal to 1. 
This confirms the conjectured picture.

Our results  (Theorems \ref{teo:phasetrans} and \ref{teo:condmes})  do not provide  the entire  phase diagram for the generating function $\gp$ (only for $\ga \not \in [\ga_1,\ga_0]$).
However, this is enough to deduce (see Theorem \ref{fluctu1}) the correct order of the scaling for the large deviations of the activity below the mean value 
$$
\forall u \in [0,1], \qquad 
-\Sigma(1-u) \leq 
\lim_{N\to\infty} \lim_{t\to\infty}\frac{1}{t}\log \left \bra \frac{\cA(t)}{Nt}\simeq u \bbA \right \ket  < 
\ga_0 \bbA (1-u) \, .
$$
This scaling is anomalous compared to the  extensive scaling in $N$ of the unconstrained models \eqref{eq: LD intro}.


 We conjecture that there is a unique critical value $\ga_c$ and that the two regimes remain valid up to $\ga_c$ as depicted in figure \ref{fig: courbes}, namely $\gp=-\Sigma$ for $\alpha\leq \alpha_c$ and $\gp=\alpha\bbA$ for $\alpha\geq\alpha_c$.
This would imply 
$$
\ga_c = - \frac{\gS}{\bbA} \, .
$$
This conjecture is supported by numerical simulations and we refer to \cite{BLT} for an account on these numerical results. 
If this  conjecture is verified, then Theorem \ref{fluctu1} can be improved and the large deviations for reducing the activity
would be given by
$$
\forall u \in [0,1], \qquad 
\lim_{N\to\infty} \lim_{t\to\infty}\frac{1}{t}\log \left \bra \frac{\cA(t)}{Nt}\simeq u \bbA \right \ket  =-\Sigma(1-u) \, .
$$

\begin{figure}[h]
\begin{center}
\psfragscanon 
\centerline{
\epsfxsize = 5 cm
\psfrag{m}[c][l]{\small $m(h)$}
\psfrag{h}[c][l]{\small  $h$}
\epsfbox{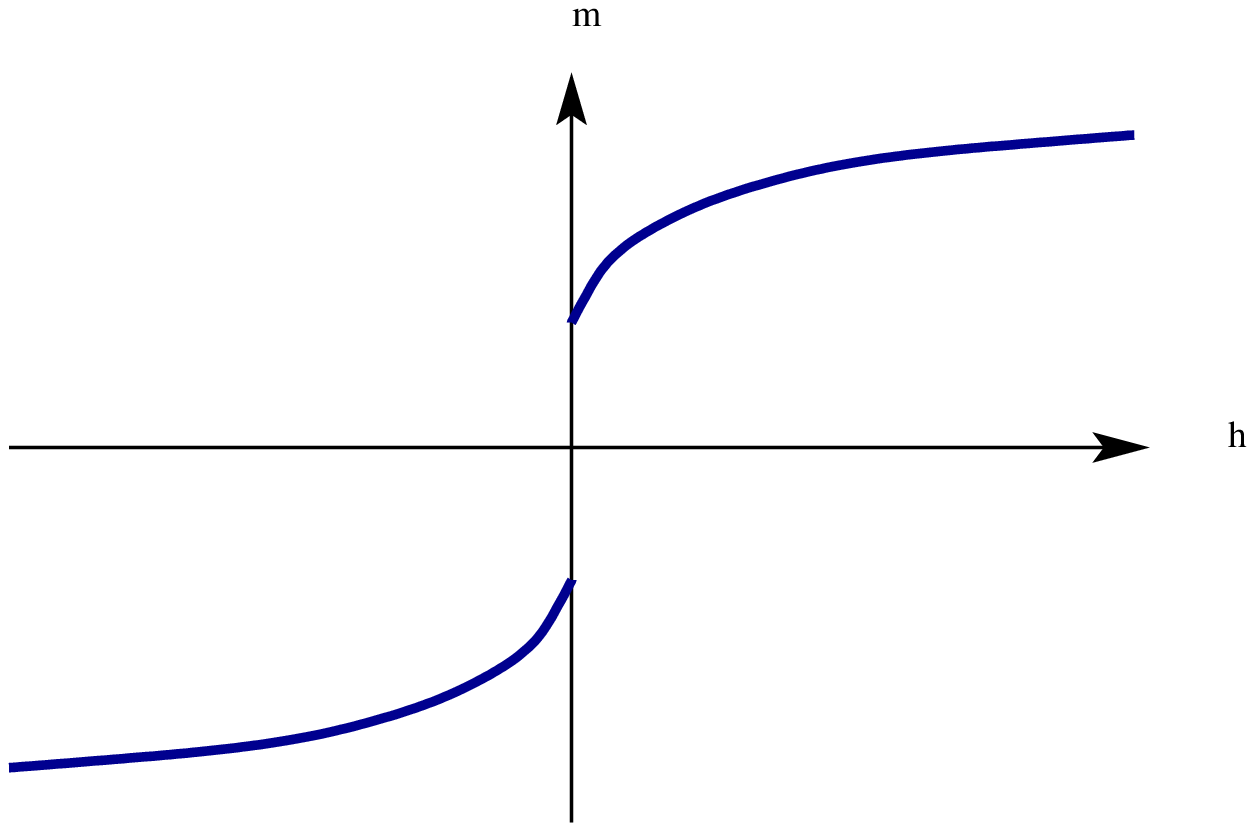} 
\hspace{1.5cm}
\epsfxsize = 6 cm
\psfrag{m}[c][l]{\small $m_N(\ga/N)$}
\psfrag{h}[c][l]{\small  $\ga$}
\psfrag{c}[b][l]{\small  $\ga_c$}
\epsfbox{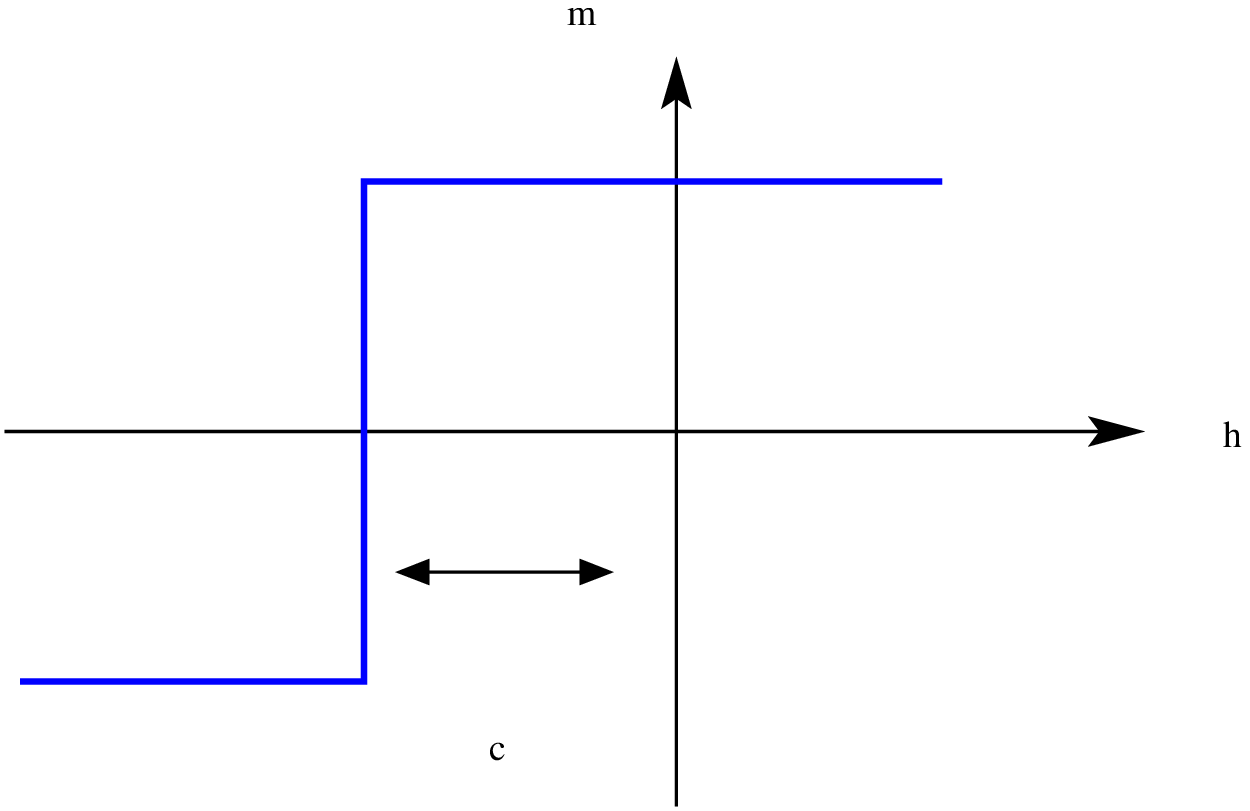} 
}
\end{center}
\caption{On the left, the graph of the magnetization $m(h)$ for the infinite volume Ising model.
On a finite domain of size $N$ (large) and with $+$ boundary conditions, the finite size scaling shows that the discontinuity occurs close to $\frac{\ga_c}{N}$ as depicted on the right.}
\label{fig: courbes ising}
\end{figure}

In  \cite{GJLPDW1,GJLPDW2}, it was suggested that the large deviation approach could provide a natural way to define a dynamical free energy characterizing glassiness.
It is currently an open question to understand if (and how) the dynamical phase transition  in $\psi(\gl)$ \eqref{eq: psi} can lead to quantitative predictions on the model at $\gl = 0$.
Equilibrium statistical mechanics could serve as a guide to clarify this issue. Indeed a similar  phenomenon to the one depicted above for East and FA-1f occurs  in the finite size scaling of the ferromagnetic Ising model.
For an Ising model in the phase transition regime ($T< T_c$), a first order phase transition occurs in the magnetic field $h$ and the magnetization $m(h)$ is discontinuous at $h=0$ (see figure \ref{fig: courbes ising}). On a finite domain, say a square of size $N$, with external boundary conditions $+$, then the magnetization $m_N(h)$ is continuous and approaches the graph 
of $m(h)$. A finite size scaling \cite{SS} shows that up to rescaling $h = \ga /N$, the magnetization $m_N(\frac{\ga}{N})$ converges to a step function with  a jump at a critical value $\ga_c \not = 0$  (see figure \ref{fig: courbes ising}).
The shift of the transition is reminiscent of the shift for the constrained models and it can be understood as follows. For $\ga \in  [\ga_c, 0]$, the magnetization is slightly lowered but remains close to the magnetization $m^* = \lim_{h \to 0^+} m(h)$  imposed by the $+$ boundary conditions, then for $\ga < \ga_c$ the negative magnetic field forces a droplet of the $-$ phase which fills the system.
The creation of this droplet has a cost proportional to a surface order $\tau N$ (where $\tau$ is the surface tension term), but leads to an energy gain $- 2 m^* h N^2$. Thus, the critical value is obtained for
$$
- 2 m^* h N^2 =  \tau N  \quad \Rightarrow 
\quad \ga_c =  - \frac{\tau}{2 m^*} \, .
$$
In this analogy, the magnetization plays a role similar to the activity and $h, \gl$ are the conjugate parameters. Even so a first order phase transition occurs for the Ising model at $h$ equal 0, 
it is known that the $+$ pure phase, i.e. the Gibbs measure obtained from the $+$ boundary conditions after the thermodynamic limit, is well behaved and that the cumulants of the magnetization can be obtained by taking the successive derivative of the pressure for $h \to 0^+$.   
As $\gl$ has no physical meaning (contrary to $h$), it is not clear from the mere knowledge of the first order phase transition how to deduce precise informations on the constrained dynamics at $\gl =0$.

\subsection{East and FA-1f  in $d\geq 2$}

In the previous sections we have considered one dimensional East and FA-1f models. Both models can be    extended  to higher dimensions in a very natural way. Let us set some notation. Let $d>1$, then $\Lambda_{N}^d:=[1,N]^d$ and $\Omega_N^d:=\{0,1\}^{\Lambda_{N}^d}$ and let $\vec e_j$ with $j\in[1,d]$ be the Euclidean basis vectors. The East and FA-1f models in dimension $d$ at density $\rho\in(0,1)$ are Glauber type Markov processes with generator $\cL_N^d$ which acts on $f:\Omega_N^d\to\bbR$ exactly as in \eqref{gene} with the sum over $i$ running now on $i\in
\Lambda_N^d$, with $c_i$ defined again as in \eqref{ci} and $r_i$ defined  for the East model as
\begin{equation}\label{ceastd}
r_i(\eta)=1-\eta_{i+\vec e_1} \qquad {\mbox{~if~}} i\cdot\vec e_1\in [1,N-1];\qquad r_i(\eta)=1 {\mbox{~otherwise~}}
\end{equation} 
and for FA-1f model with completely empty boundary conditions as
\begin{equation}\label{cFA-1fd}
r_i(\eta)=1-\prod_{j=1}^d\eta_{i+\vec e_j} \eta_{i-\vec e_j} \qquad {\mbox{~if~}} i\cdot\vec e_j\in [2,N-1]  \,\,\,\,\,\forall j\in [1,d];\qquad r_i(\eta)=1 {\mbox{~otherwise~}}.
\end{equation}

As in the one dimensional case, the above choice of the boundary condition is the only ergodic choice for the East constraints, while in the FA-1f case
any choice with at least one empty boundary site is ergodic.
We will be interested here in all the choices of the boundary conditions which correspond to requiring a completely empty hyperplane of linear size $N$ and dimension $c$ with $c\in[0,d-1]$ ($c=0$ corresponds to the choice of a single empty boundary site).
In short we will say that we consider a boundary condition of dimension $c$ in this case (note that the only ergodic choice for East corresponds to a particular boundary condition of dimension $d-1$). Note that as in the one dimensional case both dynamics satisfy detailed balance with respect to Bernoulli product measure $\nu$ at density $\rho$, namely $\nu(\eta):=\prod_{i\in\Lambda_N^d}\nu_i(\eta_i)$ with $\nu_i(1)=\rho$.

\medskip

Let $\cA(t)$ be defined as in \eqref{defactivity} where now the sum runs over all sites inside $\Lambda_N^d$. As for the one dimensional case it is immediate to verify that $\cA(t)$ satisfies the law of large number
$$\lim_{t\to\infty}\lim_{N\to\infty}\frac{\cA(t)}{N^d t}=\bbA$$ with $\bbA$ defined as in \eqref{bbA} for $j\in\Lambda_N^d$ with $i\cdot\vec e_j \in[2,N-1]$ for all $j\in [1,d]$. Note that $\bbA$ coincides with the one for the corresponding one dimensional model at the same density.
Let $\gp^{(N)}$ be defined as in \eqref{eq:free}   and let the rescaled generating function $\gp_d(\alpha)$ and the measure $\mu_{\alpha,T}^{N,d}$ be defined as
\begin{equation}
\label{rescaledd}
\gp_d(\alpha):=\limsup_{N\to\infty}\frac{1}{N^{c}}\gp^{(N)}\left(\frac{\alpha}{N^{d-c}}\right)
\end{equation}
\begin{equation}
\label{mesureh}
\mu^{N,d}_{\alpha,T}:=\frac{\bk{\cdot\exp\left(\frac{\alpha}{N^{d-c}}\cA(T)\right)}}{\bk{\exp\left(\frac{\alpha}{N^{d-c}}\cA(T)\right)}}.
\end{equation}
Note that the above definitions if we set $d=1$ and $c=0$  are compatible with the definitions used in Section \ref{models} for the one dimensional case, namely $\gp_1(\alpha)=\gp(\alpha)$ and $\mu^{N,1}_{\alpha,T}=\mu^{N}_{\alpha,T}$ with $\gp(\alpha)$ and $\mu_{\alpha,T}^N$ defined by equations \eqref{rescaled} and \eqref{eq: measure} respectively.
We stress that the finite size scaling depends on the choice of the boundary conditions. We expect this to be the choice which leads to a phase transition for the generating function, as in the one dimensional case.  This conjecture, as will be further clarified by the proof of Theorem \ref{poor}, is related  to the fact that  in order 
to have no activity for a $d$-dimensional model
a possible strategy is to start at time zero from a completely filled configuration and prevent all the $O(N^c)$ sites which are in contact with the boundary empty set from flipping.

\medskip

The $d$-dimensional East model corresponds to $N^{d-1}$ independent one dimensional East models so that $\gp_d(\alpha)=\gp_1(\alpha)$ for any $d$.  Thus the phase transition results of Theorem \ref{teo:phasetrans} hold for $\gp_d$ and 
Theorem \ref{teo:condmes} applies as well in this case.

For FA-1f it is not immediate to generalize the one dimensional results since the model cannot be decoupled into independent one dimensional FA-1f models.
In this case we prove
\begin{Theorem}
\label{linearityd>1}
Consider FA-1f model in dimension $d\geq 2$ with boundary condition of dimension $c$ with $c\in[0,d-1]$.  There exists $\alpha_0<0$ and a sequence $\gamma_N$ with $\lim_{N\to\infty}\gamma_N=0$
such that for $\alpha>\alpha_0$
\begin{equation*}
\gp_d(\alpha)=\bbA\alpha \, ,
\end{equation*}

\begin{equation*}
\lim_{N\to\infty}\lim_{T\to\infty}\mu_{\alpha,T}^{N,d}\left(\int_{0}^T dt |\sum_{i\in\Lambda_N^d}\eta_i(t)-N^d\rho|\leq \gamma_N N^d T \right)=1\, ,
\end{equation*}

\begin{equation*}
\lim_{N\to\infty}\lim_{T\to\infty}\mu_{\alpha,T}^{N,d}\left(\int_{0}^T dt |\sum_{i\in\Lambda_N^d}c_i(\eta(t))-N^d\bbA|\leq \gamma_N N^d T \right)=1\, .
\end{equation*}
\end{Theorem}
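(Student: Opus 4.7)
We focus on FA-1f: the $d$-dimensional East model decouples into $N^{d-1}$ independent one-dimensional East chains, so $\gp_d(\alpha) = \gp_1(\alpha)$ and the conclusion follows from Theorems~\ref{teo:phasetrans} and \ref{teo:condmes}. For FA-1f, detailed balance (preserved by the multiplicative tilt $e^\lambda$) renders the tilted generator $\cL_\lambda f(\eta) := \sum_{i\in \Lambda_N^d} c_i(\eta)[e^\lambda f(\eta^i) - f(\eta)]$ self-adjoint on $L^2(\nu)$, and Feynman--Kac yields
\[
\gp^{(N)}(\lambda) = \sup_{f\not\equiv 0} \frac{(e^\lambda - 1)\langle f, Mf\rangle_\nu - e^\lambda\,\cD(f)}{\|f\|_\nu^2}
\]
with $M(\eta) := \sum_i c_i(\eta)$ and $\cD$ the Dirichlet form $\tfrac{1}{2}\sum_{i,\eta} \nu(\eta) c_i(\eta)(f(\eta) - f(\eta^i))^2$. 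The test function $f\equiv 1$ gives $\gp^{(N)}(\lambda) \geq (e^\lambda -1)\nu(M) = (e^\lambda-1)\bbA N^d + O(N^{d-1})$, hence $\gp_d(\alpha) \geq \bbA\alpha$ after the substitution $\lambda = \alpha/N^{d-c}$ (this also follows from convexity of $\gp^{(N)}$).

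\emph{Upper bound.} To prove $\gp_d(\alpha) \leq \bbA\alpha$, I would use a Girsanov change of measure that tilts all rates by $e^\lambda$ followed by undoing the induced time rescaling, giving
\[
\gp^{(N)}(\lambda) = e^\lambda\, \tilde\psi\!\left(1 - e^{-\lambda}\right),\qquad \tilde\psi(\mu) := \lim_{t\to\infty} \tfrac{1}{t}\log\langle e^{\mu\int_0^t M(\eta(s))\,ds}\rangle,
\]
where $\tilde\psi(\mu)$ is the principal eigenvalue of $\cL_N^d + \mu M$. The main analytic step is the uniform-in-$N$ quadratic upper bound
\[
\tilde\psi(\mu) \leq \mu \,\nu(M) + C\mu^2 N^d \qquad\text{for all } |\mu|\leq \mu_0,
\]
for some $\mu_0, C > 0$ independent of $N$. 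I would obtain this through the Rayleigh formula for $\tilde\psi$ using (i) the positive uniform FA-1f spectral gap in $d \geq 2$, which allows solving the Poisson equation $-\cL_N^d v = M - \nu(M)$ with $\|v\|_2 = O(\sqrt{N^d})$, and (ii) the finite-range structure of $M = \sum_i c_i$, which yields the Kipnis--Varadhan asymptotic variance estimate $\langle M-\nu M,(-\cL_N^d)^{-1}(M-\nu M)\rangle_\nu = O(N^d)$. Feeding this bound into the Girsanov identity with $\lambda = \alpha/N^{d-c}$ and using $(1-e^{-\lambda})^2 \sim \lambda^2$ yields $\tfrac{1}{N^c}\gp^{(N)}(\alpha/N^{d-c}) \leq \bbA\alpha + o(1)$ as $N\to\infty$, since $c \leq d-1$. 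The threshold $\alpha_0$ is determined by the range of validity of this quadratic expansion.

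\emph{Concentration.} Given $\gp_d(\alpha) = \bbA\alpha$, the density and activity concentration estimates follow by perturbing the partition function and applying exponential Chernoff. For the density, set
\[
Z_N(\alpha,h,T) := \Bigl\langle \exp\Bigl(\tfrac{\alpha}{N^{d-c}}\cA(T) + h\int_0^T\!\bigl[\textstyle\sum_i\eta_i(t) - \rho N^d\bigr]\,dt\Bigr)\Bigr\rangle.
\]
Extending the upper-bound argument to the joint tilt (the perturbed tilted generator $\cL_\lambda + h \sum_i \eta_i$ remains self-adjoint on $L^2(\nu)$) yields $\tfrac{1}{TN^c}\log Z_N(\alpha,h,T) \to \bbA\alpha + O(h^2)$ uniformly for $|h|$ small. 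Since the $h$-linear coefficient equals (up to scaling) the $\mu^{N,d}_{\alpha,T}$-mean of the density deviation and must therefore vanish, Chernoff bounds applied at $\pm h$ yield the stated concentration at some rate $\gamma_N \to 0$. The activity concentration is obtained analogously by tilting with $\int_0^T [\sum_i c_i(\eta(t)) - \bbA N^d]\,dt$ in place of the density perturbation.

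\emph{Main obstacle.} The technical core is the sharp quadratic bound on $\tilde\psi$: its proof uniformly in $N$ requires combining the variational formula with the FA-1f spectral gap in $d \geq 2$ and carefully tracking the boundary corrections of order $N^{d-1}$ (harmless after dividing by $N^c$ since $c \leq d-1$, but which must not be amplified through the Poisson equation). The restriction $\alpha > \alpha_0$ reflects the range where this quadratic expansion remains valid, beyond which blocking phenomena analogous to the $\alpha < \alpha_1$ regime in Theorem~\ref{teo:phasetrans} are expected to take over.
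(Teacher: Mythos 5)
Your proposal takes a genuinely different route from the paper. The paper proves Theorem~\ref{linearityd>1} by the same coarse-graining strategy as in $d=1$: partition $\Lambda_N^d$ into $(N/K)^d$ mesoscopic blocks, assign activity-density labels $u^i_{K,\epsilon}\in\{-1,0,1\}$, show via Lemma~\ref{ahh} (the $d\geq 2$ analogue of Lemma~\ref{prop: bad blocksnew}) that label-$0$ blocks are exponentially suppressed in $(N/K)^d$, use Lemma~\ref{lem: variance 0} (in its $d\geq2$ form) to control label-$1$ (fully filled) blocks through the spectral gap, and then decompose $\bk{\exp(\alpha\cA(T)/N^{d-c})}$ according to these labels. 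The threshold $\alpha_0=-S_\rho/(4\bbA)$ emerges explicitly from the competition between the activity tilt and the spectral-gap cost in the $\cW_{1,\cdot}$ regime. By contrast, you propose a spectral perturbation route: pass through Girsanov/time-rescaling to $\tilde\psi(\mu)$, the principal eigenvalue of $\cL_N^d+\mu\cH$, and then control $\tilde\psi$ by a Kipnis--Varadhan-type quadratic bound obtained from the Poisson equation and the spectral gap.

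There is, however, a genuine gap in your central analytic claim. You assert the quadratic upper bound $\tilde\psi(\mu)\leq\mu\,\nu(\cH)+C\mu^2 N^d$ for all $|\mu|\leq\mu_0$ with $\mu_0,C$ \emph{independent of $N$}. This cannot be true. If it were, then applying it at $\mu=1-e^{-\lambda}$ with $\lambda=\alpha/N$ (take $d=1$, $c=0$ to fix ideas; the argument is identical) would give, for \emph{every} fixed $\alpha<0$, $\gp^{(N)}(\alpha/N)=e^{\alpha/N}\tilde\psi(1-e^{-\alpha/N})\leq\bbA\alpha+O(1/N)$, hence $\gp(\alpha)\leq\bbA\alpha$ for all $\alpha$. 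But Theorem~\ref{teo:phasetrans}(ii) and Lemma~\ref{lowereasy} show $\gp(\alpha)\geq-\Sigma$ for all $\alpha<0$, and for $\alpha<-\Sigma/\bbA$ this forces $\gp(\alpha)>\bbA\alpha$, contradicting your bound. Concretely: the all-filled configuration gives a trial density $f=\prod_i\eta_i/\rho^{N^d}$ with $\nu(f\cH)=O(N^c)$ and $\cD_N(\sqrt f)=O(N^c)$, which shows $\tilde\psi(\mu)\geq -C'N^c$ \emph{uniformly in} $\mu\leq 0$, so $\tilde\psi$ cannot decay as $\mu\nu(\cH)\sim\mu\bbA N^d\to-\infty$. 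The quadratic expansion is only valid on a window of $\mu$ shrinking like $N^{-d}$ (as is also dictated by standard eigenvalue perturbation theory: the perturbation has operator norm $\|\cH\|_\infty\sim N^d$, while the gap is $O(1)$), which, after the substitution $\mu\sim\alpha/N^{d-c}$, would force $|\alpha|\lesssim N^{-c}\to 0$ and so would not cover any fixed $\alpha_0<0$ when $c\geq 1$. Relatedly, your framework gives no mechanism for the threshold $\alpha_0$ to emerge: since the evaluation point $\mu\to 0$ eventually falls inside \emph{any} fixed window, your claimed bound would apply for all $\alpha$, not only $\alpha>\alpha_0$. The paper's block decomposition is precisely designed to avoid this: the Poisson-equation / asymptotic-variance estimate is applied only inside mesoscopic boxes of size $K\ll N$ (where the window of validity is large enough), and the crossover at $\alpha_0$ comes from weighing the cost of a fully filled mesoscopic box (controlled by $S_\rho$) against the activity gain.
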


\begin{Theorem}
\label{fluctu2}
Consider East or FA-1f model in $d\geq 2$ at any $\rho\in(0,1)$.  For any $u \in[0,1]$ it holds
\begin{eqnarray}
\label{eqfluc2}
-(1-\rho)(1-u) &\leq& 
 \liminf_{N\to\infty} \lim_{t\to\infty} \; 
 \frac{1}{t N^c} \log  \left \bra \frac{\cA(t)}{N^dt}\simeq u \bbA \right \ket 
 \\
&\leq&
 \limsup_{N\to\infty} \lim_{t\to\infty} \; 
 \frac{1}{t N^c} \log  \left \bra \frac{\cA(t)}{N^dt}\simeq u\bbA \right \ket 
\leq \alpha_0 \bbA (1-u) \, ,
\nonumber
\end{eqnarray}
where $\ga_0<0$ was introduced in Theorem \ref{linearityd>1}.
\end{Theorem}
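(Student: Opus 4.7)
The plan is as follows. The two inequalities in \eqref{eqfluc2} will be proved by unrelated arguments: the upper bound by an exponential Chebyshev estimate built on Theorem~\ref{linearityd>1}, and the lower bound by an explicit inactive-then-active trajectory of the type outlined in \eqref{eq: crude LB}.

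For the upper bound I would fix $\alpha\in(\alpha_0,0)$ and $\epsilon>0$ and set $\lambda:=\alpha/N^{d-c}<0$. The inclusion $\{\cA(t)/(N^d t)\simeq u\bbA\}\subseteq \{\cA(t)\leq (u\bbA+\epsilon)N^d t\}$ together with Markov's inequality applied to $e^{\lambda \cA(t)}$ (which is a decreasing function of $\cA(t)$) gives
$$\bra \cA(t)/(N^d t)\simeq u\bbA\ket \;\leq\; e^{-\lambda(u\bbA+\epsilon)N^d t}\,\bra e^{\lambda \cA(t)}\ket.$$
Dividing the logarithm by $tN^c$, letting $t\to\infty$ at $N$ fixed and then taking $\limsup_N$, the right-hand side converges to $-\alpha(u\bbA+\epsilon)+\gp_d(\alpha)$. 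For FA-1f Theorem~\ref{linearityd>1} yields $\gp_d(\alpha)=\bbA\alpha$ on $(\alpha_0,0)$; for East the same identity follows from $\gp_d=\gp_1$ together with Theorem~\ref{teo:phasetrans}. The bound thus reduces to $\bbA\alpha(1-u)-\alpha\epsilon$; sending $\epsilon\to 0^+$ and then $\alpha\to\alpha_0^+$ produces the announced $\alpha_0\bbA(1-u)$.

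For the lower bound I would exhibit the following trajectory. Let $B\subset\Lambda_N^d$ denote the set of interior sites adjacent to the empty boundary hyperplane prescribed by the boundary condition of dimension $c$; the geometry of this condition ensures that $r_i(\eta)\equiv 1$ for every $i\in B$ and that $|B|=N^c\bigl(1+o(1)\bigr)$. Consider the event $\cE$ defined by (i) $\eta(0)=\mathbf{1}$ (the fully filled configuration), (ii) no Poisson clock at a site of $B$ rings during $[0,(1-u)t]$, and (iii) $\cA(t)-\cA((1-u)t)\in\bigl((u\bbA-\epsilon)N^d t,(u\bbA+\epsilon)N^d t\bigr)$. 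On (i)$\cap$(ii) every site outside $B$ has all its neighbours filled and hence carries constraint rate zero; the configuration is therefore frozen at $\mathbf{1}$ on $[0,(1-u)t]$, so $\cA((1-u)t)=0$, and (iii) forces the event $\{\cA(t)/(N^d t)\simeq u\bbA\}$. The probability of (i) is $\rho^{N^d}$ and that of (ii) equals $\exp(-|B|(1-\rho)(1-u)t)$, since each filled site of $B$ flips at rate $1-\rho$; conditionally on $\eta((1-u)t)=\mathbf{1}$, the probability of (iii) tends to one as $t\to\infty$ at $N$ fixed, by ergodicity of the finite irreducible chain and the law of large numbers for the additive functional $\cA$. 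Combining these estimates,
$$\bra \cA(t)/(N^d t)\simeq u\bbA\ket \;\geq\; \rho^{N^d}\,e^{-|B|(1-\rho)(1-u)t}\bigl(1-o_t(1)\bigr),$$
and taking $\tfrac{1}{tN^c}\log$, letting $t\to\infty$ first (so that the $\rho^{N^d}$ factor disappears) and then $\liminf_N$ (using $|B|/N^c\to 1$) yields $-(1-\rho)(1-u)$.

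The delicate step is the identification of $B$ in the lower bound. The sharp constant $-(1-\rho)$ requires $|B|/N^c\to 1$, which hinges on the precise placement of the empty boundary hyperplane: the claim is immediate for $c=d-1$ (the East case and FA-1f with an empty face), but for $0\leq c<d-1$ it must be checked case by case, the natural configuration being the empty $c$-hyperplane at unit distance from a $c$-dimensional strip of interior sites, which is precisely what produces the scaling \eqref{rescaledd}. The appeal to the LLN in step (iii), although standard for a finite irreducible chain, must be made explicit because the initial condition $\mathbf{1}$ is far from the equilibrium measure $\nu$.
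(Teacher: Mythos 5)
Your proof is correct and follows essentially the same approach as the paper: an exponential Chebyshev bound combined with Theorem~\ref{linearityd>1} for the upper estimate, and a frozen trajectory paying only for the $N^c$ unconstrained sites for the lower estimate. The one genuine structural difference is your time ordering. The paper runs the dynamics at equilibrium on $[0,ut]$ (accumulating the activity $u\bbA N^d t$ by the law of large numbers, the distribution at time $ut$ remaining $\nu$ by reversibility) and then freezes on $[ut,t]$; you instead put the frozen window at the beginning, starting from $\mathbf 1$, and accumulate the activity afterward. Your ordering is in fact marginally cleaner, since it avoids the issue of intersecting the LLN event on $[0,ut]$ with the event $\{\eta(ut)=\mathbf 1\}$, which the paper passes over silently; the price you pay is exactly the explicit appeal to ergodicity of the finite chain from a non-equilibrium start that you already flag, and which is indeed routine. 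Two tiny remarks: (a) you describe event (ii) as ``no Poisson clock at a site of $B$ rings,'' which would give cost $\exp(-|B|(1-u)t)$; what you actually use (and correctly compute) is that no \emph{flip} occurs at $B$, i.e.\ the thinned rate-$(1-\rho)$ clocks are silent; (b) your worry about $|B|/N^c\to 1$ for $c<d-1$ is unfounded: the empty hyperplane of dimension $c$ and linear size $N$ prescribed by the boundary condition has exactly $N^c$ adjacent interior sites, so $|B|=N^c$ exactly, and the identification of $B$ with the unconstrained set is immediate once the remainder of the boundary is filled.
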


These results correspond to those of Theorems \ref{teo:phasetrans}(i) and \ref{teo:condmes} (i) and \ref{fluctu1} in the one dimensional case. For the large negative $\alpha$ regime, the result for dimension larger than one is much less precise.
\begin{Theorem}
\label{poor}
Consider  FA-1f model in dimension $d\geq 2$ with boundary condition of dimension $c\in[0,d-1]$.   For any $\delta>0$ there exists $\alpha_1(\delta)<0$
such that 
\begin{equation}\lim_{N\to\infty}\lim_{T\to\infty}\mu_{\alpha,T}^{N,d}\left(\int_{0}^T dt \sum_{i\in\Lambda_N^d}\eta_i(t)\geq (1-\delta) N^d T \right)=1  \, .
\end{equation}
\end{Theorem}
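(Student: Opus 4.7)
The plan is to estimate the probability of the complementary event $E^c = \{\int_0^T \sum_{i \in \Lambda_N^d} \eta_i(t)\,dt < (1-\delta) N^d T\}$ under the tilted measure $\mu_{\alpha,T}^{N,d}$ and show it tends to $0$ for $\alpha$ sufficiently negative. The key structural input is a pointwise lower bound relating the instantaneous activity rate to the number of empty sites: for every $\eta \in \Omega_N^d$,
$$\tau(\eta) \;:=\; \sum_{i \in \Lambda_N^d} c_i(\eta) \;\geq\; C(\rho)\, e(\eta), \qquad e(\eta) := \sum_{i \in \Lambda_N^d}(1-\eta_i),\qquad C(\rho) = \min(\rho, 1-\rho).$$
To prove this I split the empty sites into those with $r_i=1$ (carrying $c_i = \rho$) and the remainder, which must be interior sites with all $2d$ neighbors filled; for each such ``isolated'' empty, every one of its $2d$ filled neighbors $j$ satisfies $r_j = 1$ (it sees the empty site) and hence $c_j = 1-\rho$. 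A double-counting argument (each filled site has at most $2d$ neighbors) produces at least one distinct such filled contributor per isolated empty, and combining the two contributions yields the claim.

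For the numerator of the ratio $\mu_{\alpha,T}^{N,d}(E^c)$ I use the Dol\'eans--Dade exponential martingale. Setting $\beta := e^{\alpha/N^{d-c}} \in (0,1)$, the process
$$Z_t \;:=\; \beta^{\cA(t)} \exp\!\Bigl((1-\beta)\!\int_0^t \tau(\eta(s))\,ds\Bigr)$$
is a true martingale with $\bk{Z_T} = 1$ (boundedness on each $[0,T]$ follows from $\tau \leq N^d$ and $\beta^{\cA(t)} \leq 1$). On the event $E^c$ the activity lemma gives $\int_0^T \tau(\eta(s))\,ds \geq C(\rho)\, \delta\, N^d\, T$, and since $1-\beta > 0$ this yields
$$\bk{\mathds{1}_{E^c}\, e^{\alpha \cA(T)/N^{d-c}}} \;=\; \bk{\mathds{1}_{E^c}\, Z_T\, e^{-(1-\beta)\int_0^T \tau(\eta(s))\,ds}} \;\leq\; e^{-(1-\beta) C(\rho)\delta N^d T}.$$

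For the denominator I exhibit the trivial ``frozen full'' strategy: under $\nu$ the all-filled configuration $\mathbf{1}$ has weight $\rho^{N^d}$, and in $\mathbf{1}$ only the sites of $\Lambda_N^d$ adjacent to the empty boundary hyperplane carry $r_j = 1$, a set of cardinality at most $k_c N^c$ with $k_c = O(1)$, each jumping at rate $1-\rho$. The probability of no flip in $[0,T]$ starting from $\mathbf{1}$ is thus at least $e^{-(1-\rho) k_c N^c T}$, and on this event $\cA(T) = 0$; therefore
$$\bk{e^{\alpha \cA(T)/N^{d-c}}} \;\geq\; \rho^{N^d}\, e^{-(1-\rho) k_c N^c T}.$$
Dividing the two estimates,
$$\mu_{\alpha,T}^{N,d}(E^c) \;\leq\; \rho^{-N^d}\exp\!\Bigl(\bigl[-(1-\beta) C(\rho)\delta N^d + (1-\rho) k_c N^c\bigr]\, T\Bigr).$$
Since $(1-\beta) N^d \sim |\alpha| N^c$ as $N \to \infty$ with $\alpha$ fixed, the bracket is strictly negative as soon as $|\alpha| C(\rho)\delta > (1-\rho) k_c$. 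Defining $\alpha_1(\delta) := -(1-\rho)k_c / (C(\rho)\delta)$, for every $\alpha < \alpha_1(\delta)$ and every sufficiently large $N$ the inner limit $T \to \infty$ drives the right-hand side to $0$ (the prefactor $\rho^{-N^d}$ being $T$-independent), and the outer limit $N \to \infty$ preserves this.

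The delicate ingredient is the pointwise activity lower bound: sites on the boundary of $\Lambda_N^d$ where the paper sets $r_i = 1$ by convention contribute an extra positive term to $\tau$ and so do not endanger the inequality, but one must verify that the double counting of filled neighbors of isolated empties goes through unchanged in a neighborhood of the boundary. The true-martingale property of $Z_t$ on each compact time interval is then a standard consequence of the uniform bound $Z_t \leq \exp((1-\beta) N^d t)$, and the identification of the $O(N^c)$ active sites in the configuration $\mathbf{1}$ is a direct verification from the definition of the boundary condition of dimension $c$.
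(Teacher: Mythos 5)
Your proof is correct and takes a genuinely different, more elementary route than the paper's.

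The paper proves Theorem~\ref{poor} via the coarse-graining machinery of Section~\ref{secbad}: starting from the decomposition $\mathds{1}_{\cW_{0,\delta/2}}+\mathds{1}_{\cW_{1,1-\delta}}+\mathds{1}_{\cW_{-1,\delta/2}}\geq 1$, it controls the ``mixed label'' event $\cW_{0,\delta/2}$ through Lemma~\ref{nonhopiu} (which rests on the spectral-gap estimate of Proposition~\ref{teogap} and the block-variance arguments of Lemma~\ref{ahh}), and controls $\cW_{-1,\delta/2}$ through the activity lower bound \eqref{usoH} on label-$(-1)$ blocks together with the frozen-full lower bound for the denominator. Your approach bypasses coarse-graining and the spectral gap entirely. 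The key replacement is the pointwise inequality $\tau(\eta)=\sum_i c_i(\eta)\geq C(\rho)\,e(\eta)$, which relates instantaneous activity directly to the number of vacancies and is established by elementary combinatorics; combined with the Dol\'eans--Dade exponential $Z_t$ (which is the paper's Radon--Nikodym derivative \eqref{eq: Radon Nykodim} written multiplicatively), it directly controls the event $\{\pi_T(\sum_i\eta_i)< (1-\delta)N^d\}$ with no need for Donsker--Varadhan or the positivity of the spectral gap. The denominator bound (frozen-full configuration plus no clock ringing on the $O(N^c)$ unconstrained sites) is essentially identical to the paper's \eqref{uffffff}. Your route is more self-contained and yields an explicit $\alpha_1(\delta)$; the paper's route is less direct here but reuses lemmas already needed for Theorem~\ref{linearityd>1}.

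One small point to tighten: the claimed constant $C(\rho)=\min(\rho,1-\rho)$ in the pointwise bound is slightly optimistic near $\partial\Lambda_N^d$. The double-counting argument gives, for an isolated vacancy $i$, exactly $2d$ incidences with active filled neighbours \emph{only when all $2d$ neighbours of $i$ lie in $\Lambda_N^d$}; boundary sites $i$ adjacent to the filled part of $\cB$ can still be isolated but contribute as few as $d$ incidences, so the incidence count is only $\geq d\,e_{\rm iso}$ and the number of distinct active filled sites is only $\geq e_{\rm iso}/2$. You can either take $C(\rho)=\min\bigl(\rho,(1-\rho)/2\bigr)$, or observe that the deficit is supported on $O(N^{d-1})$ sites and absorb it as a lower-order error term in $\int_0^T\tau\,ds$. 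Neither correction affects the conclusion; only the explicit value of $\alpha_1(\delta)$ changes.
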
 
  Theorem \ref{linearityd>1} and  \ref{poor} will be proven in Section \ref{genesmall} and  \ref{secpoor} respectively. 
%

\section{Preliminary results}

In Section \ref{var} we recall some basic tools from the Donsker-Varadhan large deviation theory  and  we  establish a variational formula for $\gp^{(N)}(\lambda)$ \eqref{eq: exact}. We underline that this formula is valid in any dimension. Finally, in Section \ref{subsec: sg} we recall a result on the spectral gap of the generator for KCM which will be used in some of our  proofs.

\subsection{Donsker-Varadhan theory}
\label{var}


The Donsker-Varadhan theory for large deviations \cite{DZ} will be a basic tool to derive our results.
Fix the dimension $d$ and let $\mathcal{D}_N$ be the Dirichlet form corresponding to the generator $\cL^d_N$ \eqref{gene}
which is defined on any function $g$ as 
\begin{equation}
\label{dirichlet}
\cD_N({g}) := -  \nu ( g \, \cL_N^d g). 
\end{equation}
For future use we note that
the Dirichlet form  can be rewritten by using the definition \eqref{gene} as
\begin{eqnarray}
\label{eq: dirichlet}
\cD_N({g}) = 
\sum_{i\in \gL_N^d}
\nu \left( c_i  (\eta)  \big( {g(\eta^i)} - {g(\eta)} \big)^2 \right)=\sum_{i\in \gL_N^d}
\nu \left( r_i(\eta){\mbox{Var}}_i(g)\right)
\end{eqnarray}
with $\mbox{Var}_i(g)=\nu_i(f-\nu_i(g))^2$.

For any smooth function $V:\Omega_N^d\to \mathbb{R}$,  we define the time average of $V$ over the process as
$$\pi_t(V):=\frac{1}{t}\int_0^t V(\eta(s))ds$$
where  $\eta(s)$ is a trajectory of the Markov process starting at time zero from $\eta$. 
The dynamics is reversible with respect to the measure $\nu$, thus the Donsker-Varadhan theory asserts that 
for any $\gamma\in\mathbb{R}$ 
\begin{equation}
\label{dv}
\lim_{t\to\infty} \; \frac{1}{t} \log \left \bra \exp \big(\gamma \, t\pi_t(V) \big) \right \ket
=
\sup_f \Big \{ \gamma\nu(fV) -  \cD_N (\sqrt f) \Big\} \, ,
\end{equation}
where we recall that $\bra \cdot \ket$ is the mean over the evolution of the process and over the initial configuration which is distributed with 
the equilibrium Bernoulli measure $\nu$ and the supremum is over the positive functions $f$ which satisfy $\nu(f)=1$. 
Note that the r.h.s of \eqref{dv} corresponds to the  largest eigenvalue of the modified operator
$\mathcal{L} + \gamma V$. 
Furthermore, if one defines the empirical measure in $\Omega_N^d$  by setting for any $A\subset\Omega_N^d$ and any $t \geq 0$
\begin{equation}
\label{eq: emprirical measure}
\pi_t(A) = \frac{1}{t} \int_0^t ds \;  \mathds{1}_{A} (\eta(s)),
\end{equation}
Donsker-Varahdan theory establishes that the large deviation functional of the empirical measure is the Dirichlet form $\cD_N$.
Thus if we let $\psi$ be any function from 
$\Omega_N^d\to \mathbb R$, for any $[a,b]\subset \mathbb R$ it holds
\begin{equation}
\lim_{t\to\infty} \frac{1}{t}\log \left \bra \frac{1}{t} \int_0^t ds \psi(\eta(s))\in[a,b]  \right \ket 
=
\lim_{t\to\infty} \frac{1}{t}\log \bra \pi_t(\psi)\in[a,b] \ket
= -\inf_{g: \nu(g)=1,~~ g\geq 0\atop \nu(g \psi ) \in [a,b] } \cD_N ({\sqrt g}) \, .
\label{DV2}
\end{equation}



%
%
%
%

For any $\lambda\in\bbR$, we consider the  modified dynamics obtained by rescaling the time by $\exp(\lambda)$. The generator reads
\begin{eqnarray}
\cL_{N,\gl} f (\eta) = \sum_{i\in \gL_N^d}
\exp( \gl) c_i  (\eta)  \big( f(\eta^i) - f (\eta) \big) = 
\exp( \gl) \cL_N  f (\eta),
\end{eqnarray}
which is again reversible with respect to $\nu$.
Then by evaluating the Radon-Nykodim derivative $d \bbP_t/d \bbP_t^{\lambda}$ where $\bbP_t$ and $\bbP_t^{\lambda}$ denote respectively the 
 probability of the trajectory
up to time t for the process evolving under $\cL_N$  and $\cL_{N,\gl}$, we obtain 
\begin{eqnarray}
\label{eq: Radon Nykodim} 
\frac{d\mathbb P_t}{d\mathbb P_t^{\lambda}}=
\exp\left(- \gl \cA(t) + (\exp( \gl) - 1)  \int_0^t \, ds \;  \cH (\eta(s)) \right),
\end{eqnarray}
with
\begin{eqnarray}
\label{eq: H}
\cH (\eta) = \sum_{i\in \gL_N^d} c_i  (\eta) \, .
\end{eqnarray}
where we recall that $\cA(t)$ is the total activity up to time $t$ and $c_i$ has been defined in \eqref{ci}.
This implies in particular for any function $\psi(t):(\eta_{s})_{s\leq t}\to \mathbb R$
\begin{equation}
\label{cm}
\left \bra  \psi(t)\exp( \gl \cA(t)) \right \ket 
=
\left \bra \psi(t)\exp \left( (\exp( \gl) - 1)  \int_0^t \, ds \;  \cH (\eta(s)) \right) \right \ket _{\lambda}
\end{equation}
where here and in the following $< \cdot >_{\lambda}$ is 
the expectation over the modified dynamics, i.e. the mean over the initial configuration distributed with $\nu$ and over the evolution of the process with generator $\cL_{N, \gl}$. 
Note that  \eqref{cm} together with definition \eqref{eq:free} and formula \eqref{dv} ($\cL_{N, \gl}$ is reversible with respect to $\nu$ thus Donsker-Varadhan theory applies), leads to the following variational formula
\begin{eqnarray}
\gp^{(N)} (\gl) &=& \sup_f \left\{ (\exp( \gl) - 1)  \nu (f \, \cH) - \exp(\gl)  \cD_N(\sqrt{f}) \right\} \label{eq: exact}
\end{eqnarray}
where the supremum is taken over the positive functions $f:\Omega_N \to\bbR$ such that  $\nu(f) = 1$.
\medskip

\subsection{Positivity of the spectral gap}
\label{subsec: sg}

Finally, another tool which we will use is the knowledge of a positive lower bound uniform on $N$ for the spectral gap of $\cL_{N}$ which is defined as
\begin{equation}
\label{defgap}
{\mbox{gap}}(\cL_N):=\inf_{f, f\neq const} \frac{\cD_N(f)}{\mbox{Var}(f)},
\end{equation}
where $\mbox{Var}(f)$ is the variance w.r.t. the invariant Bernoulli measure $\nu$ on $\Lambda_{N}$ and  the minimization is over the functions $f:\Omega_N\to\bbR$ which are not  constant.  The following holds

\begin{Proposition}[\cite{AD},\cite{CMRT1},\cite{CMRT2}]
\label{teogap}
Consider  East or FA-1f model in any dimension and with any choice of the boundary condition which guarantees ergodicity.
For any $\rho\in(0,1)$ there exists $S_{\rho}>0$ such that 
$$\inf_{N}{\mbox{gap}}({\cL}_N)>S_{\rho} \, .$$
From this result and Definition  \eqref{defgap}  it follows immediately that
\begin{equation}\label{spgapineq}
\cD_N(f)\geq S_{\rho}{\mbox{Var}}(f).
\end{equation}

Let $\cG$ be a generic connected subset of $\bbZ^d$ and consider FA-1f with a single empty site at the boundary on $\cG$. We call $\cL_{\cG}$  the generator and $\mbox{gap}(\cL_{\cG})$ the corresponding spectral gap defined as in \eqref{defgap} with $\cD_{N}$ substituted by $\cD_{\cG}(f):=-\nu(f,\cL_{\cG}f).$  Then 
$${\mbox{gap}}({\cL}_{\cG})>S_{\rho} \, .$$
\end{Proposition}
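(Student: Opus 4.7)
The plan is to establish the uniform spectral gap via the bisection/renormalization method developed for kinetically constrained models, which is the approach of the cited references \cite{AD,CMRT1,CMRT2}. The central idea is to derive a Poincaré inequality at length scale $2\ell$ from one at scale $\ell$ with a multiplicative constant bounded along the iteration, thereby producing a gap bounded below by some $S_\rho>0$ depending only on $\rho\in(0,1)$.

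For the East model in $d=1$ on $\gL_N=[1,N]$ with empty right boundary, I would split the interval into two overlapping blocks $B_1,B_2$ of size roughly $N/2$ with a small buffer in the middle. Using the standard conditional variance decomposition
\begin{equation*}
\mbox{Var}(f) \leq \nu\bigl(\mbox{Var}_{B_1}(f)\bigr) + \nu\bigl(\mbox{Var}_{B_2}(\nu_{B_1}(f))\bigr)+R,
\end{equation*}
each of the first two terms becomes, after conditioning on the complement, a variance with respect to an East chain on a block of length $\leq N/2+O(1)$ \emph{provided} the boundary configuration is empty at the relevant interface. Under the Bernoulli measure $\nu$, the buffer zone contains a zero with probability $\geq 1-\rho^{O(1)}$, uniformly in $N$; on this good event the induction hypothesis applies, and on the complementary event one absorbs the cost into $R$ via a path argument (changing at most $O(1)$ sites to create a zero at the interface). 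Each step of the path has cost bounded by the Dirichlet form times $1/\nu$ of a configuration of $O(1)$ specified sites, which is uniform in $N$. This closes the recursion and yields $\mbox{gap}(\cL_N)\geq S_\rho>0$.

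For FA-1f in $d=1$, the same bisection scheme applies, with the additional flexibility that the two-sided constraint allows a zero to be transported across the buffer zone starting from either boundary. For higher dimensions, the East model in $d\geq 2$ decouples into $N^{d-1}$ independent one-dimensional East chains along the $\vec{e}_1$ direction, so the $d=1$ bound transfers directly. For FA-1f in $d\geq 2$, one compares the $d$-dimensional Dirichlet form to a sum of lower-dimensional ones by observing that the $d$-dimensional constraint $r_i=1-\prod_{j}\eta_{i+\vec{e}_j}\eta_{i-\vec{e}_j}$ is satisfied whenever the one-dimensional constraint along any axis is; one then conditions on the transverse slices and applies the $d=1$ estimate on the conditionally ergodic slices (those containing an empty site, which occur with density bounded below uniformly in $N$).

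The statement for a generic connected subset $\cG\subset\bbZ^d$ with a single empty boundary site follows along the same lines: cover $\cG$ by boxes of fixed size on which the gap is known, and glue by a conditional variance decomposition together with a path argument that transports the single boundary zero to any prescribed site with cost independent of $|\cG|$, again exploiting the two-sidedness of the FA-1f constraint and the positive density of zeros under $\nu$. The main obstacle is controlling, at each step of the recursion, the multiplicative constant produced by the path argument: one needs to verify that the probability of a favorable buffer configuration and the length of the repair path both contribute factors bounded uniformly in $N$ and in the scale, which requires a careful quantitative choice of the buffer width as a function of $\rho$.
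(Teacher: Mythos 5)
The paper does not prove Proposition~\ref{teogap}: it is cited directly from \cite{AD}, \cite{CMRT1} and \cite{CMRT2}, and the paper merely records which reference covers which case. So there is no ``paper's own proof'' to compare against. That said, your sketch does correctly identify the bisection/renormalization strategy of \cite{CMRT1,CMRT2}: conditional variance decomposition over two overlapping blocks, use of a buffer zone where a zero is present with probability $1-\rho^{O(\ell)}$, a path (or auxiliary-dynamics) argument on the complementary event, and a recursion whose multiplicative constants form a convergent product. The higher-dimensional East decoupling into $N^{d-1}$ independent chains and the $d$-dimensional FA-1f monotonicity argument (one-dimensional constraint implies the $d$-dimensional one) are also the right mechanisms and are in fact reused later in the paper (e.g.\ in the proof of the higher-dimensional analogue of Lemma~\ref{lem: variance 0}).

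Two remarks on precision. First, for the East model the paper's primary reference \cite{AD} does \emph{not} use bisection; it gives a separate combinatorial/coupling argument for the relaxation time, and \cite{CMRT1} later reproved and generalized it by the bisection technique — your sketch only covers the latter route, which is fine but worth noting. Second, the place where your outline is genuinely incomplete is precisely the one you flag yourself: closing the recursion requires choosing the buffer width proportional to the current scale so that the ``bad-buffer'' probability $\rho^{c\ell}$ is summable over dyadic scales, and requires verifying that the path-argument cost on the bad event enters only through that small probability (not multiplicatively against it). Without spelling out that the infinite product $\prod_k (1-\rho^{c 2^k})^{-1}$ converges and that the path length and weight are $O(1)$ per scale, the claim $\inf_N \mbox{gap}(\cL_N)>0$ is not actually established. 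As a \emph{citation} of known results your answer is adequate; as a self-contained proof it stops one quantitative lemma short.
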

The result for the East model has been derived in  \cite{AD} and subsequently proven in  \cite{CMRT1} for a larger class of constraints including FA-1f in any dimension with completely empty boundary conditions. Actually, the latter result could be easily derived by the result in \cite{AD} without using the technique of \cite{CMRT1}. Instead, the result for FA-1f with generic boundary condition and on a generic graph has been derived in \cite{CMRT2}.

%
%

\section{Local equilibrium}
\label{zeroblocks}

\subsection{East and FA-1f in $d=1$}
\label{zeroblocks_eastFA-1f}


Throughout this section, we consider East and FA-1f models in one dimension with mean density $\rho\in(0,1)$.  
Let us start by defining the coarse grained activity.
Let $K$ be such that $N/K$ is integer and partition $\Lambda_N$ into boxes   $B_{i}$ with $i\in[1,N/K]$ of size $K$, namely $B_i=[(i-1)K+1,iK]$.
 We define the activity in the interior of $B_i$  as \begin{equation}\label{defact}
\mathcal{H}_i(\eta)=\sum_{i\in \widetilde B_i}c_i(\eta),\end{equation}
where $\widetilde B_i\subset B_i$ are the sites such that the corresponding constraints depend
only on the configuration inside $B_i$, namely $\widetilde B_i:=B_i\setminus iK$ for East and $\widetilde B_i:=B_i\setminus \{(i-1)K+1,iK\}$ for FA-1f.  
%
 We also define the coarse grained density as
\begin{equation}
{\mathcal{R}}_i(\eta)=\sum_{i\in B_i}\frac{\eta_i}{K}.\end{equation}
Fix $\epsilon>0$, we
define the {\sl activity-density} 
associated to a configuration $\eta$ as
\begin{eqnarray}
\label{activity densitylabels}
 u_{K,\gep}^i (\eta) =
\begin{cases}
1\qquad   & {\rm if} \quad  \eta_j= 1\qquad\forall j\in B_i  \, ,\\
-1 \qquad  & {\rm if} \quad  |\mathcal{H}_{i} (\eta) - \bbA {|\widetilde B_i|} \; | \leq \gep K  {\mbox{ and }} |\mathcal{R}_{i} (\eta) - \rho | \leq \gep\, ,\\
0  \qquad  & {\rm otherwise} \, 
\end{cases}
\end{eqnarray}
where  the mean instantaneous activity $\bbA$ has been defined in formula \eqref{bbA} and 
in order  for the above definition to be well posed we
restrict $\epsilon$ to the values such that $\rho+\epsilon<1$ and $\bbA|\widetilde B_i|-\epsilon K >0$.
In the rest of the paper in any result that uses  these labels we imply that $N$ and $K$ are integers chosen in order that $N/K$ is also integer and that the above restriction on $\epsilon$ is satisfied.


The main result of this section is Lemma \ref{prop: bad blocksnew} which states that the probability of finding under the empirical measure $\pi_T$ a density of boxes with activity-density label equal to zero, namely with activity different from the mean activity and from zero and/or particle density different from $\rho$ and one is suppressed exponentially in $T$ and $N/K$. In other words locally we are equilibrated either in the completely filled state or in the mean state identified by $\nu$. Instead Lemma \ref{lem: variance 0} guarantees that the probability of finding a density of boxes with activity-density label equal to one, namely completely filled boxes,  is suppressed exponentially in $T$ (but not wrt $N/K$).

Before stating and proving these results we give some inequalities which immediately follow from the above definition of the labels and
 which will be used in the subsequent sections.
Recall the definition of $\cH$ \eqref{eq: H}, then
\begin{eqnarray}
\cH(\eta)\geq \sum_{i=1}^{N/K}\cH_i(\eta)& \geq & 
\left(K\bbA -2\bbA-\epsilon K \right)\sum_{i=1}^{N/K} \mathds{1}_{u_{K,\epsilon} ^i  = -1}\label{usoH}
\end{eqnarray}
\begin{equation}
\cH(\eta)\leq \sum_{i=1}^{N/K}\cH_i(\eta)+2\frac{N}{K}\leq \sum_{i=1}^{N/K} \mathds{1}_{u_{K,\epsilon} ^i  = -1}\left(K\bbA -2\bbA+\epsilon K \right)+K\sum_{i=1}^{N/K} \mathds{1}_{u_{K,\epsilon} ^i  = 0}+2\frac{N}{K}.
\label{usoH2}
\end{equation}

%
%
%

Recall the definition of the empirical measure $\pi_T$, see equation \eqref{eq: emprirical measure}, we define the following events:
\begin{definition}
Fix $T>0$, $N,K$ integers, $\delta\in[0,1]$ and $\epsilon>0$. Then for $j\in\{-1,0,1\}$ 
we define the event $\cW_{j,\delta}$  by requiring a density at least $\delta$ of $j$  activity-density labels.
In formulas $\cW_{j,\delta}$ is verified iff
$$\pi_T\left( \sum_{i =1}^{N/K} \, \mathds{1}_{ \big\{ u_{K,\epsilon} ^i  = j \big\} } \right) \geq \delta\frac{N}{K} \, .$$
For any integer $\ell$ we also define  $V_{j,\ell}$ which is verified iff
$$\pi_T\left( \sum_{i =1}^{N/K} \, \mathds{1}_{ \big\{ u_{K,\epsilon} ^i  = j \big\}}\right)\in[\ell,\ell+1).$$

\label{defW}
\end{definition}

We stress that, even if it is not explicited for simplicity of notation, the event $\cW_{j,\delta}$ and $\cV_{j,\ell}$ depend on the choice of $T,N,K,\epsilon$.

\begin{lem}
\label{prop: bad blocksnew}  
There  exists $C(\rho),C'(\rho)>0$ such that for any $\delta, \epsilon>0$, any $\lambda\in\bbR$ and any integer $N,K$ provided  $K>\bar K(\delta,\epsilon)=C'\frac{|\log(\delta)|}{\epsilon ^2}$ it holds 
\begin{eqnarray}
\label{comparinew3} 
\lim_{T \to \infty} \; \frac{1}{T} \log \bra \cW_{0,\delta}\ket_{\lambda}
\leq  - C\delta^2 \frac{N}{K} \exp(\lambda)\, .
\end{eqnarray}
%
\end{lem}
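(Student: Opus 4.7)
The plan is to apply the Donsker-Varadhan formula~\eqref{DV2} to the $\lambda$-modified process, which is reversible with respect to $\nu$ and has Dirichlet form $e^{\lambda}\cD_{N}$. Setting $h(\eta):=\sum_{i=1}^{N/K}\mathds{1}_{\{u^i_{K,\epsilon}(\eta)=0\}}$ for the number of bad blocks, this gives
\begin{equation*}
\limsup_{T\to\infty}\tfrac{1}{T}\log\bra\cW_{0,\delta}\ket_{\lambda}
\;\leq\;-e^{\lambda}\inf\bigl\{\cD_{N}(\sqrt g):\,g\geq 0,\,\nu(g)=1,\,\nu(gh)\geq\delta N/K\bigr\},
\end{equation*}
so the claim reduces to the variational bound $\cD_{N}(\sqrt g)\geq C\delta^{2}N/K$ for any such $g$.

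Since each $h_i$ depends only on $\eta|_{B_i}$, the labels are mutually independent under the product measure $\nu$. The defining deviation conditions for $u^i_{K,\epsilon}=0$ involve $\cR_i$ (a sum of $K$ i.i.d.\ Bernoulli$(\rho)$ variables) and $\cH_i$ (a sum of bounded, finitely-correlated local functions under $\nu$), so Hoeffding's inequality produces $p:=\max_i\nu(h_i=1)\leq C_1 \exp(-c_{1}\epsilon^{2}K)$; in particular $p\leq\delta/2$ provided $C'$ in $\bar K=C'|\log\delta|/\epsilon^{2}$ is chosen large enough.

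For the variational step, set $a_i:=\nu(gh_i)$, so $\sum_i a_i\geq\delta N/K$, and let $F_i(\eta|_{B_i}):=\nu_{B_i^c}(g\mid\eta|_{B_i})$, which satisfies $\nu_{B_i}(F_i)=1$ and $\nu_{B_i}(F_ih_i)=a_i$. The covariance Cauchy-Schwarz inequality under $\nu_{B_i}$, together with $\mathrm{Var}_{\nu_{B_i}}(h_i)\leq p$, yields $\mathrm{Var}_{\nu_{B_i}}(F_i)\geq (a_i-p)_+^{2}/p$. A local Poincar\'e inequality on each block---obtained by applying Proposition~\ref{teogap} to East/FA-1f on $B_i$ together with the boundary flip already present in $\cD_N$, and then relating the block Dirichlet form of $F_i$ to that of $\sqrt g$ by a Jensen-type contraction---gives $\sum_i\mathrm{Var}_{\nu_{B_i}}(F_i)\leq S_{\rho}^{-1}\cD_N(\sqrt g)$. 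Combining with the outer Cauchy-Schwarz estimate
\begin{equation*}
\sum_i(a_i-p)_+^{2}\;\geq\;\tfrac{K}{N}\Bigl(\sum_i(a_i-p)_+\Bigr)^{\!2}\;\geq\;(\delta-p)^{2}\tfrac{N}{K}\;\geq\;\tfrac{\delta^{2}}{4}\tfrac{N}{K}
\end{equation*}
produces $\cD_N(\sqrt g)\geq S_\rho \delta^{2}/(4p)\cdot N/K\geq C\delta^{2}N/K$, which inserted into the opening display proves the lemma.

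The main technical obstacle is the local Poincar\'e step: for a generic boundary configuration $\eta|_{B_i^c}$ the block-restricted dynamics is not ergodic (East with $\eta_{iK+1}=1$, FA-1f with both flanking sites occupied), so Proposition~\ref{teogap} cannot be invoked uniformly in $\eta|_{B_i^c}$. I will handle this by including in the block Dirichlet form the boundary flips (at site $iK$ for East; at $(i-1)K+1$ and $iK$ for FA-1f) whose rates are part of $\cD_N$, and by invoking the single-empty-site version of Proposition~\ref{teogap} conditionally on the positive-probability event that the one exterior site required for ergodicity is empty; the complementary event contributes a correction easily absorbed into the constant $C$.
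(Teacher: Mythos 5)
Your reduction to the variational estimate via Donsker--Varadhan is sound, and the Hoeffding bound on $p=\nu(\cE_i)$ matches the role of $\bar K(\delta,\epsilon)$ in the statement. But the local Poincar\'e step, which you correctly flag as the obstacle, is a genuine gap, and the proposed fix does not close it. When $\eta_{iK+1}=1$ (for East, say), the block-restricted dynamics on $B_i$ is not merely hard to analyze — it is reducible, so $\mathrm{Var}_{\nu_{B_i}}(\sqrt g\,|\,\eta_{B_i^c})$ is not controlled by the sites-in-$B_i$ portion of $\cD_N(\sqrt g)$ by \emph{any} constant. Since the bad boundary event has $\nu$-probability $\rho$ (not $o(1)$), the ``correction'' cannot be absorbed into $C$. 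There is a second, more minor, problem: you pass from $F_i=\nu_{B_i^c}(g)$ to ``the block Dirichlet form of $F_i$'' via a Jensen contraction, but $\sqrt{F_i}\ne\nu_{B_i^c}(\sqrt g)$, so the contraction points the wrong way.

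What the paper does instead is purely combinatorial before any analysis, and it is this step that is missing from your proposal. Inequality~\eqref{soloq} shows deterministically that
\begin{equation*}
\sum_{i=1}^{N/K}\mathds{1}_{\{u^i_{K,\epsilon}(\eta)=0\}}\;\leq\; V(\eta):=\sum_{i=1}^{N/K-1}\mathds{1}_{\cE_i}(\eta)\,\mathds{1}_{\cZ_{i+1}}(\eta)+\mathds{1}_{\cE_{N/K}}(\eta),
\end{equation*}
i.e.\ every bad block can be charged to a pair $(i,i+1)$ in which $B_i$ is bad-or-full \emph{and} $B_{i+1}$ contains an empty site. The event $\cZ_{i+1}$ is exactly the stochastic source of the empty boundary that your block-local argument lacks: in Lemma~\ref{prop: bad blocks} the rightmost zero in $B_{i+1}$ is located, and the conditional variance over $B_i$ is compared with the Dirichlet form on $B_i\cup B_{i+1}$ with that zero acting as a frozen empty boundary (estimate~\eqref{impero3}). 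This makes the spectral-gap step valid \emph{uniformly} in the remaining boundary data. Without this combinatorial reduction, the per-block approach cannot be repaired by conditioning on the boundary being favorable, because the unfavorable event carries non-negligible probability and zero Dirichlet control.
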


\begin{lem}
\label{lem: variance 0}
For any $\delta>0$,   there is $K(\gd)$ such that for any $K \geq  K(\gd)$, for any $\lambda\in\bbR$, any $N\geq K$ and any $\epsilon>0$ it holds
\begin{eqnarray}
\label{eq: scaling: 4.7}
\lim_{T\to \infty} \frac{1}{T} \log \bk{ \cW_{1,\delta}}_{\lambda} \leq - \frac{S_\gr\delta }{4} \exp(\lambda)\, ,
\end{eqnarray}
where $S_\gr>0$ has been defined in Proposition \ref{teogap}.
\end{lem}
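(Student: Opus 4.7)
The plan is to apply the Donsker--Varadhan variational formula to reduce the estimate to a constrained infimum over probability densities, and then to lower bound this infimum using the uniform spectral gap of Proposition~\ref{teogap} together with the fact that the event $\{u^i_{K,\epsilon}=1\}$ has very small $\nu$-probability once $K$ is large. Write $h_i(\eta):=\prod_{j\in B_i}\eta_j = \mathds{1}_{\{u^i_{K,\epsilon}=1\}}$ (which is in fact independent of $\epsilon$) and $\psi := \sum_{i=1}^{N/K} h_i$, so that $\cW_{1,\delta}=\{\pi_T(\psi)\geq \delta N/K\}$. The modified generator $\cL_{N,\lambda}=e^\lambda \cL_N$ is reversible with respect to $\nu$ with Dirichlet form $e^\lambda \cD_N$; decomposing the upper level set into the at most $N/K+1$ sets $V_{1,\ell}$ from Definition~\ref{defW} (the union-bound prefactor is polynomial in $N$ and disappears in the $T\to\infty$ limit) and applying \eqref{DV2} to each piece yields
\begin{equation*}
\limsup_{T\to\infty}\frac{1}{T}\log\bk{\cW_{1,\delta}}_\lambda \leq -e^\lambda \inf\Bigl\{\cD_N(f)\,:\, f\geq 0,\ \nu(f^2)=1,\ \sum_{i=1}^{N/K}\nu(f^2 h_i)\geq \delta\tfrac{N}{K}\Bigr\}.
\end{equation*}

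Next I would lower bound this infimum. Let $A:=\nu(f)$ and $g:=f-A$, so that $\nu(g)=0$ and $\nu(g^2)=1-A^2$. By the spectral gap bound \eqref{spgapineq},
$$\cD_N(f)\geq S_\rho\,\mbox{Var}(f) = S_\rho(1-A^2),$$
so it suffices to show $1-A^2 \geq \delta/4$ uniformly in admissible $f$ once $K$ is large. To this end expand $\nu(f^2 h_i) = A^2 \rho^K + 2A\,\nu(g h_i) + \nu(g^2 h_i)$. Cauchy--Schwarz gives $|\nu(g h_i)| \leq \sqrt{\nu(g^2)\nu(h_i)} \leq \sqrt{\rho^K}$, while the pointwise inequality $\sum_i h_i \leq N/K$ yields $\sum_i \nu(g^2 h_i) \leq (N/K)(1-A^2)$; using $|A|\leq 1$ (which follows from $\nu(f^2)=1$), summing the expansion over $i$ and dividing by $N/K$ produces
\begin{equation*}
\delta \leq \rho^K + 2\sqrt{\rho^K} + (1-A^2).
\end{equation*}
Choosing $K(\delta)$ so that $\rho^{K(\delta)} + 2\sqrt{\rho^{K(\delta)}} \leq \delta/2$ then forces $1-A^2 \geq \delta/2 \geq \delta/4$, whence $\cD_N(f) \geq S_\rho\delta/4$ and the claim follows.

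The argument uses no structural information about the constraints beyond the uniform spectral gap, so the same proof works identically for East and for FA-1f under any of the ergodic boundary conditions considered, and is independent of $\lambda$ (aside from the $e^\lambda$ prefactor inherited from the time rescaling). The only mildly delicate point is applying Donsker--Varadhan to the unbounded upper level set $\{\pi_T(\psi)\geq \delta N/K\}$ rather than to a compact interval as stated in \eqref{DV2}; this is handled by the level-set decomposition above, after which the proof reduces to a direct expansion of $\nu(f^2 h_i)$ combined with one application of Cauchy--Schwarz and the spectral gap inequality.
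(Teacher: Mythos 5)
Your proof is correct, and it reaches the same reduction (Donsker--Varadhan plus the uniform spectral gap) but then takes a genuinely more elementary route to the key inequality on the variance. The paper instead passes to the coarse-grained block-occupation variables $\omega_i=\mathds{1}_{\{u^i_{K,\epsilon}=1\}}$, projects the density onto them, uses concavity of $\sqrt{\cdot}$ to compare $\mathrm{Var}(\sqrt f)$ with $1-m_p(\sqrt g)^2$ for the coarse-grained density $g$ under the Bernoulli product measure $m_p$ with $p=\rho^K$, and then singles out one block $j$ with $\tau_j\geq\delta$ and decomposes on $\omega_j\in\{0,1\}$. Your argument avoids this projection entirely: writing the density as $f^2$ with $\nu(f^2)=1$, setting $A=\nu(f)$, $g=f-A$, and expanding $\nu(f^2 h_i)=A^2\rho^K+2A\nu(gh_i)+\nu(g^2 h_i)$, a single Cauchy--Schwarz for the cross term and the pointwise bound $\sum_i h_i\leq N/K$ give directly $1-A^2\geq\delta-\rho^K-2\sqrt{\rho^K}$, which matches the paper's $\delta-\rho^K-\rho^{K/2}$ up to an innocuous constant. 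This is shorter and arguably cleaner; what the paper's coarse-graining buys is a reusable framework (it is the same projection machinery that is invoked again elsewhere), whereas your direct expansion is tailored to this one bound but requires less scaffolding. One small remark: the level-set decomposition you introduce to apply \eqref{DV2} is unnecessary here, since $\psi=\sum_i h_i$ is pointwise bounded by $N/K$, so the event $\{\pi_T(\psi)\geq\delta N/K\}$ is already of the form $\{\pi_T(\psi)\in[\delta N/K,\,N/K]\}$; this does not affect the validity of your argument, only its economy.
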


\begin{remark}
The blocking mechanism described in Section \ref{heuristics} implies
that  
\begin{eqnarray}
\lim_{T\to \infty} \frac{1}{T} \log \bk{ \cW_{1,\delta}}_{\lambda} \geq - (1-\rho)  \exp(\lambda)\, ,
\end{eqnarray}
thus the scaling in \eqref{eq: scaling: 4.7} cannot be improved.
\end{remark}

As a consequence of Lemma \ref{prop: bad blocksnew}, we will see that
\begin{lem}\label{bad}
There exists $C(\rho)>0$ s.t. for any $\epsilon,\delta>0$, any $\alpha\in\bbR$ and any $K\geq\bar K$ with $\bar K$ specified in Lemma \ref{prop: bad blocksnew} and any $N\geq K$ it holds
\begin{eqnarray}
\label{eq: bad blocks bound}
\lim_{T \to \infty} \; \frac{1}{T} \log \mu_{\alpha,T}^{N}
\left(\cW_{0,\delta}
\right)  
 \leq  -C \delta^2\frac{N}{K}\exp(\frac{\alpha}{N})+|\alpha|(1+\frac{C}{N})\, .
\end{eqnarray}
\end{lem}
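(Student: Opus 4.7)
The plan is to reduce Lemma \ref{bad} to Lemma \ref{prop: bad blocksnew} by passing to the tilted dynamics via the Radon--Nikodym identity \eqref{cm}. Applied with $\lambda = \alpha/N$ to both the numerator and the denominator of $\mu_{\alpha,T}^{N}(\cW_{0,\delta})$, this identity yields
$$\mu_{\alpha,T}^{N}(\cW_{0,\delta}) \;=\; \frac{\bk{\mathds{1}_{\cW_{0,\delta}}\,Y_T}_{\alpha/N}}{\bk{Y_T}_{\alpha/N}}, \qquad Y_T := \exp\Bigl((e^{\alpha/N}-1)\int_0^T \cH(\eta(s))\,ds\Bigr).$$
Since $\cH(\eta)\in[0,N]$, the random variable $Y_T$ is monotone in $\int_0^T\cH(\eta(s))\,ds$ and lies between $1$ and $\exp((e^{\alpha/N}-1)NT)$, so $\sup Y_T / \inf Y_T \leq \exp(|e^{\alpha/N}-1|NT)$. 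Bounding the numerator above by $(\sup Y_T)\bk{\mathds{1}_{\cW_{0,\delta}}}_{\alpha/N}$ and the denominator below by $\inf Y_T$ yields the key sandwich estimate
$$\mu_{\alpha,T}^{N}(\cW_{0,\delta}) \;\leq\; \exp\bigl(|e^{\alpha/N}-1|\,NT\bigr)\,\bk{\mathds{1}_{\cW_{0,\delta}}}_{\alpha/N}.$$

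I then invoke Lemma \ref{prop: bad blocksnew} with $\lambda = \alpha/N$; the hypothesis $K \geq \bar K(\delta,\epsilon)$ required by that lemma is exactly the one assumed in Lemma \ref{bad}. This gives
$$\lim_{T\to\infty}\frac{1}{T}\log\bk{\mathds{1}_{\cW_{0,\delta}}}_{\alpha/N} \;\leq\; -\,C\,\delta^2\,\frac{N}{K}\exp\bigl(\tfrac{\alpha}{N}\bigr),$$
and combining with the previous display produces
$$\limsup_{T\to\infty}\frac{1}{T}\log\mu_{\alpha,T}^{N}(\cW_{0,\delta}) \;\leq\; |e^{\alpha/N}-1|\,N \;-\; C\,\delta^2\,\frac{N}{K}\exp\bigl(\tfrac{\alpha}{N}\bigr).$$

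The proof is finished by the elementary estimate $|e^{\alpha/N}-1|\,N \leq |\alpha|\,(1+C/N)$, which follows from $|e^x-1|\leq|x|e^{|x|}$ together with a first-order Taylor expansion of $e^{|\alpha|/N}$ about $0$, the constant $C$ absorbing the Taylor remainder. The argument is essentially bookkeeping once Lemma \ref{prop: bad blocksnew} is in hand; the only mild subtlety lies in tracking the additive error $|\alpha|(1+C/N)$, which is inconsequential for the downstream applications where $\alpha$ ranges in a bounded set while $N\to\infty$.
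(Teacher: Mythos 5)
Your proof is correct in its overall mechanism and arrives at the same estimate, but it routes around one of the paper's ingredients. Both arguments pass to the tilted dynamics via the Radon--Nikodym identity \eqref{cm} and then conclude with Lemma \ref{prop: bad blocksnew}; the difference lies in how the prefactor multiplying $\bk{\mathds{1}_{\cW_{0,\delta}}}_{\alpha/N}$ is obtained. The paper bounds the denominator $\bk{\exp(\tfrac{\alpha}{N}\cA(T))}$ from below via Lemma \ref{easy}, i.e.\ Jensen's inequality on the untilted generating function together with the explicit computation of $\bk{\cA(T)}$, giving the factor $\exp(-\alpha\bbA T(1+C/N))$ when $\alpha\leq 0$. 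You instead push both numerator and denominator into the tilted representation and use only the crude pointwise bounds $0\leq\cH\leq N$, yielding the factor $\exp(N|e^{\alpha/N}-1|\,T)$. This bypasses Lemma \ref{easy} entirely, which is a genuine (if small) simplification; the cost is a slightly larger prefactor, since $N|e^{\alpha/N}-1|$ is comparable to $|\alpha|$ rather than $|\alpha|\bbA<|\alpha|$.

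One caveat about your closing step. The numerical inequality $N|e^{\alpha/N}-1|\leq|\alpha|(1+C/N)$ with $C=C(\rho)$ independent of $\alpha$ is exact for $\alpha\leq 0$ — there one even has $N(1-e^{\alpha/N})\leq|\alpha|$ with no correction — but it fails for $\alpha>0$ once $\alpha$ is permitted to grow with $N$: since $N(e^{\alpha/N}-1)\geq\alpha$, the inequality $N(e^{\alpha/N}-1)\leq|\alpha|(1+C/N)$ is equivalent to $e^{\alpha/N}\leq 1+C/N$, which forces $\alpha$ to lie in a bounded set. So the "$C$ absorbs the Taylor remainder" remark is not literally correct if $\alpha$ ranges over all of $\bbR$ with $C$ depending only on $\rho$. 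This is not a defect introduced by your argument: the paper's own $\alpha>0$ branch displays the bound $\exp(-\alpha\bbA T+\alpha T)$, which implicitly makes the same replacement of $N(e^{\alpha/N}-1)$ by $\alpha$, and in all downstream uses $\alpha$ is held fixed while $N\to\infty$. You should simply say explicitly that the constant absorbs the remainder uniformly only over bounded ranges of $\alpha$, which is the regime of interest.
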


\medskip

We start by proving a preliminary result. For $i\in [1,N/K]$, let $\mathcal{Z}_i$ be the event  that  there exists at least one empty site inside the box $B_i$ and $\cE_i$ 
be the event that is verified iff $ u^i_{K,\epsilon}\in\{1,0\}$ and define the function $V:\Omega_N\to\bbR$ as
\begin{equation}\label{defVeta}
V(\eta):=\sum_{i =1}^{N/K-1} \, \mathds{1}_{\mathcal{E}_i } (\eta)\mathds{1}_{\mathcal{Z}_{i+1}}(\eta)+\mathds{1}_{\mathcal{E}_{N/K} } (\eta) \, .
\end{equation}


%
\begin{lem}
\label{prop: bad blocks} 
Set $m:=\nu(\cE_i)$.
There exists a constant $C=C(\rho)>0$ such that uniformly in  $N$ for any $\lambda\in\bbR$ it holds
\begin{eqnarray}
\label{compari} 
\lim_{T \to \infty} \; \frac{1}{T} \log 
\left \bra\pi_{T}(V) \geq x + \mm \frac{N}{K} \right\ket_{\lambda}
 \leq  - C \frac{K}{N} x^2 \exp(\lambda) \, .
\end{eqnarray}
\end{lem}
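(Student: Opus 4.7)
The plan is to apply the Donsker--Varadhan variational formula together with the uniform spectral gap of Proposition \ref{teogap}. Since the modified dynamics with generator $\cL_{N,\lambda}=e^{\lambda}\cL_N$ is reversible w.r.t.\ $\nu$ with Dirichlet form $e^{\lambda}\cD_N$, formula \eqref{DV2} gives
\[
\lim_{T\to\infty}\frac{1}{T}\log\bra \pi_T(V)\geq \mm\tfrac{N}{K}+x\ket_\lambda
=-\inf_{\substack{\phi\geq 0,\ \nu(\phi^2)=1\\ \nu(\phi^2V)\geq \mm N/K+x}} e^{\lambda}\cD_N(\phi),
\]
so the lemma reduces to a lower bound on this constrained infimum. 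By Legendre duality, such a bound is equivalent to a Gaussian upper bound
\[
\Psi(\theta):=\sup_{\phi:\ \nu(\phi^2)=1}\bigl\{\theta\,\nu\bigl(\phi^2(V-\nu(V))\bigr)-e^{\lambda}\cD_N(\phi)\bigr\}\ \leq\ \frac{C'(\rho)\,\theta^2\,N}{K\,e^{\lambda}}
\]
valid in a suitable range of $\theta$, which I then combine with the exponential Chebyshev estimate $\bra\pi_T(V)\geq \mm N/K+x\ket_\lambda\leq e^{-T(\theta x-\Psi(\theta))}$ and optimise over $\theta>0$.

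The crucial auxiliary input is the variance estimate $\mbox{Var}_\nu(V)\leq C_0\,N/K$. Writing $V=\sum_{i=1}^{N/K}W_i$ with $W_i=\mathds{1}_{\cE_i}\mathds{1}_{\cZ_{i+1}}$ for $i<N/K$ and $W_{N/K}=\mathds{1}_{\cE_{N/K}}$, each $W_i$ is a $\{0,1\}$-valued function of the configuration in $B_i\cup B_{i+1}$; under the product measure $\nu$ the variables $W_i$ and $W_j$ are therefore independent whenever $|i-j|\geq 2$, and a direct covariance computation yields the claimed estimate with a numerical constant $C_0$.

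To control $\Psi(\theta)$, I decompose any admissible $\phi$ as $\phi=a+\psi$ with $a=\nu(\phi)\in[0,1]$ and $\nu(\psi)=0$, so that $\nu(\psi^2)=1-a^2$ and $\cD_N(\phi)=\cD_N(\psi)\geq S_\rho\nu(\psi^2)$ by Proposition \ref{teogap}. Expanding,
\[
\nu\bigl(\phi^2(V-\nu(V))\bigr)=2a\,\nu\bigl(\psi(V-\nu(V))\bigr)+\nu\bigl(\psi^2(V-\nu(V))\bigr),
\]
Cauchy--Schwarz controls the first term by $2a\sqrt{\nu(\psi^2)\,\mbox{Var}_\nu(V)}$, while the deterministic bound $\|V-\nu(V)\|_\infty\leq N/K$ controls the second by $(N/K)\nu(\psi^2)$. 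A Young inequality then absorbs the cross-term into the Dirichlet part, producing
\[
\Psi(\theta)\ \leq\ \frac{\theta^2\,\mbox{Var}_\nu(V)}{e^{\lambda} S_\rho-\theta(N/K)}\ \leq\ \frac{2C_0\,\theta^2\,N}{K\,e^{\lambda} S_\rho}\qquad\text{for}\ \theta\leq \frac{e^{\lambda} S_\rho K}{2N}.
\]
Inserting into the Chebyshev bound and optimising $\theta$ over this admissible range yields the stated estimate $-C(\rho)(K/N)x^2 e^{\lambda}$.

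The main obstacle is precisely the term $(N/K)\nu(\psi^2)$ appearing in the expansion: it forces the restriction $\theta\leq e^{\lambda} S_\rho K/(2N)$, so without a sharp variance bound one would only be able to close the optimisation with a $K^2/N^2$ scaling in front of $x^2$. Exploiting the finite-range dependence of the $W_i$'s under the product measure $\nu$ to secure $\mbox{Var}_\nu(V)\leq C_0 N/K$ is the key step that reconciles the admissible range of $\theta$ with the target $K/N$ scaling and delivers the correct quadratic constant.
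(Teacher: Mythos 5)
Your variance bound $\mathrm{Var}_\nu(V)\leq C_0 N/K$ is correct, and so is the intermediate estimate $\Psi(\theta)\leq 2C_0\theta^2 N/(K e^{\lambda}S_\rho)$ for $\theta\leq e^{\lambda}S_\rho K/(2N)$. The gap is in the final sentence: optimising the Chebyshev bound over that \emph{restricted} range of $\theta$ does not deliver the claimed $x^2$ rate for the $x$ that actually matter. The unconstrained minimiser of $-\theta x+\Psi(\theta)$ is $\theta^*=x K e^{\lambda}S_\rho/(4C_0 N)$, which exceeds your admissible threshold as soon as $x>2C_0$. For such $x$ the infimum is pinned at the endpoint $\theta=e^{\lambda}S_\rho K/(2N)$ and gives only $-\tfrac14 e^{\lambda}S_\rho (K/N)\,x$, i.e.\ a rate linear in $x$, not quadratic. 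This is fatal: the lemma is invoked in the proof of Lemma \ref{prop: bad blocksnew} with $x=\delta N/(6K)=\Theta(N/K)\gg 1$, and in that regime your argument yields an $O(1)$ rate instead of the required $O(N/K)$, so the downstream conclusion would not follow.

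The obstruction is structural, not a matter of sharpening constants. Once you decompose $\phi=a+\psi$ and control the Dirichlet form through the \emph{global} spectral gap, the term $\nu(\psi^2(V-\nu(V)))$ can only be bounded by $\|V-\nu(V)\|_\infty\,\nu(\psi^2)\approx(N/K)\nu(\psi^2)$, and absorbing $\theta(N/K)\nu(\psi^2)$ into $e^{\lambda}S_\rho\nu(\psi^2)$ unavoidably caps $\theta$ at order $K/N$; no variance estimate on $V$ lifts this cap. The paper instead never invokes the global gap on $\Lambda_N$. It bounds $\nu\big(f\,\mathds{1}_{\cE_i}\mathds{1}_{\cZ_{i+1}}\big)$ block by block by $m$ plus a constant times $\sqrt{\cD_{K,i}(\sqrt f)/S_\rho}$, where the spectral gap is applied on the random sub-interval $L_{i,j}$ to the left of the rightmost zero in $B_{i+1}$ (equations \eqref{divido}--\eqref{impero3}). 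Summing over $i$ and completing the square \emph{separately} in each $\cD_{K,i}(\sqrt f)$ yields $\gamma\,\nu(fV)-e^{\lambda}\cD_N(\sqrt f)\leq \gamma m\,N/K+2\gamma^2 N/(K S_\rho e^{\lambda})$ for \emph{every} $\gamma>0$, with no upper restriction. Only this unconditional quadratic bound survives the optimisation at $x=\Theta(N/K)$ and produces $-C(K/N)x^2e^{\lambda}$ uniformly.
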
 

\begin{proof}

 
For any $\gga >0$
\begin{eqnarray}
\label{eq: chebyshev}
\left\bra\pi_T(V) \geq x + \mm \frac{N}{K}\right\ket_{\lambda}
\leq
\exp \left( - T \gga \left( x + \mm \frac{N}{K} \right) \right) \bk{ \exp \left( \gga \int_0^T dt \,  V(\eta(t)) \right) }_{\lambda}  \, .
\end{eqnarray}
Then using \eqref{dv} we get
\begin{eqnarray}
\label{fey} 
\lim_{T \to \infty} \frac{1}{T}\log \; \left\bra\pi_T(V)  \geq x +\frac{N}{K}\right\ket_{\lambda}
 \leq -\gga \left( x + \mm \frac{N}{K} \right)+ \sup_{f} \left\{ \gga  \nu \big( f V \big) - \exp(\lambda) \cD_N (\sqrt{f}) \right \},\nonumber\\
 \end{eqnarray}
where the supremum is over the $f:\Omega_N\to\bbR$ such that $\nu(f) =1$ and $f \geq 0$.

Notice that
\begin{equation}
\label{divido}
\nu\big (f \mathds{1}_{\mathcal{E}_i }\mathds{1}_{\mathcal{Z}_{i+1}}\big )=\nu\big (\mathds{1}_{\mathcal{E}_i } \mathds{1}_{\mathcal{Z}_{i+1}}\nu_{B_i}(\sqrt f)^2\big )+
\nu\left (\mathds{1}_{\mathcal{E}_i } \mathds{1}_{\mathcal{Z}_{i+1}}\left[f-\nu_{B_i}(\sqrt f)^2\right]\right ).
\end{equation}
The first term in \eqref{divido} can be bounded from above by
\begin{equation}
\label{impero1}
\nu\big (\mathds{1}_{\mathcal{E}_i }\mathds{1}_{\mathcal{Z}_{i+1}}\nu_{B_i}(\sqrt f)^2\big )\leq 
\nu\big (\mathds{1}_{\mathcal{E}_i } \nu_{B_i}(\sqrt f)^2\big )\leq \nu(f)m=m \, ,
\end{equation}
where we use the fact that $\nu_{B_i}(\sqrt f)^2$ does not depend on the variables inside $B_i$, $\mathcal{E}_i$ does not depend on the variables outside $B_i$ and the fact that 
$\nu(\nu_{B_i}(\sqrt f)^2)\leq\nu( \nu_{B_i}(f))=\nu(f)=1$.

On the other hand for  the second term in \eqref{divido}  we have
\begin{eqnarray}\label{impero2}
\nu\left (\mathds{1}_{\mathcal{E}_i } \mathds{1}_{\mathcal{Z}_{i+1}}\left[f-\nu_{B_i}(\sqrt f)^2\right]\right ) & \leq 
\nu\left (\mathds{1}_{\mathcal{Z}_{i+1}}(\eta)\left[\sqrt f-\nu_{B_i}(\sqrt f)\right]^2\right )^{1/2}\nu\left (\left[\sqrt f+\nu_{B_i}(\sqrt f)\right]^2\right )^{1/2}\nonumber\\
\leq 2~\nu\left (\mathds{1}_{\mathcal{Z}_{i+1}}\mbox{Var}_{B_i}(\sqrt{f})\right )^{1/2} \, ,
\end{eqnarray}
where to obtain the first inequality we upper bound $\mathds{1}_{\mathcal{E}_i }$ by one and  we 
use Cauchy-Schwartz while for the second inequality we use 
the fact that the event $\mathcal {Z}_{i+1}$ depends only on the variables inside $B_{i+1}$, thus it is independent on the variables in the block $B_{i}$.
Then we notice that $\mathds{1}_{\mathcal{Z}_{i+1}}=1$ guarantees the existence of (at least) one zero inside $B_{i+1}$ and we let $\xi$ be the position of the first zero starting from the right border of this box. Thus 
\begin{equation}\label{us1}
\nu\left (\mathds{1}_{\mathcal{Z}_{i+1}}\mbox{Var}_{B_i}(\sqrt{f})\right )=\sum_{j=1}^{K}\nu\left ( \mathds{1}_{\xi=j+iK}\mbox{Var}_{B_i}(\sqrt{f})\right ) \, ,
\end{equation}
and by letting $L_{i,j}$ and $R^{i,j}$ be the subset of $B_i\cup B_{i+1}$ to the left (respectively right) of $j+iK$, namely
$L_{i,j}:=B_i\cup [iK,\dots, j+iK-1]$ and $R_{i,j}=B_{i+1}\setminus L_{i,j}$ we have
\begin{eqnarray}
\nu\left ( \mathds{1}_{\xi=j+iK}\mbox{Var}_{B_i}(\sqrt{f})\right )
\leq \sum_{\eta^r} \nu(\eta^r)\mathds{1}_{\xi=j+iK}\sum_{\eta^l}\nu(\eta^l)\mbox{Var}_{B_i}(\sqrt{f})\nonumber\\
\leq \sum_{\eta^r} \nu(\eta^r)\mathds{1}_{\xi=j+iK}\mbox{Var}_{L_{i,j}}(\sqrt{f}) \, ,
\end{eqnarray}
where $\eta^r$ ($\eta^l$) is the configuration restricted to $B^{jr}$ ($B^{jl}$) and we use the product form of $\nu$ and, in the last passage,  the convexity of the variance and the fact that $B_i \subset B^{jl}$.
Then  by using the spectral gap inequality \eqref{spgapineq} for the model on $L_{i,j}$ with a frozen zero at the right boundary  together with the expression of the Dirichlet form \eqref{eq: dirichlet} we get (recall that $S_{\rho}>0$ is the lower bound on the infimum over $N$ of the spectral gap of $\Lambda_N$ at density $\rho$)
\begin{equation}
\label{us}
\mbox{Var}_{L_{i,j}}(\sqrt{f})\leq S_{\rho}^{-1} \sum_{\eta^l}\nu(\eta^l)(\sum_{x\in L_{i,j}} r_x^l(\eta^l) \mbox{Var}_x(\sqrt{f})) \, ,
\end{equation}
 where 
 we denote by $r_x^l$ the constraints for the model on $L_{i,j}$ with empty boundary condition on the right boundary, namely $r_x^l(\eta)=1$ if $x=j+iK-1$ and otherwise
$r_x^l(\eta^l)=1-\eta^l_{x+1}$ if we are considering East or $r_x^l(\eta^l)=1-\eta^l_{x+1}\eta^l_{x-1}$ if we are considering FA-1f.
Then we note that for any $\eta$ such that  $\xi(\eta)=j+iK$ and which equals $\eta^l$ on $B^{jl}$, it holds $c_x^l(\eta^l)=c_x(\eta)$ for any $x\in L_{i,j}$.
Thus
we can insert \eqref{us} into \eqref{us1} and use this observation to get
\begin{eqnarray}\label{impero3}
\nu\left ( \mathds{1}_{\mathcal{Z}_{i+1}} \mbox{Var}_{B_i}(\sqrt{f})\right )
\leq \frac{1}{S_{\rho}}\nu \left( \sum_{x\in B_i\cup B_{i+1}} r_x \mbox{Var}_x(\sqrt{f}) \right) \, .
\end{eqnarray}

Then \eqref{divido}, \eqref{impero1}, \eqref{impero2} and \eqref{impero3} yield
\begin{equation}
\nu\big (f \mathds{1}_{\mathcal{E}_i }\mathds{1}_{\mathcal{Z}_{i+1}}\big )
\leq m+2\sqrt{\frac{\nu(\sum_{x\in B_i\cup B_{i+1}} r_x \mbox{Var}_x(\sqrt{f}))}{S_{\rho}}}\end{equation}
and for $\nu\big (f \mathds{1}_{\mathcal{E}_{N/K}} (\eta)\big )$ the same upper bound can be obtained along the same lines (actually easily because the boundary condition guarantees a zero at the right border of $B_{N/K}$).
Thus
for any function $f$ s.t. $
\nu(f)=1$ and $f>0$ 
it holds
\begin{eqnarray}
& \gamma\nu(fV)-\exp(\lambda)\cD(\sqrt f)\leq \gamma m \frac{N}{K}+ \sum_{i=1}^{N/K}\left[\frac{2\gamma \sqrt 2 }{\sqrt S_{\rho}} \sqrt {\cD_{K,i}(\sqrt f)}-\exp(\lambda)\cD_{K,i}(\sqrt f)\right] \\
& = \frac{N}{K}\left[\gamma m+\frac{ 2 \gamma^2}{ S_{\rho}\exp(\lambda)} -\frac{K}{N}\sum_{i=1}^{N/K}\left(\exp(\lambda/2)\sqrt {\cD_{K,i}(\sqrt f)}-\frac{\sqrt 2\gamma }{\sqrt S_{\rho}\exp(\lambda/2)}\right)^2\right]\leq \frac{N}{K}\left[\gamma m+\frac{ 2\gamma^2 }{ S_{\rho}\exp(\lambda)}\right]\nonumber
\end{eqnarray}
with
$\cD_{K,i}(\sqrt f)$
the contribution to the Dirichlet form coming from the sites in the box $B_i$, namely
\begin{eqnarray*}
\cD_{K,i} \left( \sqrt{f }   \right) = \nu \left( \sum_{j \in B_i}
 c_j  (\eta)  \big( \sqrt{f(\eta^j)} - \sqrt{f (\eta)} \big)^2  \right) \, .
\end{eqnarray*}

Then by using \eqref{fey} and optimizing over $\gamma$ we get
\begin{equation}
\lim_{T\to\infty}\frac{1}{T}\log\left\bra
 \frac{1}{T} \int_0^T dt \,  V(\eta(t))  \geq x + \mm \frac{N}{K}\right\ket_{\lambda}
 \leq - \frac{K}{N}\frac{S_{\rho}\exp(\lambda)}{8}x^2 \, .
 \end{equation}
 This completes the proof.
 \end{proof}

%

We are now ready to prove the main results of this section.

\begin{proof} [Proof of Lemma \ref{prop: bad blocksnew} ]
We recall the definition \eqref{defVeta} for the function $V$, where $\mathcal{E}_i$ is the event which is verified iff $u^i_{K,\gep} \in\{0,1\}$. Thus from the definition of the activity-density labels it follows immediately that
 the probability of $\mathcal{E}_i$ goes to zero as the size of the box, $K$, goes to infinity and it is bounded from above by $\exp(-K\epsilon^2 C)$.
 Thus provided $K\geq \bar K$ it holds $\nu(\mathds{1}_{\mathcal{E}_i})<\delta/6$.
 Thanks to these facts  we can apply Lemma \ref{prop: bad blocks} with the choice $x=\delta N/(6K)$ to obtain that
 \begin{eqnarray}
\label{compari6} 
\lim_{T \to \infty} \; \frac{1}{T} \log 
\left \bra\pi_{T}\left(V\right) \geq  \frac{\delta N}{3K} \right\ket_{\lambda}
 \leq  - C \frac{N}{K} \delta^2 \exp(\lambda) \, ,
\end{eqnarray}
 where $V$ is defined in \eqref{defVeta}.
 We will now prove that the following inequality holds for any $\eta\in\Omega_N$
 \begin{equation}
 \label{soloq}
\sum_{i=1}^{N/K}
\mathds{1}_{u_{K,\epsilon}^i=0}(\eta)\leq V(\eta) \, .
\end{equation}
Then collecting \eqref{compari6} and \eqref{soloq}  the proof of \eqref{comparinew3} is completed.

\smallskip
We are therefore left with the proof of \eqref{soloq} which immediately follows from the following observation. Let $i<j$ be such that $u^i_{K,\epsilon}=u^j_{K,\epsilon}=0$ and $u^k_{K,\epsilon}\neq 0$ for all $k\in [i+1,j-1]$. Then there exists $k\in [i,j-1]$ such that $\mathds{1}_{\mathcal{E}_{k}}\mathds{1}_{Z_{k+1}}=1$. In order to prove this statement we consider separately  the case (a) $j=i+1$  and (b) $j>i+1$. In case (a) the result holds since $u^{i}_{K,\epsilon}=0$ implies $\mathds{1}_{\mathcal{E}_{i}}=1$ and $u^{i+1}_{K,\epsilon}=0$ implies $\mathds{1}_{\mathcal{Z}_{i+1}}=1$, thus $ \mathds{1}_{\mathcal{E}_{i}}\mathds{1}_{Z_{i+1}}=1$. In case (b)
 we distinguish subcases (b1) $u^{k}_{K,\epsilon}=1$ for all $k\in [i+1,j-1]$ and (b2) there exists at least one site $k\in [i+1,j-1]$ such that $u^{k}_{K,\epsilon}=-1$.
If (b1) holds then $ \mathds{1}_{\mathcal{E}_{j-1}}\mathds{1}_{Z_{j}}=1$
(since $u^{j-1}_{K,\epsilon}=1$ implies  $\mathds{1}_{\mathcal{E}_{j-1}}=1$ and $u^{j}_{K,\epsilon}=0$ implies $\mathds{1}_{Z_{j}}=1$) and the desired result is proven.
In case (b2) if we let $\ell$ be the smallest index in $  [i+1,j-1]$ such that $u^{\ell}_{K,\epsilon}=-1$ then $ \mathds{1}_{\mathcal{E}_{\ell-1}}\mathds{1}_{Z_{\ell}}=1$
(indeed $u^{\ell-1}_{K,\epsilon}\in\{0,1\}$ and therefore $\mathds{1}_{\mathcal{E}_{\ell-1}}=1$  and  $u^{\ell}_{K,\epsilon}=-1$ implies $\mathds{1}_{Z_{\ell}}=1$) and again the desired result is proven.

\end{proof}

\begin{proof} [Proof of Lemma \ref{lem: variance 0}]

By Donsker-Varadhan large deviation principle \eqref{DV2} and the spectral gap inequality \eqref{spgapineq}
 we get
\begin{eqnarray}
\label{princ}
\lim_{T\to \infty} \frac{1}{T} \log \bk{ \cW_{1,\delta} }_{\lambda} \leq -\exp(\lambda)\inf_{f}\left\{\cD_{N}(\sqrt f)\right\}\leq - \exp(\lambda)S_\gr \inf_f\left\{\mbox{Var}(\sqrt f)\right\} \, ,
\end{eqnarray}
where the infimum is over the positive functions $f$ such that 
\begin{equation}
\label{conditionf}
\nu(f)=1, \qquad \quad 
\nu \left( f\sum_{i=1}^{N/K}\mathds{1}_{\{u^i_{K,\epsilon}=1\}} \right)
\geq \delta\frac{N}{K} \, .
\end{equation}
We will now show that under the latter constraint 
\begin{equation}
\label{added}
\mbox{Var}(\sqrt f)  \geq \delta -\rho^K-\rho^{K/2}.
\end{equation}
Then  by choosing $K$ sufficiently large so that $\rho^{K}\leq \delta^2/4$ by collecting \eqref{princ} and  \eqref{added} 
the desired result follows (note that $\rho^K\leq \delta^2/4$ implies $\rho^K\leq \delta/4$ since we only have to deal with the case $\delta \leq 1$). We are therefore left with proving  that under conditions \eqref{conditionf} the inequality \eqref{added} holds.
\smallskip

For each box $B_i$ with $i\in[1,N/K]$ we define the coarse grained variable $\omega_i$ by 
$$\omega_i=\mathds{1}_{\{u_{K,\epsilon}^i=1\}},$$
and let $p:=\nu(u_{K,\epsilon}^i=1)=\rho^{K}$ and $m_p$ be the Bernoulli product measure with density $p$ on $\Omega_{N/K}$.
The marginal of $\nu(\eta) f(\eta)$ on the coarse grained variables is given by $m_p (\go) g(\go)$ with 
\begin{eqnarray}\label{defg}
g(\go): = \frac{1}{m_p(\go)} \; \sum_{\eta \sim \go} f(\eta) \nu(\eta)\, ,
\end{eqnarray}
where the sum is over the $\eta$'s compatible with $\go$. Then
\begin{eqnarray}
\label{eq: average variance}
\sqrt{g(\go)} =\sqrt{\sum_{\eta \sim \go} 
\frac{\nu(\eta)}{m_p(\go)} f(\eta)} \geq \sum_{\eta \sim \go}  \frac{\nu(\eta)}{m_p(\go)}
\sqrt{f(\eta)} \, ,
\end{eqnarray}
where we used the fact that for each fixed $\omega$ it holds $ \sum_{\eta \sim \go} 
\frac{\nu(\eta)}{m_p(\go)}=1$ and the concavity of the square root.
Then from \eqref{eq: average variance} we get
\begin{equation}
\label{eccola}
\mbox{Var}(\sqrt f)=1 - \nu ( \sqrt{f})^2  \geq 1 - m_p ( \sqrt{g})^2.
\end{equation}
Thus in order to prove \eqref{added} it is sufficient to show that  
\begin{equation}
\label{tobe}
m_p(\sqrt g)^2 \leq 1-\delta+p+\sqrt p \, ,
\end{equation}
when $g$ is defined as in \eqref{defg} and  $f$  satisfies conditions \eqref{conditionf} which imply
$$
m_p(g\sum_{i=1}^{N/K}\omega_i)\geq \delta N/K \, .
$$ 
 The latter inequality implies that  there exists 
(at least) a box  $B_j$ with $j\in[1,N/K]$ such that  $\gt_j: = m_p \left(g (\go) \go_j \right) \geq \delta$. 
Let $j$ be the rightmost box which verifies this constraint and rewrite each configuration $\omega$ via the couple $(\omega_j,\sigma)$ with $\sigma=  \{ \go_i \}_{i \not = j}$.  Thus
 \begin{equation}\label{mpg}
 m_p (\sqrt{g}) = p Z_1 + (1-p) Z_2
 \end{equation}
 where
 \begin{eqnarray*}
Z_1: = \sum_\gs  m_p^j(\gs) \sqrt{g (1,\gs)}
\quad {\rm and}  \quad Z_2 :=  \sum_\gs   m_p^j(\sigma) \sqrt{ g (0,\gs)} \, .
\end{eqnarray*}
where now $m_p^j$ denotes  the product measure with density $p$ on $\{1,\dots, N/K \}\setminus j$.
By Jensen inequality  
\begin{eqnarray}
\label{z12}Z_1^2  \leq  \sum_\gs  m_p^j(\gs) g(1,\gs) = \frac{\gt_j}{p}, 
\qquad 
Z_2^2  \leq  \sum_\gs   m_p^j(\gs) g (0,\gs) = \frac{1-\gt_j}{1-p} \, .
\end{eqnarray}
Thus \eqref{mpg} yields
\begin{eqnarray*}
m_p (\sqrt{g})^2 &=& p^2 Z_1^2 + (1-p)^2 Z_2^2 +  2 p(1-p) Z_1  Z_2\\
&\leq& p \gt_j + (1-p) (1-\gt_j) + 2 \sqrt{p (1-p)} \; \sqrt{\gt_j (1-\gt_j)} \\
&\leq& 1- \gt_j + p + \sqrt{p}\leq 1- \delta+ p + \sqrt{p}\, ,
\end{eqnarray*}
where we used \eqref{z12} for the first inequality, the fact that $\gt_j \leq 1$ thus $\gt_j (1-\gt_j) \leq 1/4 $ and that $p<1$ for the second inequality and finally the bound $\gt_j\geq \delta$ for the last inequality. Thus \eqref{tobe} is proven and the proof of the Lemma is concluded.

\end{proof}

Before proving the last result of this section, Lemma \ref{bad}, we state separately a result
which will be also useful in other proofs.

\begin{lem}
\label{easy}
Consider East and FA-1f model in $d=1$.
There exists  $C>0$ s.t. for any $N$ if $\alpha<0$
\begin{eqnarray}
\label{b1}
\bk{\exp \left( \frac{\ga}{N} \cA(T) \right)} \geq \exp\left( \ga \bbA T \left(1+\frac{C}{N}\right)\right) \, 
\end{eqnarray}
 if $\alpha>0$
 \begin{eqnarray}
\label{b1}
\bk{\exp \left( \frac{\ga}{N} \cA(T) \right)} \geq \exp\left( \ga \bbA T  \left(1-\frac{C}{N}\right)\right) \, 
\end{eqnarray}
\end{lem}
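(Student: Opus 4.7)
The natural approach is Jensen's inequality applied to the exponential moment. By convexity, for every $\lambda \in \bbR$,
\begin{equation*}
\bk{\exp \left( \tfrac{\alpha}{N} \cA(T) \right)} \;\geq\; \exp \left( \tfrac{\alpha}{N} \bk{\cA(T)} \right) \, ,
\end{equation*}
so the problem reduces to computing $\bk{\cA(T)}$ and comparing it with $N \bbA T$.

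To compute the mean activity I would use the martingale decomposition mentioned right after \eqref{defactivity}: for each site $i$, the process $\cA_i(t)-\int_0^t c_i(\eta(s))\,ds$ is a mean-zero martingale, and since the initial law is the invariant measure $\nu$, taking expectations gives $\bk{\cA_i(T)} = T \, \nu(c_i)$. Summing in $i$ yields $\bk{\cA(T)} = T\, \nu(\cH)$ with $\cH$ as in \eqref{eq: H}. By translation invariance (together with the product form of $\nu$) one has $\nu(c_j) = \bbA$ for every bulk site $j \in [2,N-1]$, whereas at the two boundary sites the rate functions $r_1, r_N$ are modified (see \eqref{ceast}--\eqref{cfas}) and each contributes a constant independent of $N$. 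Setting $\Delta := \nu(\cH) - N \bbA$, a direct calculation from the explicit formulas for $r_1$ and $r_N$ gives $|\Delta| \leq C_0$ for some constant $C_0 = C_0(\rho)$ depending only on the density and on the model/boundary condition chosen.

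Plugging this into Jensen:
\begin{equation*}
\bk{\exp \left( \tfrac{\alpha}{N} \cA(T) \right)} \;\geq\; \exp \left( \alpha \bbA T \Bigl( 1 + \tfrac{\Delta}{N \bbA} \Bigr) \right) \, .
\end{equation*}
Pick $C := C_0/\bbA + 1$, which is independent of $N$, $T$ and $\alpha$. For $\alpha <0$, since $\alpha \bbA T < 0$ and $|\Delta/(N\bbA)| \leq C/N$, the factor $1 + \Delta/(N\bbA)$ is no worse than $1 + C/N$, yielding $\exp(\alpha \bbA T (1 + C/N))$; for $\alpha > 0$ the same inequality gives $\exp(\alpha \bbA T (1 - C/N))$. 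One has to split on the sign of $\Delta$ to check both inequalities rigorously, but this is a one-line case check in each regime.

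I do not anticipate any genuine obstacle: the only place one must be careful is in confirming that the boundary correction $\Delta = \nu(\cH) - N \bbA$ is $O(1)$ uniformly in $N$, which reduces to evaluating $\nu(c_1)$ and $\nu(c_N)$ for each of the boundary conditions \eqref{ceast}--\eqref{cfas}. The rest of the argument is Jensen plus algebra.
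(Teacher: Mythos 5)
Your proof is correct and follows exactly the same route as the paper's: Jensen's inequality applied to the exponential moment, followed by the computation $\bk{\cA(T)} = T\sum_i \nu(c_i)$ and the observation that the two boundary sites contribute an $O(1)$ correction to $N\bbA$. The paper just writes out the explicit boundary values $\nu(c_1),\nu(c_N)$ for each model rather than stating the abstract bound $|\Delta|\leq C_0$, but the argument is identical.
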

\begin{proof}
By Jensen inequality it holds
\begin{equation}\label{jensen}
\bk{\exp \left( \frac{\ga}{N} \cA(T) \right)} \geq \exp \left( \frac{\ga}{N} \bk{ \cA(T)} \right)\end{equation}
Then 
from the definition of the process and recalling \eqref{bbA} it follows immediately that \begin{equation}
<\cA(t)>=t\sum_{i=1}^N\nu(c_i(\eta))=t (N-2)\bbA+t\nu(c_1(\eta))+t\nu(c_N(\eta))
\end{equation}
where $\bbA=\nu(c_1(\eta))=\rho(1-\rho)^2$ and $\nu(c_N(\eta))=\rho(1-\rho)$ for East; $\bbA=\rho(1-\rho)(1-\rho^2)$, $\nu(c_1(\eta))=\nu(c_N(\eta))=\rho(1-\rho)$ for 
FA-1f with two zeros at the boundary; $\bbA=\rho(1-\rho)(1-\rho^2)$, $\nu(c_1(\eta))=\rho(1-\rho)^2$ and $\nu(c_N(\eta))=\rho(1-\rho)$ for FA-1f with one zero at the (right) boundary.
Thus
\begin{equation}
\label{boundA}
\bbA t N  \left(1-\frac{C}{N}\right)\leq <\cA(t)>\leq \bbA t N   \left(1+\frac{C}{N}\right)\end{equation}
and \eqref{b1} immediately follows from these bounds and inequality \eqref{jensen}. 
\end{proof}

\begin{proof}[Proof of Lemma \ref{bad}]

From \eqref{eq: measure} and Lemma \ref{easy} for any $\ga\leq 0$  

\begin{eqnarray*}
\mu_{\alpha,T}^{N}(\cW_{0,\delta}) \leq   \exp( -\alpha\bbA T(1+\frac{C}{N})) \bk{ \mathds{1}_{\cW_{0,\delta}}  }_{\frac{\ga}{N}} \,
\end{eqnarray*}
if instead $\alpha>0$ by using
the fact that $\cH  (\eta(s))\leq N$  we get
\begin{eqnarray*}
\mu_{\alpha,T}^{N}(\cW_{0,\delta}) \leq   \exp( -\alpha \bbA T+\alpha T) \bk{ \mathds{1}_{\cW_{0,\delta}}  }_{\frac{\ga}{N}} \, .
\end{eqnarray*}
The result immediately follows from the above inequalities and
applying Lemma \ref{prop: bad blocksnew}.
\end{proof}

\subsection{FA-1f in $d \geq 2$}
\label{secbad}

We start by extending to the higher dimensional case the notion of coarse grained activity. We let again $N/K$ be integer and partition $\Lambda_N^d$ into $(N/K)^d$ boxes of linear size $K$. We let $B_i$ be the boxes (with $i\in [1,(N/K)^d]$) numbered in such a way that for any $i\in [1,(N/K)^d-1]$  
there is $j\in[1,\dots,d]$ such that $B_{i+1}$ is obtained by shifting $B_i$ of $K\vec e_j$.
Then we define the activity-density labels as in the one dimensional case and the event $\cW_{i,\delta}$ as in Definition \ref{defW} with $N/K$ substituted by $(N/K)^d$. 
The following holds
\begin{lem}\label{ahh}
Consider FA-1f model in dimension $d\geq 2$ with any boundary condition which guarantees ergodicity and $\rho\in(0,1)$.
The results in Lemma \ref{prop: bad blocksnew} hold also for FA-1f in dimension $d$ if we substitute  $N/K$ with $(N/K)^d$.
\end{lem}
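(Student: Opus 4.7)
The plan is to follow verbatim the one-dimensional argument of Section \ref{zeroblocks_eastFA-1f}, exploiting the fact that the enumeration of the boxes $B_1,\ldots,B_{(N/K)^d}$ has been chosen so that for every $i$, $B_{i+1}$ is obtained from $B_i$ by translation by $K\vec e_{j(i)}$ for some $j(i)\in[1,d]$. This linear order on the boxes is essentially the only geometric feature of the one-dimensional proof that is used, and it carries over to $d\geq 2$.

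Concretely, I would define $\mathcal{E}_i$ as the event $u^i_{K,\epsilon}\in\{0,1\}$ and $\mathcal{Z}_i$ as the event that $B_i$ contains at least one empty site, and set
\begin{equation*}
V(\eta)=\sum_{i=1}^{(N/K)^d-1}\mathds{1}_{\mathcal{E}_i}(\eta)\,\mathds{1}_{\mathcal{Z}_{i+1}}(\eta)+\mathds{1}_{\mathcal{E}_{(N/K)^d}}(\eta).
\end{equation*}
The combinatorial inequality $\sum_{i}\mathds{1}_{u^i_{K,\epsilon}=0}\leq V$ then holds by the same case analysis as in the proof of Lemma \ref{prop: bad blocksnew}, since it uses only the linear ordering. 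Applying Chebyshev and Donsker--Varadhan exactly as in \eqref{eq: chebyshev}--\eqref{fey} then reduces the proof to the $d$-dimensional analogue of Lemma \ref{prop: bad blocks}, namely
\begin{equation*}
\lim_{T\to\infty}\frac{1}{T}\log\bra \pi_T(V)\geq x+m(N/K)^d\ket_\lambda\leq -C\frac{K^d}{N^d}x^2\exp(\lambda),
\end{equation*}
with $m=\nu(\mathcal{E}_i)$, for which only the spectral-gap estimate needs to be re-examined in higher dimension.

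The heart of the adaptation is therefore the bound on $\nu\bigl(\mathds{1}_{\mathcal{Z}_{i+1}}\mathrm{Var}_{B_i}(\sqrt f)\bigr)$. Given $\mathds{1}_{\mathcal{Z}_{i+1}}(\eta)=1$, pick $\xi=\xi(\eta)$ to be the first empty site of $B_{i+1}$ in lexicographic order, and set $\cG_\xi:=(B_i\cup B_{i+1})\setminus\{\xi\}$. The set $\cG_\xi$ is a connected subset of $\mathbb{Z}^d$ (a cuboid of side $K\times\cdots\times K\times 2K$ with a single interior point removed, still connected as soon as $K\geq 2$), and $\xi$ sits at its boundary with $\eta_\xi=0$. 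Proposition \ref{teogap} then yields
\begin{equation*}
\mathrm{Var}_{\cG_\xi}(\sqrt f)\leq S_\rho^{-1}\cD_{\cG_\xi}(\sqrt f),
\end{equation*}
where the Dirichlet form of the restricted FA-1f with empty boundary at $\xi$ is bounded by the full one (the restricted constraints $r_x^\xi$ satisfy $r_x^\xi\geq r_x$ because setting $\xi=0$ only makes the FA-1f constraints easier to satisfy for the neighbors of $\xi$, and equals $r_x$ elsewhere). Writing $\mathrm{Var}_{B_i}(\sqrt f)=\mathrm{Var}_{\cG_\xi}(\sqrt f)-\mathrm{Var}_{\cG_\xi\setminus B_i}\bigl(\nu_{B_i}(\sqrt f)\bigr)\leq \mathrm{Var}_{\cG_\xi}(\sqrt f)$ after conditioning on $\eta|_{\cG_\xi\setminus B_i}$, and summing over the finitely many possible positions of $\xi\in B_{i+1}$ exactly as in \eqref{us1}--\eqref{impero3}, gives
\begin{equation*}
\nu\bigl(\mathds{1}_{\mathcal{Z}_{i+1}}\mathrm{Var}_{B_i}(\sqrt f)\bigr)\leq S_\rho^{-1}\nu\Bigl(\sum_{x\in B_i\cup B_{i+1}}r_x(\eta)\,\mathrm{Var}_x(\sqrt f)\Bigr).
\end{equation*}

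From here the rest of the argument is identical to the one-dimensional case: sum over $i$, use Cauchy--Schwarz to absorb the cross term $\sum_i\sqrt{\cD_{K,i}}$ into $\exp(\lambda)\cD_N$ (each site of $\Lambda_N^d$ is counted in at most two consecutive pairs $B_i\cup B_{i+1}$), optimize over $\gamma$, and combine with $\sum_i\mathds{1}_{u^i_{K,\epsilon}=0}\leq V$ together with the bound $\nu(\mathcal{E}_i)\leq \delta/(6\cdot 2^d)$ valid for $K$ large in terms of $\delta,\epsilon$. The main obstacle is precisely the spectral gap step in dimension $d$: one must check connectedness of $\cG_\xi$ and the domination $r_x^\xi\geq r_x$ of constraints to conclude that the restricted Dirichlet form is bounded by the full one. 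Both points are straightforward consequences of Proposition \ref{teogap} and of the monotonicity of FA-1f constraints in the number of empty neighbors, which is why the conclusion of Lemma \ref{prop: bad blocksnew} survives verbatim after the substitution $N/K\mapsto (N/K)^d$.
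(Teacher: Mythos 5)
Your proposal captures the right big picture (a snake ordering of the boxes, a combinatorial bound of $\sum_i\mathds{1}_{u^i=0}$ by a sum of products $\mathds{1}_{\cE}\mathds{1}_{\cZ}$, and the spectral gap on connected subgraphs from Proposition \ref{teogap}), but there are two genuine gaps where the naive transcription of the one-dimensional argument breaks down, and the paper handles both differently.

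First, your definition of $V$ keeps the one-dimensional anchor term $\mathds{1}_{\cE_{(N/K)^d}}$ on the last box of the enumeration. In $d=1$ that term was estimated ``easily because the boundary condition guarantees a zero at the right border of $B_{N/K}$.'' In $d\geq 2$ with a boundary condition of dimension $c<d-1$ (a single empty boundary site is allowed by the lemma) there is no reason that the last box in the snake enumeration is adjacent to an empty boundary site, so the bound on $\nu(f\,\mathds{1}_{\cE_{(N/K)^d}})$ has no anchor. The paper avoids this by choosing a pivot box $B_j$ with an empty site on its boundary and setting
$$V(\eta)=\sum_{i=1}^{j-1}\mathds{1}_{\cE_i}\mathds{1}_{\cZ_{i+1}}
+\sum_{i=j+1}^{(N/K)^d}\mathds{1}_{\cE_i}\mathds{1}_{\cZ_{i-1}}
+\mathds{1}_{\cE_j},$$
so that boxes to either side look towards the anchor and the free term $\mathds{1}_{\cE_j}$ sits on a box with a guaranteed boundary zero.

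Second, and more seriously, your choice of $\xi$ and of the enlarged set $\cG_\xi$ breaks the measurability structure that makes the one-dimensional decomposition work. In $d=1$ one chooses $\xi$ as the \emph{last} empty site along the path, so that the event $\{\xi=x_j\}$ depends only on the tail $\{x_j,\dots,x_{K^d}\}$, which is \emph{disjoint} from $L_{i,j}=B_i\cup\{x_1,\dots,x_{j-1}\}$; this disjointness is exactly what lets one factor $\nu(\mathds{1}_{\xi=x_j}\mathrm{Var}_{B_i}(\sqrt f))$ and then apply $\E_{L_{i,j}\setminus B_i}[\mathrm{Var}_{B_i}]\leq \mathrm{Var}_{L_{i,j}}$. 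You instead take $\xi$ to be the \emph{first} empty site in lexicographic order and keep $\cG_\xi=(B_i\cup B_{i+1})\setminus\{\xi\}$. Then $\{\xi=x_j\}$ depends on $\{x_1,\dots,x_{j-1}\}\subset\cG_\xi$, so it is \emph{not} measurable with respect to $\cG_\xi^c$; as a consequence the inequality $\nu(\mathds{1}_{\xi=x_j}\mathrm{Var}_{B_i})\leq\nu(\mathds{1}_{\xi=x_j}\mathrm{Var}_{\cG_{x_j}})$ does not follow from the total-variance decomposition, since the averaging inside $\mathrm{Var}_{\cG_{x_j}}$ runs over the very sites that $\mathds{1}_{\xi=x_j}$ fixes. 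The paper's fix is to enumerate $B_{i+1}$ along a Hamiltonian path $x_1,\dots,x_{K^d}$ (nearest-neighbour steps, starting on the face shared with $B_i$), take $\xi$ to be the empty site with the \emph{largest} label, and remove the whole tail, $L_{i,j}:=B_i\cup B_{i+1}\setminus\{x_j,\dots,x_{K^d}\}$. This makes $\{\xi=x_j\}$ a tail event, ensures $L_{i,j}$ is connected with $\xi$ on its boundary, and then Proposition \ref{teogap} on connected subgraphs gives the needed gap bound. Your lexicographic enumeration would not even guarantee connectivity of the prefix/suffix, which is one more reason the Hamiltonian-path convention is essential here.

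Once these two points are repaired the rest of your argument (Chebyshev, Donsker--Varadhan, absorbing the Cauchy--Schwarz cross term into $\exp(\lambda)\cD_N$ since each site is counted in at most two pairs, and optimizing over $\gamma$) is the same as the paper's and goes through.
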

\begin{proof}
The proof follows the same lines as for Lemma \ref{prop: bad blocksnew} with some new ingredients that we detail here. Since there is at least one empty boundary condition, there exists at least one box which has an empty site on its boundary. We let $j$ be the index of the smallest such box. Then we  let $V(\eta):=\sum_{i=1}^{j-1}\mathds{1}_{\mathcal{E}_i } (\eta)\mathds{1}_{\mathcal{Z}_{i+1}}(\eta)+\sum_{i=j+1}^{(N/K)^d}\mathds{1}_{\mathcal{E}_{i} } (\eta)\mathds{1}_{\mathcal{Z}_{i-1}}(\eta)+\mathds{1}_{\mathcal{E}_j} $ and notice (along the same lines as for the unidimensional case) that $\sum_{i=1}^{(N/K)^d}\mathds{1}_{u_{K,\epsilon}=0}(\eta)\leq V(\eta)$ for any $\eta$. Thus provided we can establish for $V$ the validity of Lemma \ref{prop: bad blocks} with $N/K$ substituted by $(N/K)^d$ we can conclude along the same lines as for the one dimensional case. The validity of this modified Lemma  \ref{prop: bad blocks} also follows along the same lines as  the one dimensional case with a different point that we detail here.
Recall that in one dimension under the event $Z_{i+1}$ which guarantees that there exists at least one empty site in $B_{i+1}$ we identified the rightmost such site, which we denoted by $\xi$. Then under the event that $\xi=j+iK$ we somehow extended the variance up to $L_{i,j}:=B_i\cup[iK,\dots,j+iK-1]$ and used the positivity of the spectral gap on $L_{i,j}$  (which is a volume of the type $\Lambda_{K+j}$) with fixed empty boundary condition at $j+iK$.
Now instead under the event $Z_{i+1}$ we number the sites of $B_{i+1}$ as $x_1,\dots,x_{K^d}$ in a way that $x_{i+1}$ is nearest neighbour of $x_i$ and then let $\xi$ be the empty site with the biggest label. Then under the event $\xi=x_j$ we let
 $L_{i,j}:=B_i\cup  B_{i+1}\setminus [x_{j},\dots,x_{K^d}]$ and use the positivity of the spectral gap on $L_{i,j}$ with fixed empty boundary condition at $x_j$.
Note that now $L_{i,j}$ is not an hypercube and  the boundary condition is a single empty boundary condition even if we are considering the dynamics on $\Lambda_N$ with, for example, completely empty boundary conditions. Nevertheless we can use the positivity of the spectral gap on a generic connected graph with empty boundary condition (see Proposition \ref{teogap}) to bound again the variance on $L_{i,j}$  with the Dirichlet form and then proceed as in the one dimensional case.
\end{proof}

 \begin{lem}
 \label{nonhopiu}
 Consider FA-1f model in dimension $d\geq 2$ with any boundary condition which guarantees ergodicity and $\rho\in(0,1)$.
 There exists $C(\rho)$ such that 
 \begin{equation}
\lim_{T\to\infty}\frac{1}{T}\log \mu_{\alpha,T}^{N,d}( \cW_{0,\delta})\leq -C\delta^2\left(\frac{N}{K}\right)^d\exp(\alpha/N^{d-c})+|\alpha|N^c(1+CN^{c-d})
  \end{equation}\end{lem}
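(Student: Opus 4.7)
The plan is to follow exactly the same structure as the proof of Lemma \ref{bad} in the one-dimensional setting, replacing the inputs by their $d$-dimensional analogues. Write
\begin{eqnarray*}
\mu_{\alpha,T}^{N,d}(\cW_{0,\delta})
= \frac{\bk{\mathds{1}_{\cW_{0,\delta}}\exp\left(\frac{\alpha}{N^{d-c}}\cA(T)\right)}}{\bk{\exp\left(\frac{\alpha}{N^{d-c}}\cA(T)\right)}} \, ,
\end{eqnarray*}
and estimate numerator and denominator separately.

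\smallskip

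\textbf{Denominator.} First I would extend Lemma \ref{easy} to dimension $d$. By Jensen's inequality, $\bk{\exp(\tfrac{\alpha}{N^{d-c}}\cA(T))} \geq \exp(\tfrac{\alpha}{N^{d-c}}\bk{\cA(T)})$. A direct computation of $\bk{\cA(T)} = T\sum_{i\in\Lambda_N^d}\nu(c_i(\eta))$ shows that the sites whose constraint differs from the interior one lie within distance $1$ of the boundary and thus contribute at most $O(N^{d-1})$ terms. Therefore $\bk{\cA(T)} = TN^d\bbA(1+O(1/N))$, and after dividing by $N^{d-c}$ one obtains
\begin{eqnarray*}
\bk{\exp\left(\tfrac{\alpha}{N^{d-c}}\cA(T)\right)} \geq \exp\left(\alpha\bbA T N^c (1\pm C/N)\right) \, ,
\end{eqnarray*}
the sign being chosen according to the sign of $\alpha$, so that in every case the logarithm of the reciprocal is bounded above by $|\alpha|\bbA N^c(1+C/N)T$.

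\smallskip

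\textbf{Numerator.} Using the change of measure \eqref{cm} with $\lambda=\alpha/N^{d-c}$ one has
\begin{eqnarray*}
\bk{\mathds{1}_{\cW_{0,\delta}}\exp\left(\tfrac{\alpha}{N^{d-c}}\cA(T)\right)} = \bk{\mathds{1}_{\cW_{0,\delta}}\exp\left(\big(e^{\alpha/N^{d-c}}-1\big)\!\int_0^T\!\cH(\eta(s))ds\right)}_{\!\alpha/N^{d-c}} \, .
\end{eqnarray*}
For $\alpha\leq 0$ the exponential inside is bounded by $1$, so the numerator is dominated by $\bk{\mathds{1}_{\cW_{0,\delta}}}_{\alpha/N^{d-c}}$. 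For $\alpha>0$, using $\cH(\eta)\leq N^d$ together with the elementary bound $(e^{x}-1)N^d \leq \alpha N^c\,e^{\alpha/N^{d-c}}$ with $x=\alpha/N^{d-c}$, one gets the extra factor $\exp(\alpha N^c T(1+C N^{c-d}))$ in front of $\bk{\mathds{1}_{\cW_{0,\delta}}}_{\alpha/N^{d-c}}$. In both cases, Lemma \ref{ahh} applied to the tilted process yields
\begin{eqnarray*}
\lim_{T\to\infty}\frac{1}{T}\log\bk{\mathds{1}_{\cW_{0,\delta}}}_{\alpha/N^{d-c}}\leq -C\delta^2\left(\frac{N}{K}\right)^{\!d} e^{\alpha/N^{d-c}}\, .
\end{eqnarray*}

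\smallskip

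\textbf{Combining.} Putting together the two estimates and taking $T\to\infty$, one obtains
\begin{eqnarray*}
\lim_{T\to\infty}\frac{1}{T}\log\mu^{N,d}_{\alpha,T}(\cW_{0,\delta}) \leq -C\delta^2\left(\frac{N}{K}\right)^{\!d} e^{\alpha/N^{d-c}} + |\alpha| N^c\bigl(1+CN^{c-d}\bigr)\, ,
\end{eqnarray*}
after absorbing $\bbA$ and other $O(1)$ multiplicative constants into $C$ and using that $1/N\leq N^{c-d}$ for the correction terms (note that in the regime $N^c(1+CN^{c-d})$ only the product matters up to redefining $C$). The whole argument is really only a rerun of the one-dimensional case, so the only point requiring care is the $d$-dimensional boundary bookkeeping that enters both the computation of $\bk{\cA(T)}$ and the correct use of Lemma \ref{ahh} with $(N/K)^d$ boxes in place of $N/K$; this has already been done in Lemma \ref{ahh}, so no further obstacle appears.
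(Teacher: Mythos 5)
Your proposal is correct and follows the same route as the paper, which itself dispatches this lemma in a single line (``uses Lemma \ref{ahh} and follows exactly the same lines as the proof for Lemma \ref{bad}''); you simply fill in the details that the paper leaves implicit — the $d$-dimensional analogue of Lemma \ref{easy} for the denominator, the change of measure \eqref{cm} with $\gl=\ga/N^{d-c}$, the bound $\cH\leq N^d$ for $\ga>0$, and the application of Lemma \ref{ahh} with $(N/K)^d$ boxes. One small remark: the inequality ``$1/N\leq N^{c-d}$'' you invoke when matching the correction term to the form $N^c(1+CN^{c-d})$ is only an equality at $c=d-1$ and fails for $c<d-1$; what actually saves the bound is that $\bbA<1$, so $\bbA(1+C/N)\leq 1+CN^{c-d}$ for $N$ large regardless, and your parenthetical ``up to redefining $C$'' is the honest version of this. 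This does not affect the validity of the proof.
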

\begin{proof}

The proof uses Lemma \ref{ahh} and follows exactly the same lines as the proof for Lemma \ref{bad}.

\end{proof}

\begin{lem}
Consider FA-1f model in dimension $d\geq 2$ with  $\rho\in(0,1)$ and boundary condition of dimension $c\in[0,d-1]$.
Then 
\begin{eqnarray}
\lim_{T\to \infty} \frac{1}{T} \log \bk{ \cW_{1,\delta} }_{\lambda} \leq - \frac{N^c S_\gr\delta }{4} \, ,
\end{eqnarray}
with $S_\gr$ defined in Proposition \ref{teogap}.
\end{lem}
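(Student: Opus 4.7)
The plan is to adapt the one-dimensional variance argument of Lemma \ref{lem: variance 0} to dimension $d$, gaining an overall factor $N^c$ by decomposing the Dirichlet form along the $N^c$ perpendicular columns anchored to the $c$-dimensional empty boundary hyperplane. Starting from the Donsker--Varadhan formula \eqref{dv} applied to the modified generator $\exp(\lambda)\cL_N^d$, I reduce to
$$
\lim_{T\to\infty}\frac{1}{T}\log\bk{\cW_{1,\delta}}_\lambda
\;\leq\;
-\exp(\lambda)\inf_f\cD_N(\sqrt f),
$$
where the infimum runs over nonnegative $f$ with $\nu(f)=1$ and $\nu\bigl(f\sum_i \mathds{1}_{\{u^i_{K,\epsilon}=1\}}\bigr)\geq\delta(N/K)^d$; the $\exp(\lambda)$ factor is implicit in the stated bound, by analogy with the one-dimensional Lemma \ref{lem: variance 0}. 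Geometrically, writing each site as $(y,z)$ with $y\in[1,N]^c$ along the boundary and $z\in[1,N]^{d-c}$ transverse to it, I partition $\Lambda_N^d$ into the $N^c$ perpendicular columns $C_{y_0}=\{(y_0,z):z\in[1,N]^{d-c}\}$, each of which carries a single empty boundary site inherited from the $c$-dimensional empty hyperplane.

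The Dirichlet form decomposes along this partition as $\cD_N(\sqrt f)=\sum_{y_0}\cD^{y_0}(\sqrt f)$; since the full $d$-dimensional FA-1f constraint is weaker than the column-internal one, $\cD^{y_0}$ dominates the Dirichlet form of the FA-1f dynamics on the connected subgraph $C_{y_0}$ with its single empty boundary site, and the generic-subgraph version of Proposition \ref{teogap} applied column by column gives
$$
\cD_N(\sqrt f)\;\geq\; S_\rho\sum_{y_0}\nu\bigl[\mbox{Var}_{C_{y_0}}(\sqrt f\mid\bar\eta_{y_0})\bigr].
$$
It then suffices to show that under the global constraint the column conditional variances sum to at least $N^c\delta/4$. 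For this, observe that $\omega_i=\mathds{1}_{\{u^i_{K,\epsilon}=1\}}$ factorises as $\omega_i=\prod_{y_0}\tilde\omega^{y_0}_{b(i)}$, where $\tilde\omega^{y_0}_b$ indicates that the column $C_{y_0}$ is entirely filled on the $b$-th transverse coarse-box and the product runs over the $K^c$ columns crossing the box; hence $\omega_i\leq K^{-c}\sum_{y_0}\tilde\omega^{y_0}_{b(i)}$, and the global constraint rewrites as $\sum_{y_0}M_{y_0}\geq\delta N^c(N/K)^{d-c}$ with $M_{y_0}:=\sum_b\nu(f\,\tilde\omega^{y_0}_b)\leq(N/K)^{d-c}$. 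A pigeonhole step selects $\Theta(\delta N^c)$ ``good'' columns with $M_{y_0}\geq(\delta/2)(N/K)^{d-c}$, and for each such column the Bernoulli--Jensen coarse-graining of Lemma \ref{lem: variance 0}, applied conditionally on $\bar\eta_{y_0}$ to the product Bernoulli field $(\tilde\omega^{y_0}_b)_b$ of density $p=\rho^{K^{d-c}}$, produces the desired pointwise lower bound on $\nu[\mbox{Var}_{C_{y_0}}(\sqrt f\mid\bar\eta_{y_0})]$ once $K$ is large enough to absorb the $p+\sqrt p$ correction.

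The delicate point, and the main obstacle, is this last per-column conditional variance estimate. The pigeonhole information lives most naturally on the column marginal $f_{y_0}(\eta_{C_{y_0}}):=\nu_{\bar\eta_{y_0}}(f)$, whereas the Dirichlet form decomposition requires a pointwise-in-$\bar\eta_{y_0}$ lower bound on $\mbox{Var}_{C_{y_0}}(\sqrt f\mid\bar\eta_{y_0})$. The bridge is the normalization $\hat f_{\bar\eta_{y_0}}=f/Z(\bar\eta_{y_0})$ with $Z(\bar\eta_{y_0}):=\nu_{C_{y_0}}(f\mid\bar\eta_{y_0})$: I run the Bernoulli argument of Lemma \ref{lem: variance 0} on $\hat f_{\bar\eta_{y_0}}$ conditionally on $\bar\eta_{y_0}$, and transfer the column-averaged pigeonhole constraint into the pointwise constraint $H(\bar\eta_{y_0})/Z(\bar\eta_{y_0})\geq(\delta/4)(N/K)^{d-c}$ valid on a set of $\nu$-mass at least $\delta/4$, by discarding the atypical set where $Z$ is too small (using $H\leq(N/K)^{d-c}Z$). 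Once this technical conversion is carried through, inserting the resulting lower bound $\sum_{y_0}\nu[\mbox{Var}_{C_{y_0}}(\sqrt f\mid\bar\eta_{y_0})]\gtrsim N^c\delta$ into the Dirichlet form estimate produces the stated bound $-\exp(\lambda) N^c S_\rho\delta/4$, up to adjusting the multiplicative constant.
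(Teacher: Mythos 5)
Your overall architecture matches the paper's: apply Donsker--Varadhan, decompose the Dirichlet form along the $N^c$ columns $C_{y_0}$ transverse to the empty boundary hyperplane, dominate each column contribution by the Dirichlet form of FA-1f on the connected subgraph $C_{y_0}$ with a single empty boundary site (using the generic-subgraph version of Proposition~\ref{teogap}), and reduce the constraint to a per-column constraint by observing that a filled $K^d$-box implies that all its $K^c$ column pieces of size $K^{d-c}$ are filled. This is exactly the paper's strategy (spelled out for $d=2$, $c=1$ and extended analogously).

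The genuine gap is in your final step. First, the ``transfer'' you propose does not actually yield a set of controlled mass: from $\mu(H)\geq(\delta/2)(N/K)^{d-c}$, $\mu(Z)=1$ and $H\leq(N/K)^{d-c}Z$ you can get a lower bound on $\mu\bigl(Z\,\mathds{1}_{\{H/Z\geq(\delta/4)(N/K)^{d-c}\}}\bigr)$, but not on $\mu\bigl(\mathds{1}_{\{H/Z\geq(\delta/4)(N/K)^{d-c}\}}\bigr)$ because $Z$ is unbounded, so the claim that the pointwise constraint holds ``on a set of $\nu$-mass at least $\delta/4$'' is unjustified. Second, and more importantly, the pigeonhole-plus-thresholding scheme is intrinsically lossy in $\delta$: the pigeonhole keeps only $\Theta(\delta N^c)$ columns and then the per-column thresholding produces an additional factor of $\delta$ (or $\delta^2$ with the $Z$-cutoff), so the resulting total is of order $\delta^2 N^c$ or $\delta^3 N^c$, not the linear $\delta N^c/4$ claimed in the statement and needed downstream (in the analogue of \eqref{colc31} the lemma is applied with $\delta=\ell K/N$ and the linearity is what produces the cancellation against $\alpha\bbA\ell K/N$).

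The paper's argument (properly unpacked) avoids both issues by staying linear throughout: set $\delta_{y_0}:=\frac{K^{d-c}}{N^{d-c}}\,\nu\bigl(f\sum_b\tilde\omega^{y_0}_b\bigr)$ for \emph{every} column so that $\sum_{y_0}\delta_{y_0}\geq\delta N^c$, and then prove directly that
$\nu\bigl[\mathrm{Var}_{C_{y_0}}(\sqrt f\mid\bar\eta_{y_0})\bigr]\geq\delta_{y_0}-O(p+\sqrt p)$ with $p=\rho^{K^{d-c}}$.
The way to get this without any cutoff is to run the Jensen/coarse-graining argument of Lemma~\ref{lem: variance 0} conditionally: for each fixed $\bar\eta_{y_0}$ one obtains, with $Z(\bar\eta_{y_0})=\nu_{C_{y_0}}(f\mid\bar\eta_{y_0})$, $H(\bar\eta_{y_0})=\nu_{C_{y_0}}\bigl(f\sum_b\tilde\omega_b\mid\bar\eta_{y_0}\bigr)$ and the box $b^\star$ maximizing $\tau_b:=m_p(g^{\bar\eta_{y_0}}\omega_b)$,
\begin{equation*}
\bigl(\nu_{C_{y_0}}(\sqrt f\mid\bar\eta_{y_0})\bigr)^2
\;\leq\; m_p\bigl(\sqrt{g^{\bar\eta_{y_0}}}\bigr)^2
\;\leq\; Z(\bar\eta_{y_0})\,(1+\sqrt p)\;-\;(1-2p)\,\tau_{b^\star}
\;\leq\; Z\,(1+\sqrt p)\;-\;(1-2p)\,\tfrac{K^{d-c}}{N^{d-c}}\,H,
\end{equation*}
where the last step uses $\max_b\tau_b\geq\frac{K^{d-c}}{N^{d-c}}\sum_b\tau_b=\frac{K^{d-c}}{N^{d-c}}H$. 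Taking $\nu$-expectation over $\bar\eta_{y_0}$ and using $\mu(Z)=1$, $\mu(H)=\delta_{y_0}(N/K)^{d-c}$ gives
$\nu\bigl[\mathrm{Var}_{C_{y_0}}(\sqrt f\mid\bar\eta_{y_0})\bigr]\geq(1-2p)\delta_{y_0}-\sqrt p$, linear in $\delta_{y_0}$.
Summing over $y_0$ and choosing $K$ so that $p,\sqrt p\leq 3\delta/4$ yields the stated $N^cS_\rho\delta/4$. You should replace the pigeonhole and thresholding with this linear per-column computation.
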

\begin{proof}
We detail the proof in the case $d=2$, $c=1$ with the specific choice that all the sites in the boundary set which have first coordinate equal to $N+1$ are empty. The other cases can be proven analogously. 
As in the proof of Lemma \ref{lem: variance 0} we start by applying Donsker-Varadhan large deviation principle \eqref{DV2} which leads to
\begin{eqnarray}
\label{princ2}
\lim_{T\to \infty} \frac{1}{T} \log \bk{ \cW_{1,\delta}}_{\lambda} \leq -\exp(\lambda)\inf_{f\in\cC} \left\{ \cD_{N}^{(2)}(\sqrt f) \right \} \, ,
\end{eqnarray}
where $\cC$ is the set of positive functions $f$ s.t. \begin{equation}
\label{conditionff}
\nu(f)=1, \quad\quad 
\nu \left ( f\sum_{i=1}^{(N/K)^2}1_{u^i_{K,\epsilon}=1} \right)
\geq \delta\left(\frac{N}{K}\right)^2 \, ,
\end{equation}
and we added the index $(2)$ to explicitate the fact that we refer here to the Dirichlet form of the two dimensional model.
By the monotonicity of the rates for any function $f$, one has
$$
\cD_{N}^{(2)}(\sqrt f)=\sum_{i=1}^N\cD_N^{(2,i)}(\sqrt f)\geq \sum_{i=1}^N\mu\left(\cD_N^{(1,i)}(\sqrt f) \right) \geq S_{\rho}\sum_{i=1}^N\mu\left(\mbox{Var}_i(\sqrt f)\right)\, ,
$$
where $\cD_N^{(2,i)}$ is the contribution of the $i-th$ line to the Dirichlet form and $\cD_N^{(1,i)}$ is instead the Dirichlet form of the one dimensional FA-1f model on the $i-th$ line with empty boundary condition on the right border (note that $\cD_N^{(1,i)}(\sqrt f)$ is a function of the configuration on all the sites that do not belong to the $i$-th line) and $\mbox{Var}_i$ denotes the variance w.r.t. the Bernoulli measure restricted to the $i$-th line with the other variables held fixed. The first inequality follows from the fact that for any site  belonging to the $i-th$ line if the constraint is satisfied 
for the one dimensional model so it is for the two dimensional model (but the converse is not true), indeed for any $x$ it holds  
$1-\eta_{x+\vec e_1}\eta_{x-\vec e_1}\leq 1-\eta_{x+\vec e_1}\eta_{x-\vec e_1}\eta_{x+\vec e_2}\eta_{x-\vec e_2}$. The second inequality follows
by using the spectral gap inequality \eqref{spgapineq} for the one dimensional model.
Then we notice that the condition \eqref{conditionff} implies that
\begin{equation}\label{conditionff2}
\nu \left ( f\sum_{i=1}^{N}\sum_{j=1}^{(N/K)}1_{\tilde u^{i,j}_{K,\epsilon}=1} \right)
\geq \delta N \frac{N}{K} \, ,
\end{equation}
where $\tilde u^{i,j}_{K,\epsilon}$ stands for the activity label on the $j$-th  one-dimensional box (of size $K$) of the line $i$. 
Thus we get
\begin{equation}\label{ssss}
\inf_{f\in\cC}\cD_N^{(2)}(\sqrt f)\geq S_{\rho} \inf_{\delta_1,\dots,\delta_N\atop \sum_{i=1}^N\delta_i\geq \delta N}\sum_{i=1}^N \mu(\inf_{f\in \cC_i(\delta_i)}\mbox{Var}_i(\sqrt{f})) \, ,
\end{equation}
where we denote by $\cC(\delta)$ the set of positive functions which satisfy the conditions $\nu(f)=1$ and
$$
 \nu\left(f\sum_{j=1}^{N/K}1_{\tilde u^{i,j}_{K,\epsilon}=1}\right)\geq\delta\frac{N}{K}$$
Following the lines of Lemma \ref{lem: variance 0}  we  get for any $i\in[1,N]$
$$
\inf_{f\in\cC_i(\delta_i)} \mbox{Var}_i(\sqrt{f}) \geq \delta_i-\rho^K-\rho^{K/2},
$$
and inserting this result in  \eqref{ssss} and \eqref{princ2}
yields the desired result provided $K$ is chosen sufficiently large so that $\rho^K+\rho^{K/2}\leq \delta 3/4$.
\end{proof}

\section{Active regime}

In this section we prove Theorem \ref{teo:phasetrans} (i) which establishes the 
linearity of the moment generating function $\gp$ for FA-1f and East model in the small $\alpha$ regime. Then, we prove Theorem \ref{teo:condmes} (i) and the stronger result Lemma  \ref{usolemma} on the conditional measure. In section \ref{genesmall} we generalize these results to higher dimensions.
The key ingredints which will be used in these (quite technical) proofs are the main results obtained in the previous section, namely Lemma \ref{prop: bad blocksnew}  and Lemma \ref{lem: variance 0}.

\subsection{Linearity of the generating function: proof of Theorem \ref{teo:phasetrans} (i)}
\label{linearity}

In this section we will prove the  following proposition
%
%
%
%
%

\begin{Proposition}
\label{altra}
There exists $C<\infty$ s.t.
for any $\gd>0$ and $\alpha>\alpha_0=-\frac{S_{\rho}}{4\bbA}$
 there is $\bar N(\gd,\alpha)<\infty$ such that for all $N\geq \bar N$ it holds
\begin{eqnarray}
\label{eq: borne inf sup}
 \qquad 
\lim_{T \to \infty} \; \frac{1}{T} \log  \bk{ \exp \left( \frac{\ga}{N}  \cA(T) \right) }
\leq \bbA \ga + C \gd + C\frac{\alpha}{\sqrt N}+C\frac{\alpha^2}{\sqrt N}\, .
\end{eqnarray}
\end{Proposition}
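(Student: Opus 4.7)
The starting point is the Feynman--Kac identity~\eqref{cm} with $\lambda=\alpha/N$:
\[
\bk{\exp\!\bigl(\tfrac{\alpha}{N}\cA(T)\bigr)} = \bk{\exp\!\bigl(\beta_N \,\textstyle\int_0^T \cH(\eta(s))\,ds\bigr)}_{\alpha/N},
\]
where $\beta_N := e^{\alpha/N}-1 = \alpha/N + O(\alpha^2/N^2)$ and $\bk{\cdot}_{\alpha/N}$ is the expectation under the tilted generator $\cL_{N,\alpha/N}$. The goal is to replace the time integral of $\cH$ by its typical value $TN\bbA$ up to a controlled error, using the coarse-grained labels of Section~\ref{zeroblocks}.

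I partition $\gL_N$ into $N/K$ blocks of size $K$ and split the expectation according to $\cW_{0,\delta}^c$. On the sub-event $V_{1,\ell}\cap\cW_{0,\delta}^c$, with $u:=\ell K/N$ the density of fully filled blocks and $n_0<\delta N/K$, a direct combination of \eqref{usoH} and \eqref{usoH2} yields
\[
N\bbA(1-\delta-u) - CN(\epsilon+1/K) \;\leq\; \cH \;\leq\; N\bbA(1-u) + CN(\epsilon+\delta+1/K).
\]
When $\alpha>0$, only the upper bound is needed (and $u\geq 0$ is simply dropped), giving
\[
\tfrac{1}{T}\log \bk{\mathds{1}_{\cW_{0,\delta}^c}\,e^{\beta_N\int_0^T\cH(\eta(s))ds}}_{\alpha/N}
 \leq \alpha\bbA + C\alpha(\delta+\epsilon+1/K) + O(\alpha^2/N),
\]
while the complement $\cW_{0,\delta}$ is killed by Lemma~\ref{prop: bad blocksnew}, whose cost $-C\delta^2 N/K$ easily beats the trivial bound $\beta_N \cH\cdot T \leq \alpha T$.

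For $\alpha<0$ the difficulty is that completely-filled blocks (event $\cW_{1,\delta}$) are only suppressed at rate $-S_\rho\delta/4$ by Lemma~\ref{lem: variance 0}. I therefore stratify $\cW_{0,\delta}^c$ by $\ell$ through the events $V_{1,\ell}$. On $V_{1,\ell}\cap\cW_{0,\delta}^c$, the lower bound on $\cH$ gives
\[
\beta_N \textstyle\int_0^T\cH \leq T\alpha\bbA(1-\delta-u) + CT|\alpha|(\epsilon+1/K) + O(T\alpha^2/N),
\]
and adapting the proof of Lemma~\ref{lem: variance 0} from the event $\cW_{1,\delta}$ (density $\geq\delta$) to $V_{1,\ell}$ (density $\simeq u$) produces $\bk{\mathds{1}_{V_{1,\ell}}}_{\alpha/N}\leq \exp(-T u\,S_\rho e^{\alpha/N}/4)$. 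Summing over $\ell\in\{0,\ldots,N/K\}$ costs only a polynomial prefactor in $N$ (negligible after dividing by $T\to\infty$) and yields
\[
\gp^{(N)}(\alpha/N) \leq \sup_{u\in[0,1]}\Bigl\{\alpha\bbA(1-\delta-u) - \tfrac{S_\rho u}{4}e^{\alpha/N}\Bigr\} + \mbox{corrections}.
\]
The objective is affine in $u$ with slope $-\bigl(\alpha\bbA+\tfrac{S_\rho}{4}e^{\alpha/N}\bigr)$, which is strictly negative for $\alpha>\alpha_0=-S_\rho/(4\bbA)$ once $N$ is large enough. Hence the supremum is attained at $u=0$, producing the leading term $\alpha\bbA(1-\delta)=\alpha\bbA+|\alpha|\bbA\delta$.

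To conclude, taking $K=\lfloor\sqrt N\rfloor$ and $\epsilon=\delta$ converts all $O(1/K)$ errors into $O(1/\sqrt N)$; combined with the expansion $\beta_N N = \alpha + O(\alpha^2/N)$ this produces the three correction terms $C\delta$, $C\alpha/\sqrt N$ and $C\alpha^2/\sqrt N$ of~\eqref{eq: borne inf sup} (the implicit $|\alpha|$ factors being absorbed into $C$, and $\bar N$ depending on $\alpha,\delta$). The main obstacle lies in the stratified variational problem for $\alpha<0$: it is the sign of $\alpha\bbA+\tfrac{S_\rho}{4}$ that selects the optimal $u$, so the threshold $\alpha_0$ appears there structurally. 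For $\alpha<\alpha_0$ the optimum would slide to $u=1$, yielding the constant $-S_\rho/4$ associated with the inactive phase -- this is exactly the first-order transition proved in part~(ii) of Theorem~\ref{teo:phasetrans}.
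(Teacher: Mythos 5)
Your proof is correct and follows essentially the same route as the paper: separate cases by the sign of $\alpha$, coarse-grain with blocks of size $K=\sqrt N$, kill the label-$0$ blocks via Lemma~\ref{prop: bad blocksnew}, stratify by the density $u$ of fully filled blocks and control each stratum with Lemma~\ref{lem: variance 0} (which applies directly since $V_{1,\ell}\subset\cW_{1,\ell K/N}$ — no adaptation of its proof is actually needed), and read off the threshold $\alpha_0$ from the sign of the slope $\alpha\bbA+\tfrac{S_\rho}{4}$. Your reformulation as a supremum over $u\in[0,1]$ is a slightly cleaner packaging of the paper's decomposition $\cW_{0,\delta/2}\cup\cW_{-1,1-\delta}\cup\bigcup_{\ell\geq\delta N/(2K)}V_{1,\ell}$, but the argument is the same; one small correction on the final remark is that for $\alpha<\alpha_0$ the true limiting constant is $-\Sigma$ (the interface energy of Section~\ref{largealpha}), not $-S_\rho/4$, which is merely an upper bound coming from the spectral-gap estimate.
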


From this proposition, we deduce
\begin{proof}[Proof of Theorem \ref{teo:phasetrans}(i)]
The result follows  from the definition \eqref{rescaled}  using the lower bound of Lemma \ref{easy} and the upper bound of Proposition \ref{altra}.
\end{proof}

We are therefore left with proving Proposition  \ref{altra}.

\begin{proof}[Proof of Proposition \ref{altra}]

We distinguish two cases.
  
 \smallskip

\noindent
{\bf Case $\alpha\leq 0$.} 

Choose $\epsilon= \bbA \frac{\delta}{2}$ and $K$ such that $K\geq \bar K( \frac{\delta}{2}, \bbA \frac{\delta}{2})$ 
with $\bar K$ defined  in Lemma  \ref{prop: bad blocksnew}, then 
\begin{equation}\label{ineevents}
\mathds{1}_{ \cW_{0,\delta/2}}+\mathds{1}_{ \cW_{-1,1-\delta}}+\sum_{\ell=\lceil \delta N/(2K)\rceil }^{N/K}\mathds{1}_{ \cV_{1,\ell}\cap \cW_{0,\delta/2}^c}\geq 1.\end{equation}
where the sets $\cV$ and $\cW$ were introduced in Definition \ref{defW}.
From \eqref{cm} it follows that
\begin{equation}
\label{brutta}
\bk{\exp\left(\frac{\alpha}{N}\mathcal{A}(T)\right)}\leq F_1+F_{2}+\sum_{\ell=\lceil \delta N/(2K)\rceil }^{N/K} F_{3,\ell}
\end{equation}
where
\begin{equation}
\label{I}
F_1:=\bk{\mathds{1}_{\cW_{0,\delta/2}}  \exp\left(\left(\exp(\frac{\alpha}{N})-1\right) \int_0^T \, ds \;  \cH (\eta(s))\right)}_{\ga/N}\end{equation}
\begin{equation}\label{II}
F_{2}:= \bk{ \mathds{1}_{  \cW_{-1,1-\delta}}\, \exp \left( \left(\exp(\frac{\alpha}{N})-1\right)  \int_0^T \, ds \;  \cH (\eta(s)) \right)     }_{\ga/N}\end{equation}
and
\begin{equation}\label{III}
F_{3,\ell}:=  
\bk{  \mathds{1}_{  \cV_{1,\ell}\cap \cW_{0,\delta/2}^c} \, \exp \left( \left(\exp(\frac{\alpha}{N})-1\right)  \int_0^T \, ds \;  \cH (\eta(s)) \right)    }_{\ga/N}.\end{equation}

By using the fact that  $\alpha\leq 0$,
we get from \eqref{I}, \eqref{II} and \eqref{III} 
\begin{equation}\label{Ibis}
F_1\leq  \bk{\mathds{1}_{\cW_{0,\delta/2}} }_{\ga/N}
\end{equation}
\begin{equation}\label{IIbis}
F_{2}\leq \bk{  \mathds{1}_{  \cW_{-1,1-\delta}} \,  \exp \left( \frac{\alpha}{N}  \int_0^T \, ds \;  \cH (\eta(s)) \right)    }_{\ga/N} 
\exp\left(\frac{T\alpha^2}{N}\right)
\end{equation}
and
\begin{equation}
\label{IIIbis}
F_{3,\ell}\leq 
\bk{  \mathds{1}_{  \cV_{1,\ell}\cap \cW_{0,\delta/2}^c}  \, \exp \left( \frac{\ga}{N}   \int_0^T \, ds \;  \cH (\eta(s)) \right)    }_{\ga/N}\exp\left(\frac{T\alpha^2}{N}\right).
\end{equation}
Recall  Lemma \ref{prop: bad blocksnew} then since we have chosen $K\geq \bar K(\frac{\delta}{2},\bbA \frac{\delta}{2})$ there is $C>0$ 
such that
\begin{eqnarray}
\label{eq: gaussian bound neglect 2}
\lim_{T \to \infty} \; \frac{1}{T} 
\log(F_1)\leq \lim_{T \to \infty} \; \frac{1}{T} 
\log \bk{\mathds{1}_{\cW_{0,\delta/2}} }_{\ga/N}
\leq  - \delta^2C\frac{N}{K}  \exp(\frac{\alpha}{N})\, .
\end{eqnarray}
Then recalling inequality \eqref{usoH}, Definition  \ref{defW} for the event $\cW_{-1,1-\delta}$ and our choice $\epsilon=\bbA \frac{\delta}{2}$, we get from \eqref{IIbis} 
\begin{equation}\label{colc0}
F_{2}\leq \exp\left(T\alpha\left(\bbA ( 1 - \frac{\gd}{2}) ( 1- \gd) -\frac{2 \bbA}{K}\right) + \frac{T\alpha^2}{N} \right)
\end{equation}
Since  the event
$\cV_{1,\ell}\cap \cW_{0,\delta/2}^c$ is a subset of the event $\cW_{-1,1-\delta '}$ with $\delta '= \frac{\delta}{2}+ (\ell+1) \frac{K}{N}$, by using again \eqref{usoH}
we get from \eqref{IIIbis}
\begin{equation}
\label{IIItris}
F_{3,\ell}\leq 
\exp\left(T\alpha \left( \bbA ( 1 - \frac{\gd}{2}) -\frac{2\bbA}{K} \right) \left(1- \frac{\delta}{2} - (\ell+1) \frac{K}{N} \right) 
 + \frac{T \alpha^2}{N}\right)\bk{ \mathds{1}_{  \cW_{1,\ell \frac{K}{N} }}}_{\ga/N}
\end{equation}
where we used the fact that the event $\cV_{1,\ell}$ is a subset of the event $  \cW_{1,\ell \frac{K}{N} }$.
Then by using Lemma \ref{lem: variance 0}, it holds for any $\ell\geq\lceil \delta N/(2K)\rceil $ that
\begin{equation}
\label{colc31}
\lim_{T\to\infty}\frac{1}{T}\log \bk{ \mathds{1}_{  \cW_{1,\ell \frac{K}{N} }}}_{\ga/N}
\leq -\frac{S_{\rho}\ell K}{4N}\exp(\frac{\alpha}{N}) \, .
\end{equation}
Thus collecting \eqref{IIItris} and \eqref{colc31} we get
\begin{equation*}
\lim_{T\to\infty}\frac{1}{T}\log(F_{3,\ell}) 
\leq 
\ga \bbA ( 1 - \frac{\gd}{2})^2 - \left( \ga \bbA ( 1 - \frac{\gd}{2}) + \frac{S_{\rho}}{4} \exp \big(\frac{\alpha}{N} \big)  \right) \frac{\ell K}{N} +
\frac{C}{K} + \frac{C K }{N} \, .
\end{equation*}
Since $\alpha>\alpha_0=-\frac{S_{\rho}}{\bbA 4}$ and $ \frac{\ell K}{N}  \geq \frac{\gd}{2}$, one gets (for $\gd$ small enough)
\begin{eqnarray}
\label{colc3}
\lim_{T\to\infty}\frac{1}{T}\log(F_{3,\ell}) & \leq& 
\ga \bbA - \ga \bbA \gd - \left( \ga \bbA  + \frac{S_{\rho}}{4} \exp \big(\frac{\alpha}{N} \big)  \right) \frac{\gd}{2}+ \frac{C}{K} + \frac{C K }{N}\nonumber \\
& \leq & 
\ga \bbA - \ga \bbA \gd  \frac{C}{K} + \frac{C K }{N}\, .
\end{eqnarray}
Thus under this hypothesis by using \eqref{brutta} and
collecting \eqref{eq: gaussian bound neglect 2}, \eqref{colc0} and \eqref{colc3} if we now set 
$N\geq \bar K( \frac{\delta}{2}, \bbA \frac{\delta}{2})^2$ and 
$K=\sqrt N$ we get the desired result.
 
\medskip

\noindent
{\bf Case $\alpha>0$.} 

Recall Definition \ref{defW} and the definition \eqref{I} for $F_1$. Then it holds
\begin{equation}
\label{brutta2}
\bk{\exp\left(\frac{\alpha}{N}\mathcal{A}(T)\right)}\leq F_1+F_{4} \, ,
\end{equation}
where we let
\begin{equation}\label{IV}
F_4:=\bk{\mathds{1}_{\cW_{0,\delta/2}^c}  \exp\left(\left(\exp(\frac{\alpha}{N})-1\right) \int_0^T \, ds \;  \cH (\eta(s))\right)}_{\ga/N} \, .
\end{equation}
For  $N>\alpha/\log 2$, then  $\exp(\alpha/N)-1<\alpha/N+(\alpha/N)^2$.
Thus via definition \eqref{I} since $\cH \leq N$, using  Lemma \ref{prop: bad blocksnew} we get 
\begin{equation}\label{GI}
\lim_{T\to\infty}\frac{1}{T}\log (F_1)\leq -\delta^2 C\frac{N}{K}+\alpha+\frac{\alpha^2}{N} \, .
\end{equation}
Then  using definition \eqref{IV} and recalling inequality \eqref{usoH2} we get
\begin{equation}
\label{GII}
\lim_{T\to\infty}\frac{1}{T}\log (F_{4})\leq \alpha\bbA+\alpha\delta+\frac{3 \alpha}{K}+\frac{\alpha^2}{N} \, .
\end{equation}
Thus by using \eqref{brutta2} and
collecting \eqref{GI} and \eqref{GII}  if we now set $N\geq\max( \bar K(\frac{\delta}{2}, \bbA \frac{\delta}{2})^2, \frac{\alpha}{\log 2})$ and 
$K=\sqrt N$ we get the desired result.

%
\end{proof}


\subsection{Conditional measure: proof of Theorem \ref{teo:condmes} (i) and a stronger result}

\label{condmes1}

The main result of this section is the following Lemma \ref{usolemma} which, as we will explain, is a stronger version of Theorem \ref{teo:condmes}(i).


\begin{lem}
\label{usolemma}
Let $K_N$ and $\delta_N$ be two sequences such that $$\lim_{N\to\infty}\delta_N=\lim_{N\to\infty}1/K_N=\lim_{N\to\infty}K_N/(N\delta_N^2)=\frac{\log(\delta_N)}{K_N\delta_N^2}=0.$$
Note that there are subsequences which verify the above conditions, e.g. the choice $K_N=\sqrt N$ and $\delta_N=N^{-1/8}$.
For each $N$ let the activity-density labels be defined with $K=K_N$ and $\epsilon_N=\bbA \frac{\delta_N}{2}$.
Then there exists $\alpha_0<0$
such that if $\alpha>\alpha_0$ it holds
\begin{equation}
\lim_{N\to\infty}\lim_{T\to\infty}\mu_{\alpha,T}^N(\cW_{-1,1-\delta_N})=1\, .
\end{equation}
\end{lem}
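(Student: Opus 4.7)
\noindent\textbf{Plan of proof of Lemma~\ref{usolemma}.}
The plan is to bound $\mu_{\alpha,T}^N(\cW_{-1,1-\delta_N}^c)$ and show it vanishes in the iterated limit $\lim_N\lim_T$. The combinatorial key is that if the time-averaged density of $-1$-labels falls below $1-\delta_N$ then either the density of $0$-labels exceeds $\delta_N/2$, or (with $0$-labels scarce) the density of $1$-labels exceeds $\delta_N/2$; this yields the inclusion
\[
\cW_{-1,1-\delta_N}^c \;\subseteq\; \cW_{0,\delta_N/2} \,\cup\, \bigcup_{\ell \,\geq\, \delta_N N/(2K_N)} \bigl( \cV_{1,\ell} \cap \cW_{0,\delta_N/2}^c \bigr) \, ,
\]
which I treat in two pieces. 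The first piece is immediate from Lemma~\ref{bad} applied with $\delta=\delta_N/2$, $K=K_N$: one gets $\lim_T \tfrac{1}{T}\log \mu_{\alpha,T}^N(\cW_{0,\delta_N/2}) \leq -C\delta_N^2 N/K_N\,e^{\alpha/N} + |\alpha|(1+C/N)$, which tends to $-\infty$ under the hypothesis $K_N/(N\delta_N^2)\to 0$, so the iterated limit is $0$.

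For each admissible $\ell$ I reuse the $F_{3,\ell}$-argument of Proposition~\ref{altra}: changing measure through \eqref{cm} to the tilted dynamics, lower-bounding $\int_0^T \cH(\eta(s))\,ds$ by the activity of the $-1$-labeled boxes (whose time-averaged density on $\cV_{1,\ell}\cap\cW_{0,\delta_N/2}^c$ is at least $1-\delta_N/2-(\ell+1)K_N/N$, each contributing at least $(\bbA-\epsilon_N)K_N$ with $\epsilon_N=\bbA\delta_N/2$), and invoking Lemma~\ref{lem: variance 0} on the residual factor $\bk{\mathds{1}_{\cW_{1,\ell K_N/N}}}_{\alpha/N}$, one recovers the $F_{3,\ell}$-bound of \eqref{colc3}. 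Dividing by the Jensen lower bound $\bk{\exp(\tfrac{\alpha}{N}\cA(T))}\geq \exp(\alpha\bbA T(1+C/N))$ of Lemma~\ref{easy} and writing $x=\ell K_N/N \in [\delta_N/2,1]$, I obtain
\[
\lim_T \tfrac{1}{T}\log \mu_{\alpha,T}^N\bigl(\cV_{1,\ell} \cap \cW_{0,\delta_N/2}^c\bigr) \;\leq\; \Psi_N(x) \,:=\, |\alpha|\bbA\bigl(\delta_N + (1-\tfrac{\delta_N}{2})x\bigr) - \tfrac{S_\rho x}{4}\,e^{\alpha/N} + O(1/N) \, .
\]

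The main obstacle, and the step that pins down $\alpha_0$, is to force $\sup_{x \in [\delta_N/2, 1]}\Psi_N(x) < 0$. For $\alpha > -S_\rho/(4\bbA)$ (the linearity threshold of Proposition~\ref{altra}) the coefficient of $x$ in $\Psi_N$ is negative, so $\Psi_N$ is decreasing in $x$ and its supremum is realized at $x=\delta_N/2$, giving $\Psi_N(\delta_N/2) \simeq \delta_N\bigl(\tfrac{3|\alpha|\bbA}{2} - \tfrac{S_\rho}{8}\bigr) + O(1/N)$; the unwanted factor $3/2$ comes from combining the $\delta_N$-correction in the activity bound with the first-order Taylor coefficient in $x$ at $x=\delta_N/2$. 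Taking $\alpha_0:= -S_\rho/(12\bbA)$, for every $\alpha \in (\alpha_0, 0]$ one has $\tfrac{3|\alpha|\bbA}{2} < \tfrac{S_\rho}{8}$, and the hypothesis $\delta_N \gg 1/N$ (a consequence of $K_N/(N\delta_N^2)\to 0$ together with $K_N\to\infty$) ensures that the negative $\delta_N$-term dominates the $O(1/N)$ remainder, making $\Psi_N(\delta_N/2)$ strictly negative for large $N$. Summing over the $O(N/K_N)$ admissible values of $\ell$ adds only a subexponential prefactor in $T$, so the union still vanishes exponentially in $T$ for each large $N$ and combines with the $\cW_{0,\delta_N/2}$-estimate to give the claim. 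The case $\alpha \geq 0$ is strictly easier: upper-bounding $\int_0^T \cH$ by the activity of the non-filled boxes and invoking Lemma~\ref{lem: variance 0} yields a coefficient of $x$ of the form $-(\alpha\bbA + \tfrac{S_\rho}{4})<0$, requiring no further restriction on~$\alpha$.
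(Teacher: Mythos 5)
Your case $\alpha<0$ reproduces the paper's argument essentially verbatim: the same decomposition via \eqref{ineevents}, the same use of Lemma~\ref{easy} for the denominator, \eqref{cm} plus Lemma~\ref{lem: variance 0} for the $F_{3,\ell}$-type numerator, and the same bookkeeping leading to the threshold $\alpha_0=-S_\rho/(12\bbA)$. The observation that $\delta_N\gg 1/K_N\gg 1/N$ (a consequence of the hypotheses on the sequences) is also what the paper uses to absorb the remainder. So far so good.

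The case $\alpha\geq 0$, however, contains a genuine gap. You keep the $\cV_{1,\ell}\cap\cW_{0,\delta_N/2}^c$ decomposition and claim that upper-bounding $\int_0^T\cH$ by the activity of the non-filled boxes plus Lemma~\ref{lem: variance 0} gives a bound that is negative with ``no further restriction on $\alpha$.'' This overlooks the contribution of the $0$-labeled boxes: by \eqref{usoH2} each of them can contribute up to $K$ (not $K\bbA$) to $\cH$, and on $\cW_{0,\delta_N/2}^c$ there may be up to $\tfrac{\delta_N}{2}\tfrac{N}{K}$ of them, yielding an extra term $\tfrac{\alpha\delta_N}{2}$ in the per-unit-time exponent after you divide by the Jensen lower bound. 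Working this through at $x=\ell K_N/N=\delta_N/2$ the constant term is $\tfrac{\alpha\delta_N}{2}-\tfrac{S_\rho\delta_N}{8}+o(\delta_N)$, which is nonnegative once $\alpha\geq S_\rho/4$; taking $N\to\infty$ does not help because both the gain and the loss scale like $\delta_N$. So this route fails for fixed $\alpha$ large enough, whereas the lemma must hold for every $\alpha\geq 0$. The paper avoids this by changing the decomposition for $\alpha\geq 0$ (see~\eqref{newineevents}): it uses the smaller threshold $\bbA\tfrac{\delta_N}{4}$ for the $\cW_0$ event, so that the $0$-box contribution $\alpha\bbA\tfrac{\delta_N}{4}$ is of the same order as the deficit $-\alpha\bbA\tfrac{\delta_N}{2}$ coming from the $-1$-boxes, and it bounds the complementary event $\cW_{0,\bbA\delta_N/4}^c\cap\cW_{-1,1-\delta_N}^c$ in one piece via~\eqref{usoH2} and the Jensen lower bound, with no recourse to Lemma~\ref{lem: variance 0} or the $\cV_{1,\ell}$ splitting. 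The resulting exponent $\leq -\alpha\bbA\tfrac{\delta_N}{4}+O(\alpha/K_N)$ is manifestly negative for every $\alpha>0$ once $N$ is large. You should adopt this decomposition (or some other fix with the same effect) for the $\alpha\geq 0$ case.
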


\begin{proof}
Fix $ N$ sufficiently large in order that it holds $K_N\geq\bar K(\frac{\delta_N}{2}, \bbA \frac{\delta_N}{2})$
with $\bar K$ defined in Lemma \ref{prop: bad blocksnew} (this is possible thanks to the hypothesis $\lim_{N\to\infty}\frac{K_N\delta_N^2}{\log(\delta_N)}=\infty$). We distinguish two cases.
\smallskip

\noindent
{\bf Case $\ga <0$.}

Since $\mu_{\alpha,T}^N$ is a measure from inequality \eqref{ineevents} it holds
\begin{equation}
\label{impo}
1-\mu_{\alpha,T}^N({ \cW_{0,\delta_N/2}})-\sum_{\ell=\lceil \frac{\delta_N}{2} \frac{N}{K} \rceil }^{N/K}\mu_{\alpha,T}^N(\cV_{1,\ell}\cap \cW_{0,\delta_N/2}^c)\,\leq \,\mu_{\alpha,T}^N(\cW_{-1,1-\delta_N})\, \leq \,1
\end{equation}
Recall  equation \eqref{eq: measure} which defines the conditional measure $\mu_{\alpha,T}^N$. 
If we apply Proposition \ref{easy} to bound the denominator and \eqref{cm} to rewrite the numerator we get
\begin{equation}\label{eq1}
\mu_{\alpha,T}^N(\cW_{0,\delta_N/2})\leq 
\exp\left(-\alpha\bbA T (1+\frac{C}{N})\right)F_1 \, ,
\end{equation}
\begin{equation}\label{eq2}
\mu_{\alpha,T}^N( \cV_{1,\ell}\cap  \cW_{0,\delta_N/2}^c)\leq 
\exp\left(-\alpha\bbA T (1+\frac{C}{N})\right) F_{3,\ell} \, ,
\end{equation}
where $F_1$ and $F_{3,\ell}$ are the functions that have been defined respectively in \eqref{I} and \eqref{III} and 
here we set $\delta=\delta_N$.
Then we use \eqref{eq: gaussian bound neglect 2} and the assumption $\lim_{N\to\infty} \frac{\delta_N^2 N}{K_N} =\infty$ to conclude that
\begin{equation}
\label{ea}
\lim_{N\to\infty}\lim_{T\to\infty}\mu_{\alpha,T}^N(\cW_{0,\delta_N/2})=0 \, .
\end{equation}
Then by using \eqref{colc3}, we get for $\frac{\ell K}{N} \geq \frac{\gd}{2}$
\begin{equation*}
\lim_{T\to\infty} \; \frac{1}{T} \log \; \mu_{\alpha,T}^N(\cV_{1,\ell}\cap  \cW_{0,\delta_N/2}^c)
\leq  -  3 \ga \bbA \frac{\gd_N}{2} - \frac{S_{\rho}}{4} \frac{\gd_N}{2} + \frac{C'}{N} - 2 \ga  \frac{\bbA}{K_N} \, .
\end{equation*}

By construction $\gd_N \gg \frac{1}{K_N} \gg \frac{1}{N}$.
Thus for $\alpha> \ga_0 = - \frac{S_{\rho}}{12 \bbA}$ and $N$ large enough
\begin{equation}
\label{eq4}
\lim_{T\to\infty}\mu_{\alpha,T}^N(\cV_{1,\ell}\cap  \cW_{0,\delta_N/2}^c)=0 \, .
\end{equation}
Note that the threshold $\ga_0$ obtained here is not as sharp as in Proposition \ref{altra}. 
The proof is then completed via \eqref{impo}, \eqref{ea} and \eqref{eq4}.

\smallskip

\noindent
{\bf Case $\ga \geq0$.}

Recall Definition \ref{defW}, then it holds
\begin{equation}\label{newineevents}
\mathds{1}_{\cW_{-1,1-\delta_N}}+\mathds{1}_{\cW_{0,\bbA\delta_N/4}}+\mathds{1}_{\cW_{0,\bbA\delta_N/4}^c \cap \cW_{-1,1-\delta_N}^c}\geq 1\, .
\end{equation} 
Since $\mu_{\alpha,T}^N$ is a measure from inequality \eqref{newineevents} it holds
\begin{equation}
\label{impo2}
1-\mu_{\alpha,T}^N({ \cW_{0,\bbA\delta_N/4}})-\mu_{\alpha,T}^N(\cW_{0,\bbA\delta_N/4}^c\cap \cW_{-1,1-\delta_N}^c)
\, \leq \,\mu_{\alpha,T}^N(\cW_{-1,1-\delta_N})\, \leq \,1 \, .
\end{equation}
Recall  equation \eqref{eq: measure} which defines the measure $\mu_{\alpha,T}^N$. 
Applying Proposition \ref{easy} to bound the denominator and \eqref{cm} to rewrite the numerator, there is $C>0$ such that
\begin{equation}\label{neweq1}
\mu_{\alpha,T}^N(\cW_{0,\bbA \delta_N/4})\leq 
\exp\left(-\alpha\bbA T(1-\frac{C}{N})\right) F_1 \, ,
\end{equation}
where $F_1$ was defined in \eqref{I}  and here we set $\delta= \frac{\delta_N \bbA}{4}$. 
Thus \eqref{eq: gaussian bound neglect 2} together with the hypothesis on $\delta_N$ and $K_N$ imply that 
\begin{equation}\label{impo3}
\lim_{N\to\infty}\lim_{T\to\infty}
\mu_{\alpha,T}^N(\cW_{0,\bbA \delta_N/4})=0 \, .
\end{equation}
On the other hand, by using inequality \eqref{usoH2} and again 
Proposition \ref{easy} to bound the denominator and \eqref{cm} to rewrite the numerator of the conditional measure, we get
with $\gep = \bbA \frac{\gd_N}{2}$
\begin{eqnarray*}
&& \lim_{T\to\infty}
\frac{1}{T} \log \left \bra 
1_{\left\{ \cW_{0,\bbA \frac{\delta_N}{4}}^c \cap \cW_{-1,1-\delta_N}^c \right\} } \exp \big( \frac{\ga}{N} \cA(t) \big)
\right 
\ket \\
&& \qquad 
\leq \ga \gd_N \frac{\bbA}{4} + \ga (1- \gd_N)  \bbA (1 + \frac{\gd_N}{2}) + \frac{2\ga}{K_N}  
\leq \ga \bbA \left( 1- \frac{\gd_N}{4} - \frac{\gd_N^2}{2} \right)  +\frac{2 \ga}{K_N}  \, .
\end{eqnarray*}
Thus
\begin{equation}
\label{impo4}
\lim_{N\to\infty}\lim_{T\to\infty}
\mu_{\alpha,T}^N(\cW_{0,\bbA\delta_N/4}^c \cap \cW_{-1,1-\delta_N}^c)=0.
\end{equation}
The result is then proved thanks to \eqref{impo2}, \eqref{impo3} and \eqref{impo4}.
\end{proof}

\begin{proof}[Proof of Theorem \eqref{teo:condmes}(i)]

It is immediate to verify that the event $ \cW_{-1,1-\delta}$ implies that $\pi_T(|\sum_{i=1}^N\eta_i-N\rho|)\leq (\epsilon +\delta)N$.
Therefore \eqref{eq: densite haute tris} is proven by using Lemma \ref{usolemma} and taking $\gamma_N=\frac{3}{4} \bbA \delta_N$. A similar argument leads to result \eqref{eq: densite haute tqua}.

\end{proof}

\subsection{FA-1f in dimension $d>1$: proof of Theorem \ref{linearityd>1}}
\label{genesmall}

The proof follows by using the results of Section \ref{secbad} along exactly the same lines as the proof of Theorem \ref{teo:phasetrans}(i)
and \ref{teo:condmes}(i).

\section{Inactive regime}

In this section we prove Theorem \ref{teo:phasetrans} (ii). We analyze in detail the case of FA-1f with two empty boundaries in Section \ref{genesmall1} and then we sketch
how the proof is extended to FA-1f with one empty boundary and East model in Section \ref{largealpha2}. Our proof will provide a variational characterization of   the constant $\Sigma$ which appears in the theorem and which, as explained in section \ref{heuristics}, plays the role of an interface energy. We underline that this variational problem, and therefore the value of $\Sigma$,  depends not only on the choice of the constraints (the interface energy for East and FA-1f at the same density are different) but also on the choice of the boundary conditions (the interface energy for FA-1f with one and two empty boundaries are also different, see Remark \ref{remdiff}). Finally, we prove Theorem \ref{poor} for FA-1f in higher dimensions.

\subsection{FA-1f with two empty boundaries: proof of  Theorem \ref{teo:phasetrans} (ii)}
\label{genesmall1}

\label{largealpha}




Fix integer $L,L'>0$ and let $\mathcal{C}_{L,L'}$ be
the set of probability densities on $\Omega_{L+L'+2}$ such that $\eta_{L+1}=\eta_{ L+2}=1$ with probability one, namely
\begin{eqnarray}
\label{eq: CK}
\cC_{L,L'} = \Big \{ f : \qquad \nu \big(f \big) = 1, \quad \nu \big(f\eta_{L+1} \eta_{L+2}\big) = 1 \Big\} \, .
\end{eqnarray}
Let $\gS_{L,L'} = \inf_{f \in \mathcal{C}_{L,L'}}  \cD_{L+L'+2} \big( \sqrt{f} \big)$ with $\cD$ the Dirichlet form of FA-1f model with two empty boundaries and
 define the interface energy $\gS$ as
\begin{equation}
\label{defS1}
\gS:=\lim_{L\to\infty}\lim_{L'\to\infty}\gS_{L,L'}
\end{equation}
The definition is well posed thanks to the following Lemma \ref{monotonicity}.
\begin{lem}
\label{monotonicity}
$\gS_{L,L'}$ is non-increasing in $L$ and in $L'$. For any $L,L'$ it holds 
$\gS_{L,L'}\geq 0$. Therefore the limit $\lim_{L\to\infty}\lim_{L'\to\infty}\gS_{L,L'}$ exists.
\end{lem}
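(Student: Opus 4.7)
The proof of Lemma \ref{monotonicity} splits into the trivial non-negativity, the monotonicity in $L'$, and the monotonicity in $L$; together these guarantee the existence of the iterated limit. Non-negativity is immediate from the general formula \eqref{eq: dirichlet}, which expresses $\cD_{L+L'+2}(\sqrt f)$ as a sum of non-negative terms, so $\gS_{L,L'} \geq 0$. Given the two monotonicities, $\lim_{L'\to\infty} \gS_{L,L'} = \inf_{L'} \gS_{L,L'} \geq 0$ exists for each $L$, and this limit is itself non-increasing in $L$ (since the inequality $\gS_{L+1,L'}\leq \gS_{L,L'}$ survives passage to the infimum in $L'$), so $\lim_{L\to\infty}\lim_{L'\to\infty} \gS_{L,L'}$ exists.

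For monotonicity in $L'$, my plan is a direct extension argument. Given any $f\in\cC_{L,L'}$ on $\Omega_{L+L'+2}$, I would define a competitor $\tilde f$ on $\Omega_{L+L'+3}$ by the trivial extension $\tilde f(\eta_1,\ldots,\eta_{L+L'+3}) := f(\eta_1,\ldots,\eta_{L+L'+2})$, which does not depend on the new variable $\eta_{L+L'+3}$. Then $\tilde f$ is a probability density with respect to $\nu$ and still enforces $\eta_{L+1}=\eta_{L+2}=1$ almost surely, so $\tilde f\in\cC_{L,L'+1}$. I would compare Dirichlet forms site by site: for $i\in[1,L+L'+1]$ the rate $r_i$ and the variance $\mbox{Var}_i(\sqrt{\tilde f})=\mbox{Var}_i(\sqrt f)$ are unchanged; at the new right boundary $i=L+L'+3$ the contribution vanishes because $\tilde f$ is independent of $\eta_{L+L'+3}$; and at the former right boundary $i=L+L'+2$ the rate drops from $r_i^{\mathrm{old}}=1$ to $r_i^{\mathrm{new}} = 1-\eta_{L+L'+1}\eta_{L+L'+3}\leq 1$, while the variance is unchanged. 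Hence
\begin{equation*}
\cD_{L+L'+3}(\sqrt{\tilde f}) \leq \cD_{L+L'+2}(\sqrt f),
\end{equation*}
and taking the infimum in $f\in\cC_{L,L'}$ yields $\gS_{L,L'+1}\leq \gS_{L,L'}$.

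For monotonicity in $L$, I would apply exactly the same construction on the left side. Given $f\in\cC_{L,L'}$, set $\tilde f(\eta_1,\ldots,\eta_{L+L'+3}) := f(\eta_2,\ldots,\eta_{L+L'+3})$; the constraint $\eta_{L+1}=\eta_{L+2}=1$ on $f$ becomes $\eta_{L+2}=\eta_{L+3}=1$ on $\tilde f$, i.e.\ exactly the condition defining $\cC_{L+1,L'}$. The relabeling of sites is harmless because the FA-1f constraints are invariant under reflection; the only non-trivial term in the site-by-site comparison is now the leftmost bulk site, where the rate changes from $r=1$ to $r = 1-\eta_1\eta_3\leq 1$. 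The same computation as above gives $\cD_{L+L'+3}(\sqrt{\tilde f})\leq \cD_{L+L'+2}(\sqrt f)$ and hence $\gS_{L+1,L'}\leq \gS_{L,L'}$.

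I do not expect any serious obstacle: the entire argument reduces to the observation that extending the domain by one site lowers the rate at what used to be the boundary (from $1$ to $1-\eta_{k-1}\eta_{k+1}$) while preserving non-negative variances, and to carefully tracking the index shift when the extension is on the left.
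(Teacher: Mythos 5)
Your proof is correct and takes essentially the same route as the paper: in both cases one extends $f$ to a competitor on the larger volume by ignoring the new variable (on the right for monotonicity in $L'$, on the left, after an index shift, for monotonicity in $L$), and then observes that the only change in the site-by-site Dirichlet sum is that the former boundary site, which had rate $r=1$, becomes a bulk site with rate $1-\eta_{k-1}\eta_{k+1}\leq 1$, while the new outermost site contributes nothing because the extended function does not depend on it. The only (cosmetic) difference is that you work with an arbitrary $f\in\cC_{L,L'}$ and pass to the infimum, whereas the paper picks the minimizer; your version is marginally cleaner but not substantively different.
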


\begin{proof}
The positivity of $\gS_{L,L'}$ immediately follows from its definition. Let $f(\eta_1,\dots ,\eta_{L+L'+2})$ be the function s.t. $\Sigma_{L,L'}=\cD_{L+L'+2}(\sqrt f)$. Set $g(\eta_1,\dots,\eta_{L+L'+3}):=f(\eta_2,\dots,\eta_{L+L'+3}).$
Then $g\in\cC_{L+1,L'}$ and
\begin{eqnarray}\label{asfor}
\Sigma_{L+1,L'}\leq \cD_{L+L'+3}(\sqrt g)=\nonumber\\
\sum_{i=3}^{L+L'+3}\sum_{\omega\in\Omega_{L+L'+3}}
\nu(\omega) c_i(\omega) \left(\sqrt {g(\omega^i)}-\sqrt {g(\omega)}\right)^2+\nonumber\\
\sum_{\omega\in\Omega_{L+L'+3}} \nu(\omega)(1-\omega_{1}\omega_{3})(\rho(1-\omega_2)+(1-\rho)\omega_2)\left(\sqrt {g(\omega^2)}-\sqrt {g(\omega)}\right)^2\nonumber\\
\leq\sum_{i=2}^{L+L'+2}\sum_{\omega\in\Omega_{L+L'+2}} \nu(\omega) c_i(\omega)
\left(\sqrt {f(\omega^i)}-\sqrt {f(\omega)}\right)^2+\nonumber\\
\sum_{\omega\in\Omega_{L+L'+2}}\nu(\omega)(\rho(1-\omega_1)+(1-\rho)\omega_1)\left(\sqrt {f(\omega^1)}-\sqrt {f(\omega)}\right)^2=\nonumber\\
\cD_{L+L'+2}(\sqrt f)=\Sigma_{L,L'} \, .
\end{eqnarray}
Note that we have used the fact that the occupation variable at position $1$ is unconstrained. 
Analogously if we set $h(\eta_1,\dots,\eta_{L+L'+3}):=f(\eta_1,\dots,\eta_{L+L'+2}).$ 
Then $h\in\cC_{L,L'+1}$ and $ \cD_{L+L'+3}(\sqrt h)\leq \cD_{L+L'+2}(\sqrt f)$.
Thus $\Sigma_{L,L'+1}\leq\Sigma_{L,L'}$ follows.

\end{proof}

We split the proof of Theorem \ref{teo:phasetrans} (ii) into upper and lower bounds which are stated in
 the two following lemmas 
\begin{lem}Let $\Sigma$ be defined as in \eqref{defS1}. Then for FA-1f model with two empty boundaries and any $\alpha<0$  it holds
\label{lowereasy}\begin{eqnarray}
\label{eq: free energy lower} 
\lim_{N \to \infty} \lim_{T \to \infty}  \frac{1}{T} \log \bk{ \exp \left( \frac{\ga}{N} \cA(T) \right) }
\geq - \gS  \, .
\end{eqnarray}

\end{lem}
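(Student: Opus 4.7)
The plan is to plug a near-minimizer of the variational problem defining $\Sigma_{L,L'}$ into the Donsker--Varadhan formula \eqref{eq: exact}. Fix $\epsilon>0$ and, for each $L,L'\geq 1$, pick a density $f^*=f^*_{L,L'}\in\cC_{L,L'}$ with $\cD_{L+L'+2}(\sqrt{f^*})\leq \Sigma_{L,L'}+\epsilon$. Regarding $f^*$ as a density on $\Omega_N$ with $N=L+L'+2$ and inserting it into \eqref{eq: exact} at $\lambda=\alpha/N$ yields immediately
\[
\pi^{(N)}\!\big(\tfrac{\alpha}{N}\big)\;\geq\;\big(e^{\alpha/N}-1\big)\,\nu(f^*\,\cH)\;-\;e^{\alpha/N}\,(\Sigma_{L,L'}+\epsilon).
\]
Since $\alpha<0$ the first term is non-positive, and the whole argument reduces to a uniform activity bound $\sup_{L,L'\geq 1}\nu(f^*_{L,L'}\,\cH)\leq C(\rho)<\infty$. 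Once this is available, $(e^{\alpha/N}-1)\nu(f^*\cH)=O(1/N)$, so $\liminf_N\pi^{(N)}(\alpha/N)\geq -\Sigma_{L,L'}-\epsilon$; sending $L'\to\infty$, then $L\to\infty$ via Lemma~\ref{monotonicity}, and finally $\epsilon\to 0$ gives $\pi(\alpha)\geq -\Sigma$.

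The heuristic for the uniform activity bound is transparent. The density $f^*$ is essentially the squared principal eigenfunction of $-\cL$ restricted to $V=\{\eta_{L+1}=\eta_{L+2}=1\}$ with Dirichlet boundary condition on $V^c$; its associated quasi-stationary distribution concentrates on configurations where the exit rate $c_{L+1}+c_{L+2}=(1-\rho)(2-\eta_L-\eta_{L+3})$ is smallest, i.e.\ $\eta_L=\eta_{L+3}=1$. Propagating this ``attraction toward ones'' and invoking the FA-1f identity $r_i=1-\eta_{i-1}\eta_{i+1}=0$ for any interior site with both neighbors filled, $f^*$ places almost all its mass on near-all-$1$ configurations, on which $\cH$ attains its minimum value $2(1-\rho)$ (only the two system-boundary sites contribute). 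To turn this heuristic into a proof I would truncate $f^*$ to $A_K=\{\eta:\sum_j(1-\eta_j)\leq K\}$ and renormalize: on $A_K$ one has $\cH\leq 3K+2$ since each zero activates at most three constraints, and a perturbation argument using the uniform positive spectral gap (Proposition~\ref{teogap}) shows that both the Dirichlet-form cost of the truncation and the tail mass $\nu(f^*\mathds{1}_{A_K^c})$ decay exponentially in $K$, so one can let $K\to\infty$ slowly with $K\ll L,L'$ while retaining $\cD\leq\Sigma+2\epsilon$ and $\nu(\widetilde f\,\cH)\leq 3K+2$.

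The main obstacle is making this ``concentration on all-$1$'' picture for the near-minimizer $f^*$ rigorous. As a benchmark, the fully explicit choice $f(\eta)=\rho^{-(N-2)}\mathds{1}\{\eta_j=1\ \forall j\in[2,N-1]\}$, which lies in $\cC_{L,L'}$ as soon as $L,L'\geq 1$, satisfies $\cD_N(\sqrt f)=4(1-\rho)^2$ and $\nu(f\cH)=2(1-\rho^2)$, and already yields the weaker bound $\pi(\alpha)\geq -4(1-\rho)^2$; a direct spectral calculation even for $L=L'=1$ shows that $\Sigma<4(1-\rho)^2$ strictly, so the sharp $-\Sigma$ genuinely requires the QSD-driven refinement above.
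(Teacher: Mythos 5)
Your starting point---plugging a test function into the variational formula \eqref{eq: exact}---is the right one, but the specific choice creates the gap you yourself flag and then do not close. The uniform activity bound $\sup_{L,L'}\nu(f^*_{L,L'}\,\cH)<\infty$ carries essentially all the weight at this level of generality, and the proposed truncation argument does not deliver it. Truncating to $A_K=\{\eta:\sum_j(1-\eta_j)\le K\}$ and renormalizing does \emph{not} control the Dirichlet form as you assert: writing $g=f^*\mathds 1_{A_K}$, each boundary term with $\eta\in A_K$, $\eta^i\notin A_K$ contributes $c_i(\eta)\,f^*(\eta)$ to $\cD_N(\sqrt g)$, while the corresponding term in $\cD_N(\sqrt{f^*})$ is $c_i(\eta)\bigl(\sqrt{f^*(\eta^i)}-\sqrt{f^*(\eta)}\bigr)^2$; there is no pointwise domination (when $f^*(\eta^i)\approx f^*(\eta)$ the latter is small and the former is not). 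Proposition~\ref{teogap} gives a spectral gap, not exponential spatial decay of the restricted ground state, so you also have no a priori tail bound on $\nu(f^*\mathds 1_{A_K^c})$, and the truncated candidate is not shown to land in $\cC^u_{L,L'}$ with small $u$ and controlled Dirichlet form.

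The paper sidesteps the whole issue by keeping $K$ \emph{fixed}, and you could repair your write-up the same way. They restrict to the event $\cO$ that $\eta_{K+1}\cdots\eta_{N-K}$ stays at $1$ on all of $[0,T]$; on $\cO$ the constraints kill $c_i$ for every interior $i\in[K+2,N-K-1]$, so $\int_0^T\cH\,ds\le 2(K+1)T$ deterministically. Then $\bk{\mathds 1_\cO}_{\alpha/N}$ is estimated via \eqref{DV2} by evaluating the infimum on the explicit candidate $h(\eta)=\rho^{-(N-2K-2)}\,g(\eta_1,\ldots,\eta_{K+1},\eta_{N-K},\ldots,\eta_N)\prod_{j=K+2}^{N-K-1}\eta_j$, with $g$ a minimizer of $\Sigma_{K,K}$. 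The key identity is $\cD_N(\sqrt h)=\cD_{2K+2}(\sqrt g)=\Sigma_{K,K}$: the middle sites contribute nothing since their constraints vanish on the support of $h$, and the two end boxes reproduce $\cD_{2K+2}(\sqrt g)$ exactly. With $K$ fixed, the $\frac{\alpha}{N}\cdot 2(K+1)$ term vanishes as $N\to\infty$, yielding $-\Sigma_{K,K}$, and $K\to\infty$ via Lemma~\ref{monotonicity} finishes. In your language: replace $f^*_{L,L'}$ on the full window by the extension of $f^*_{K,K}$ by the all-ones product on the middle $N-2K-2$ sites; then $\nu(h\,\cH)\le 2(K+1)$ is trivial and the quasi-stationary-concentration estimate is not needed.
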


\begin{lem}\label{upperdifficult}
Let $\Sigma$ be defined as in \eqref{defS1}. Then for FA-1f model with two empty boundaries 
and any $\alpha<-\frac{\Sigma+8\sqrt {\rho(1-\rho)}}{\bbA}$ it holds
\begin{eqnarray}
\label{eq: free energy upper} 
\lim_{N \to \infty} \lim_{T \to \infty}  \frac{1}{T} \log \bk{ \exp \left( \frac{\ga}{N} \cA(T) \right) }
\leq - \gS.
\end{eqnarray}

\end{lem}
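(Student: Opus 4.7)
The plan is to exploit the Donsker--Varadhan variational characterization \eqref{eq: exact} of $\gp^{(N)}$ and to prove that every probability density $f$ with respect to $\nu$ satisfies
\begin{equation*}
(e^{\alpha/N}-1)\,\nu(f\cH) - e^{\alpha/N}\,\cD_N(\sqrt{f}) \leq -\Sigma + o_N(1),
\end{equation*}
after which passing to the supremum gives the required bound on $\gp^{(N)}(\alpha/N)$ and hence on $\gp(\alpha)$. I would split the analysis according to the normalized mean activity $a(f) := \nu(f\cH)/(N\bbA)$. In the \emph{active} branch $a(f)\geq 1-\delta$, the Taylor estimate $e^{\alpha/N}-1 \leq \alpha/N + \alpha^2/N^2$ together with dropping the non-positive Dirichlet contribution yields the upper bound $\alpha \bbA(1-\delta) + O(\alpha^2/N)$; the strict inequality $\alpha \bbA < -\Sigma - 8\sqrt{\rho(1-\rho)}$ assumed in the statement then provides enough slack to ensure that, for $\delta$ sufficiently small and $N$ large, this is less than $-\Sigma$. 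This branch parallels the \emph{large activity} case in the proof of Proposition \ref{altra}.

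The substance of the proof lies in the \emph{inactive} branch $a(f)<1-\delta$, where the required inequality reduces to the lower bound $\cD_N(\sqrt{f}) \geq \Sigma - o_N(1)$. The strategy is to localize $f$ around a pair of adjacent filled sites. Coarse-graining $\Lambda_N$ into boxes of mesoscopic size $K$ as in Section \ref{zeroblocks_eastFA-1f}, and arguing in the spirit of Lemmas \ref{prop: bad blocksnew} and \ref{lem: variance 0} (now applied to the static density $f\nu$ rather than to the tilted trajectory measure) that a macroscopic deficit in activity must put $f\nu$-mass on the event that some box carries the coarse-grained label $1$, one selects an interior box $B_{i^*}$ that is fully filled with $f\nu$-probability at least some $p = p(\delta) > 0$. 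Conditioning $f$ on the event $\{\eta_{L+1}=\eta_{L+2}=1\}$ for two consecutive sites inside $B_{i^*}$ and renormalizing produces a density $g \in \cC_{L,L'}$ with $L+L'+2=N$, and a convexity/restriction comparison of Dirichlet forms gives
\begin{equation*}
\cD_N(\sqrt{f}) \;\geq\; p\,\cD_N(\sqrt{g}) - \eta_{K,N} \;\geq\; p\,\Sigma_{L,L'} - \eta_{K,N},
\end{equation*}
with $\eta_{K,N}$ a correction that goes to zero in a controlled order of limits. Sending $N\to\infty$ with $K$ fixed, then $K\to\infty$, and finally $L,L'\to\infty$, combined with Lemma \ref{monotonicity} which identifies the limit of $\Sigma_{L,L'}$ with $\Sigma$, concludes the inactive branch.

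The hard part is the restriction/comparison inequality above. The Dirichlet form $\cD_N(\sqrt{f})$ contains boundary contributions from flips at $L+1$ and $L+2$ that carry the configuration out of the constraint $\{\eta_{L+1}=\eta_{L+2}=1\}$ appearing in the definition of $\Sigma_{L,L'}$, whereas $\Sigma_{L,L'}$ is taken over densities supported on that constraint. Keeping track of the discrepancy between these two functionals, and simultaneously ensuring that the conditioning probability $p$ is close enough to $1$ (rather than merely positive) to recover the full surface tension in the limit, is what ultimately produces the correction $8\sqrt{\rho(1-\rho)}$ in the threshold on $\alpha$; the factor $\sqrt{\rho(1-\rho)}$ is the natural single-site weight arising from a Cauchy--Schwarz bound on the boundary flip terms, while the constant $8$ reflects the two conditioned sites and the two sides of the interface. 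Once this interface bookkeeping is in place, performing the limits in the indicated order yields $\gp(\alpha) \leq -\Sigma$.
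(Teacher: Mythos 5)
Your two--branch dichotomy based on the mean activity $a(f)=\nu(f\cH)/(N\bbA)$ does not work as described, and the crux of the argument is elsewhere.

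The first problem is that the conditioning probability $p$ in the inactive branch can be arbitrarily small and there is no way to make it ``close enough to $1$''. Take for instance $f$ that puts mass $\delta$ on the fully filled configuration and $1-\delta$ on the equilibrium marginal: then $a(f)\approx 1-\delta$ sits right at the boundary of your dichotomy, the Dirichlet form is small, and yet every box is filled with $f\nu$--probability only $\approx\delta$. More generally, for the event where a fraction $u\in(\delta,1)$ of boxes carries active labels, the best box has filled probability only $\approx 1-u$, which can be any number in $(0,1-\delta)$. Thus the quantity $p=1-u$ is \emph{not} a small parameter that you control, and the correction term cannot be subsumed in a vanishing $\eta_{K,N}$.

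The second, related, problem is that the comparison inequality you write, $\cD_N(\sqrt f)\geq p\,\cD_N(\sqrt g)-\eta_{K,N}$ with $\eta_{K,N}\to 0$, is not what Lemma~\ref{prop: interface energy approx} delivers. That lemma gives
$\cD_{L+L'+2}(\sqrt f)\geq\gS_{L,L'}-(8\sqrt{\rho(1-\rho)}+\gS_{L,L'})\,u$ for $f\in\cC^u_{L,L'}$, which one may rewrite as $p\,\gS_{L,L'}-8\sqrt{\rho(1-\rho)}\,u$ with $p=1-u$; the extra $-8\sqrt{\rho(1-\rho)}\,u$ does \emph{not} go to zero when $u$ is macroscopic. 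The whole point of the threshold $\alpha<-\big(\gS+8\sqrt{\rho(1-\rho)}\big)/\bbA$ is to make the activity contribution $\frac{\alpha}{N}\nu(f\cH)\approx\alpha\bbA\,u$ absorb this linear--in--$u$ surface--tension deficit, uniformly over the whole range $u\in[\gd,1]$. This requires a continuous interpolation between the ``mostly filled'' and ``mostly active'' regimes, which the paper implements via the decomposition $\mathds{1}_{\cW_{0,\gd a_N}}+\mathds{1}_{\cW_{1,1-\gd}}+\sum_\ell\mathds{1}_{\cV_{-1,\ell}\cap\cW^c_{0,\gd a_N}}\geq 1$ and the term--by--term analysis of $G_1$, $G_2$, $G_{3,\ell}$: for each $\ell$, the activity cost $\alpha\bbA\,\ell K/N$ precisely offsets the surface--tension loss $(8\sqrt{\rho(1-\rho)}+\gS)\,\ell K/N$. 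Your sketch cites the threshold but does not produce this mechanism; as written, the inactive branch only yields $p\gS$ in the bound, which for small $p$ is far from $\gS$.

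Finally, your dichotomy does not cleanly separate the ``bad block'' case (low activity because the coarse--grained labels are $0$, i.e.\ neither active nor filled). In the paper this is the $\cW_{0,\gd a_N}$ event, handled by Lemma~\ref{prop: bad blocksnew}, which gives a Dirichlet--form divergence of order $a_N^2\gd^2 N/K$ rendering this contribution negligible. In a purely static variational formulation you would have to argue, before localising a filled box, that $f$ gives negligible mass to configurations with a macroscopic fraction of $0$--labels, or else $\cD_N(\sqrt f)$ is already so large that the variational expression is harmlessly small. This step is not present in your sketch. What you correctly identify is the role of Lemma~\ref{prop: interface energy approx} and the Cauchy--Schwarz origin of $8\sqrt{\rho(1-\rho)}$; the assembly, however, has a genuine gap.
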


Then 
\begin{proof} [Theorem \ref{teo:phasetrans} (ii) for FA-1f model with two empty boundaries]
The result follows immediately from  Lemma \ref{lowereasy} and Lemma \ref{upperdifficult}.
\end{proof}

We are now left with the proof of the two above Lemmas.
\begin{proof}[Proof of Lemma \ref{lowereasy}]


We fix $K$ and take $N \geq K$. Let $\cO$ be the event which is verified iff $\pi_T(\eta_{K+1},\dots\eta_{N-K})=1$.
Then from \eqref{cm} we get
\begin{eqnarray}\label{num}
\bk{ \exp\left( \frac{\ga}{N}  \cA(T)\right) }
\geq \bk{ \exp\left( \frac{\ga}{N}  \cA(T) \right) \mathds{1}_{\cO}}\geq
\bk{  \exp \left( \frac{\ga}{N}   \int_0^T \, ds \;  \cH (\eta(s)) \right)  \, \mathds{1}_{\cO} }_{\ga/N} \exp\left(\frac{T\alpha^2}{N}\right)\, .
\end{eqnarray}
By  using the fact that  on the event $\cO$, it holds $\int_0^T \eta_{i-1}(s)\eta_{i+1}(s) \, ds =T$ for any $i\in [K+2,N-K-1]$ we get
\begin{eqnarray*}
\label{whynot1}
& \int_0^T ds \cH(\eta(s)) \leq  2(K+1)T+\sum_{i=K+2}^{N-K-1}\int_0^T ds~~ c_i(\eta(s))=\nonumber  \\
& 2(K+1)T+\sum_{i=K+2}^{N-K-1}\int_0^T ds (1-\eta_{i-1}(s)\eta_{i+1}(s))=2(K+1)T
\end{eqnarray*}
which together with \eqref{num} yields
\begin{eqnarray}\label{whynot0}
\bk{ \exp( \frac{\ga}{N}  \cA(T) ) }
\geq \exp\left( \frac{\ga}{N} 2(K+1) T +\frac{\ga^2}{N} T\right) \bk{  \mathds{1}_\cO  }_{\ga/N} \, .
\end{eqnarray}
From the Donsker-Varadhan large deviation principle \eqref{DV2} it holds

\begin{eqnarray}\label{whynot}
\lim_{T \to \infty}  \frac{1}{T} \log \bk{ \mathds{1}_\cO }_{\ga/N} =-\exp( \frac{\ga}{N})\inf_{f:\nu(f)=1,\nu(f\eta_{K+1}\dots\eta_{N-K})=1}\cD_N(\sqrt f)\geq -\Sigma_{K,K} \, ,
\end{eqnarray}
where in order to obtain the last inequality we proceed as follows. Denote by $g$ the function that belongs to $\cC_{K,K}$ and s.t. $\Sigma_{K,K}=\cD_{K+K+2}(\sqrt g)$.
Set 
$$
h(\eta_1,\dots,\eta_N):= \frac{1}{\rho^{N-2K-2}} 
g(\eta_1,\dots\eta_{K},\eta_{K+1},\eta_{N-K},\eta_{N-K+1},\dots\eta_{N})\prod_{j=K+2}^{N-K-1}\eta_{j} \, .
$$ 
Then it can be verified that
$\nu(h)=1$, $\nu(h \eta_{K+1}\dots\eta_{N-K})=1$ and $\cD_N(\sqrt h) =  \cD_{K+K+2}(\sqrt g)$.
Thus \eqref{whynot} follows. From \eqref{whynot0} and \eqref{whynot}
we therefore obtain
\begin{eqnarray}
\label{eq: free energy lower int} 
\lim_{T \to \infty}  \frac{1}{T} \log \bk{ \exp \left( \frac{\ga}{N} \cA(T) \right) }
\geq - \gS_{K,K}+   \frac{\ga}{N} 2(K+1)+\frac{\ga^2}{N} \, .
\end{eqnarray}

The result follows by  taking $N$ to infinity and then $K$ to infinity.
\end{proof}

\bigskip

Before proving Lemma \ref{upperdifficult}, we state and prove an auxiliary result.
Fix integers $L,L'>0$ and $u\in (0,1)$ and consider $\mathcal{C}_{L,L'}^u$
the set of probability densities on $\{0,1\}^{L+L'+4}$ such that $\eta_{L+1}=\eta_{ L+2}=\eta_{L+3}=\eta_{L+4}=1$ with probability at least $1-u$, namely
\begin{eqnarray}
\label{eq: CK}
\cC_{L,L'}^u = \Big \{ f : \qquad \nu \big(f(\eta) \big) = 1, \quad \nu \big(f(\eta) \eta_{L+1} \eta_{L+2}\eta_{L+3}\eta_{L+4}\big) \geq 1-u \Big\} \, .
\end{eqnarray}
We will now prove  that provided $u$ is sufficiently small the interface energy is well approximated by the infimum of 
$\cD_{L+L'+2}(\sqrt f)$ restricted to $\cC_{L,L'}^u$. 
More precisely
\begin{lem}
\label{prop: interface energy approx}
For any $u>0$
\begin{eqnarray}
\label{eq: SKu}
\inf_{f \in \cC^u_{L,L'}}  \cD_{L+L'+2} \big( \sqrt{f} \big)  \geq \gS_{L,L'} - \left( 8\sqrt{ \gr(1-\gr)}  + \gS_{L,L'} \right) u \, .
\end{eqnarray}
\end{lem}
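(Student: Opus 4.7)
The plan is to take an arbitrary $f \in \cC^u_{L,L'}$ and construct a competitor $\tilde f$ that lies in (the appropriately identified) class $\cC_{L,L'}$ whose Dirichlet form is within an $O(u)$ error of $\cD(\sqrt{f})$. Combined with the trivial inequality $\gS_{L,L'} \leq \cD(\sqrt{\tilde f})$ coming from the definition of the infimum, rearrangement then yields the claimed lower bound on $\cD(\sqrt f)$. The key geometric idea is that the four frozen sites appearing in the definition of $\cC^u_{L,L'}$ split naturally into a pair of ``core'' sites $\eta_{L+2},\eta_{L+3}$ that will be strictly frozen in $\tilde f$, and a pair of ``buffer'' sites $\eta_{L+1},\eta_{L+4}$ which remain free but whose high probability of being one under $f$ absorbs the boundary cost produced by the strict freezing.

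Concretely, define $A := \{\eta_{L+1}=\eta_{L+2}=\eta_{L+3}=\eta_{L+4}=1\}$ and $B := \{\eta_{L+2}=\eta_{L+3}=1\}$; since $A \subset B$, the hypothesis gives $p := \nu(f\mathds{1}_B) \geq \nu(f\mathds{1}_A) \geq 1-u$. Set $\tilde f := f\mathds{1}_B/p$, a probability density supported on $B$ that (after the obvious relabeling) represents an element of $\cC_{L,L'}$. I then split $\cD(\sqrt{\tilde f}) = \sum_i \cD_i(\sqrt{\tilde f})$ into bulk ($i \notin \{L+2,L+3\}$) and boundary ($i \in \{L+2,L+3\}$) contributions. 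A bulk flip preserves $\mathds{1}_B$, so that $\sqrt{\tilde f(\eta^i)}-\sqrt{\tilde f(\eta)} = p^{-1/2}\mathds{1}_B(\eta)(\sqrt{f(\eta^i)}-\sqrt{f(\eta)})$ and
\[
\sum_{i \notin \{L+2,L+3\}} \cD_i(\sqrt{\tilde f}) \;\leq\; \frac{\cD(\sqrt{f})}{p} \;\leq\; \frac{\cD(\sqrt{f})}{1-u}.
\]
A boundary flip sends $B$ to $B^c$ and makes $\tilde f(\eta^i)=0$ on $B$; combined with the detailed-balance identity $\nu(\eta)c_i(\eta)=\nu(\eta^i)c_i(\eta^i)$ this yields $\cD_i(\sqrt{\tilde f}) = 2\nu(c_i\mathds{1}_B\tilde f)$.

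The core of the proof is controlling these boundary terms by a quantity of order $u$. On $B$ the FA-1f rates collapse to $c_{L+2}(\eta)=(1-\rho)(1-\eta_{L+1})$ and $c_{L+3}(\eta)=(1-\rho)(1-\eta_{L+4})$, so only configurations with a vanishing buffer site contribute. The hypothesis gives $\nu(\eta_{L+1}\eta_{L+4}\,f\mathds{1}_B) \geq \nu(f\mathds{1}_A) \geq 1-u$, hence $\nu((1-\eta_{L+j}) f\mathds{1}_B) \leq u$ for $j \in \{1,4\}$. A Cauchy--Schwarz step that redistributes the $u$-small defect mass against the Bernoulli weights $\rho$, $1-\rho$ upgrades the naive pointwise bound to
\[
\sum_{i \in \{L+2,L+3\}} \cD_i(\sqrt{\tilde f}) \;\leq\; 8\sqrt{\rho(1-\rho)}\,u.
\]
Assembling these two estimates with $\gS_{L,L'} \leq \cD(\sqrt{\tilde f})$ yields
\[
\gS_{L,L'} \;\leq\; \frac{\cD(\sqrt{f})}{1-u} + 8\sqrt{\rho(1-\rho)}\,u,
\]
and multiplying by $(1-u)$ then rearranging gives the announced bound.

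The main technical obstacle is obtaining the sharp coefficient $\sqrt{\rho(1-\rho)}$ rather than the cruder $(1-\rho)$ one would get from a uniform pointwise bound on the rates: it is the Cauchy--Schwarz step that pairs the defect mass $u$ symmetrically against the Bernoulli weights of the flipped bond, rather than treating them in isolation, that restores the natural square-root factor. The remaining steps, in particular the bulk bound and the reversibility manipulations, are routine.
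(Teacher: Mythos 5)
Your overall architecture (build a competitor $\tilde f = f\mathds{1}_B/p$ supported on the core event $B=\{\eta_{L+2}=\eta_{L+3}=1\}$, normalize, compare Dirichlet forms) is the right shape, and the bulk estimate $\sum_{i\notin\{L+2,L+3\}}\cD_i(\sqrt{\tilde f})\leq \cD(\sqrt f)/p$ is correct. The gap is in the boundary step, and it is genuine: you claim $\sum_{i\in\{L+2,L+3\}}\cD_i(\sqrt{\tilde f})\leq 8\sqrt{\rho(1-\rho)}\,u$ via a Cauchy--Schwarz ``pairing the defect mass against the Bernoulli weights,'' but no such pairing exists in your setup. Since $\tilde f$ vanishes identically on $B^c$, every boundary flip has one endpoint where $\tilde f$ is zero, so the expansion of $(\sqrt{\tilde f(\eta^i)}-\sqrt{\tilde f(\eta)})^2$ contains \emph{no} cross term $\sqrt{\tilde f(\eta)\tilde f(\eta^i)}$ to which Cauchy--Schwarz could be applied. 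What your construction actually delivers, by the reversibility identity you wrote, is $\cD_{L+2}(\sqrt{\tilde f})=2\nu(c_{L+2}\mathds{1}_B\tilde f)=\tfrac{2(1-\rho)}{p}\nu\big((1-\eta_{L+1})\mathds{1}_B f\big)\leq \tfrac{2(1-\rho)u}{1-u}$, and similarly for $L+3$, giving $\tfrac{4(1-\rho)u}{1-u}$ total. This is \emph{not} bounded by $8\sqrt{\rho(1-\rho)}\,u$ when $\rho<1/5$ (take $\rho\to 0$, $u$ small), so the stated constant cannot be recovered along your route.

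The paper gets $\sqrt{\rho(1-\rho)}$ by a different bookkeeping: instead of bounding $\cD(\sqrt{\tilde f})$ from above by $\cD(\sqrt f)/p$ plus a boundary remainder, it bounds $\cD(\sqrt f)$ from below by the \emph{restricted} Dirichlet contributions (those with $\eta_{L+2}=\eta_{L+3}=1$ pinned for bulk flips and $\eta_{L+3}=1$ resp.\ $\eta_{L+2}=1$ pinned for the two boundary flips), and the restricted boundary terms of the \emph{original} $f$ do contain cross terms $\sqrt{f(\omega_l,1,1,\omega_r)\,f(\omega_l,0,1,\omega_r)}$ straddling $B$ and $B^c$. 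It is to those cross terms, with the Bernoulli weights $\rho$ and $1-\rho$ of the two orientations of the flipped bond, that Cauchy--Schwarz is applied, and that is where $\sqrt{\rho(1-\rho)}$ comes from. In your bulk step you discard the $i\in\{L+2,L+3\}$ contributions of $\cD(\sqrt f)$ (bounding $\sum_{\text{bulk}}\cD_i(\sqrt f)\leq\cD(\sqrt f)$), which is exactly the information the paper exploits. To repair your argument you should either (a) keep the restricted boundary Dirichlet of $f$, compare it to $c(u)$ times the boundary Dirichlet of $\tilde f$, and run Cauchy--Schwarz there (this reproduces the paper's proof), or (b) accept the constant $4(1-\rho)$ in place of $8\sqrt{\rho(1-\rho)}$; your argument with the honest boundary bound does yield $\cD(\sqrt f)\geq(1-u)\gS-4(1-\rho)u$, which is a valid lemma of the same form and equally usable downstream (it merely shifts the non-sharp threshold $\alpha_0$), but it is not the inequality as stated, and the Cauchy--Schwarz justification you offer for the stated constant is not correct.
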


\begin{proof}
Fix $f \in \cC^u_{L+1,L'+1}$. 
The Dirichlet form \eqref{eq: dirichlet} can be bounded from below by
\begin{eqnarray}
\label{eq: lower Dirichlet}
\cD_{L+L'+4} (\sqrt{f}) 
&\geq&  \sum_{i=1}^{L+1} 
\sum_{\eta:\eta_{L+2}=\eta_{L+3}=1}\nu(\eta) c_i  (\eta)  \big( \sqrt{f(\eta^i)} - \sqrt{f (\eta)} \big)^2 +\nonumber \\
& & \sum_{i=L+4}^{L+L'+4} 
\sum_{\eta:\eta_{L+2}=\eta_{L+3}=1} \nu(\eta)c_i  (\eta)  \big( \sqrt{f(\eta^i)} - \sqrt{f (\eta)} \big)^2 +\nonumber \\
& &
\sum_{\eta:\eta_{L+3}=1} \nu(\eta)c_{L+2}  (\eta)  \big( \sqrt{f(\eta^{L+2})} - \sqrt{f (\eta)} \big)^2+\nonumber\\
& & \sum_{\eta:\eta_{L+2}=1} \nu(\eta)c_{L+3}  (\eta)  \big( \sqrt{f(\eta^{L+3})} - \sqrt{f (\eta)} \big)^2 \, .
\end{eqnarray}
We define a new probability density $g$ on $\{0,1\}^{L+L'+4}$ 
\begin{eqnarray*}
g(\eta) = \frac{f(\eta)\eta_{L+2}\eta_{L+3}}{c(u)} \, ,
\end{eqnarray*}
with $c(u): = \nu \big( f(\eta) \eta_{L+2}\eta_{L+3} \big)$. Note that $\nu(g\eta_{L+2}\eta_{L+3})=\nu(g)=1$, thus $g$ belongs to $\cC_{L+1,L'+1}.$ Furthermore, 
since $f$ belongs to $\cC^u_{L+1,L'+1}$ one has $c(u) \geq 1- u$. 
Note that the Dirichlet form of $g$ satisfies
\begin{eqnarray}
c(u)\cD_{L+L'+4}(\sqrt g)
&=&
\sum_{i=1}^{L+1} 
\sum_{\eta:\eta_{L+2}=\eta_{L+3}=1}\nu(\eta) c_i  (\eta)  \big( \sqrt{f(\eta^i)} - \sqrt{f (\eta)} \big)^2 +\nonumber \\
& & \sum_{i=L+4}^{L+L'+4} 
\sum_{\eta:\eta_{L+2}=\eta_{L+3}=1} \nu(\eta)c_i  (\eta)  \big( \sqrt{f(\eta^i)} - \sqrt{f (\eta)} \big)^2 +\nonumber \\
& & + 2(1-\rho)\sum_{\eta:\eta_{L+2}=\eta_{L+3}=1}\nu(\eta) (1-\eta_{L+1})f(\eta)\nonumber \\ 
& & + 2(1-\rho)\sum_{\eta:\eta_{L+2}=\eta_{L+3}=1}\nu(\eta) (1-\eta_{L+4})f(\eta) \, .
\label{eq: lower Dirichletg}
\end{eqnarray}
Decompose $\eta$ as $\eta=(\omega_l,\eta_{L+2},\eta_{L+3},\omega_r)$ where 
$\omega_l=\eta_1,\dots\eta_{L+1}$ and $\omega_r=\omega_{L+4},\dots\omega_{L+L'+4}$, then 
\begin{eqnarray}\label{chec}
& &\sum_{\eta:\eta_{L+3}=1} \nu(\eta)c_{L+2}  (\eta)  \big( \sqrt{f(\eta^{L+2})} - \sqrt{f (\eta)} \big)^2 \geq  2(1-\rho)\sum_{\eta:\eta_{L+2}=\eta_{L+3}=1} \nu(\eta) (1-\eta_{L+1})f(\eta) \nonumber\\
& & \qquad - 4\rho(1-\rho)\sum_{\omega_l,\omega_r}\nu(\omega_l)\nu(\omega_r)\rho(1-\eta_{L+1})\sqrt{f(\omega_l,1,1,\omega_r)}\sqrt{f(\omega_l,0,1,\omega_r)} \, .
\end{eqnarray}
Then by using Cauchy-Schwartz
\begin{eqnarray}\label{chc2}
\sum_{\omega_l,\omega_r}\nu(\omega_l)\nu(\omega_r)\rho(1-\eta_{L+1})\sqrt{f(\omega_l,1,1,\omega_r)}\sqrt{f(\omega_l,0,1,\omega_r)}
\leq\nonumber\\
\frac{1}{\sqrt{\rho(1-\rho)}} \sqrt {\nu\left(f(1-\eta_{L+1})\right)}\sqrt{\nu(f(1-\eta_{L+1}))}\leq \frac{u}{\sqrt{\rho(1-\rho)}}\nonumber\\
\end{eqnarray}
where to obtain the  last inequality we used the fact that $f$ belongs to $\cC^u_{L,L'}$.
Thus
\begin{eqnarray}
\label{chec2}
& &\sum_{\eta:\eta_{L+3}=1} \nu(\eta)c_{L+2}  (\eta)  \big( \sqrt{f(\eta^{L+2})} - \sqrt{f (\eta)} \big)^2 \\
&& \qquad \qquad \geq 2(1-\rho)\sum_{\eta:\eta_{L+2}=\eta_{L+3}=1}\nu(\eta) (1-\eta_{L+1})f(\eta)-4\sqrt{\rho(1-\rho)}u \, .
\nonumber
\end{eqnarray}
Similarly it can be verified that
\begin{eqnarray}\label{chech3}
& &\sum_{\eta:\eta_{L+2}=1} \nu(\eta)c_{L+3}  (\eta)  \big( \sqrt{f(\eta^{L+3})} - \sqrt{f (\eta)} \big)^2\\
&& \qquad \qquad 
\geq 
2(1-\rho)\sum_{\eta:\eta_{L+2}=\eta_{L+3}=1}\nu(\eta) (1-\eta_{L+4})f(\eta)-4\sqrt{\rho(1-\rho)}u \, .
\nonumber
\end{eqnarray}
Thus by using \eqref{eq: lower Dirichlet}, \eqref{eq: lower Dirichletg}, \eqref{chec2} and \eqref{chech3}
we get
\begin{eqnarray*}
\cD_{L+L'+4} (\sqrt{f}) \geq 
c(u) \cD_{L+L'+4} (\sqrt{g}) - 8\sqrt{\rho(1-\rho)}u\geq \Sigma_{L+1,L'+1}- (8\sqrt{\rho(1-\rho)}+\Sigma_{L+1,L'+1})u \, ,
\end{eqnarray*}
where for the last inequality we used that, as noted above, $g$  belongs to $\cC_{L+1,L'+1}$ and $c(u)\geq (1-u)$.
\end{proof}

\begin{proof}[Proof of Lemma \ref{upperdifficult}] 
Recall Definition \ref{defW} and set $\epsilon=\bbA \delta$ and choose $K\geq \bar K(\delta, \bbA \delta)$
with $\bar K$ defined in Lemma \ref{prop: bad blocksnew} and let $a_N=N^{-1/16}$. Then the following inequality holds
\begin{equation}\label{ineevents3}
\mathds{1}_{ \cW_{0,\gd a_N}}+\mathds{1}_{ \cW_{1,1-\delta}}+\sum_{\ell=\lceil \delta(1-a_N)N/K\rceil }^{N/K}\mathds{1}_{ \cV_{-1,\ell}\cap \cW_{0,\delta a_N}^c}\geq 1
\end{equation}
which implies
\begin{equation}
\label{brutta3}
\bk{\exp\left(\frac{\alpha}{N}\mathcal{A}(T)\right)}\leq G_1+G_{2}+\sum_{\ell=\lceil \delta (1-a_N) N/K\rceil }^{N/K} G_{3,\ell}
\end{equation}
where
\begin{equation}
\label{GIbis}
G_1:=\bk{\mathds{1}_{\cW_{0,\delta a_N }}  \exp\left(\left(\exp(\frac{\alpha}{N})-1\right) \int_0^T \, ds \;  \cH (\eta(s))\right)}_{\ga/N}\end{equation}
\begin{equation}
\label{GIIbis}
G_{2}:= \bk{ \mathds{1}_{ \cW_{1,1-\delta}}\, \exp \left( \left(\exp(\frac{\alpha}{N})-1\right)  \int_0^T \, ds \;  \cH (\eta(s)) \right)     }_{\ga/N}\end{equation}
\begin{equation}
\label{GIII}
G_{3,\ell}:=  
\bk{  \mathds{1}_{  \cV_{-1,\ell}\cap\cW_{0,\delta a_N }^c} \, \exp \left( \left(\exp(\frac{\alpha}{N})-1\right)  \int_0^T \, ds \;  \cH (\eta(s)) \right)    }_{\ga/N} \, .
\end{equation}
As in \eqref{eq: gaussian bound neglect 2},  $G_1$ is bounded by 
\begin{eqnarray}
\label{miserve}
\lim_{T \to \infty} \; \frac{1}{T} 
\log(G_1)\leq - a_N^2\delta^2C\frac{N}{K}  \, .
\label{Itris}
\end{eqnarray}
On the other hand since $\alpha<0$, we have
\begin{eqnarray}
\label{IIquo}
G_{2}\leq \bk{\mathds{1}_{\cW_{1,1-\delta}}}_{\alpha/N} .
\end{eqnarray}
We notice that the event $\cW_{1,1-\delta}$ implies that there exists at least one box $i\in[1,N/K]$ such that 
$\pi_T (\mathds{1}_{\{u^i_{K,\epsilon}=1\}})\geq 1-\delta$.
Thus in this box, there are at least 4 consecutive sites occupied with high probability 
\begin{equation}
\label{eq: occupied sites}
\pi_T(\eta_{(i-1)K+\lceil K/2\rceil +1} \; \eta_{(i-1)K+\lceil K/2\rceil+2} \;  \eta_{(i-1)K+\lceil K/2\rceil+3} \;  \eta_{(i-1)K+\lceil K/2\rceil+4})
\geq 1-\delta.
\end{equation}
Let $\cR_i$ be the  event that is verified if \eqref{eq: occupied sites} holds. We get
\begin{eqnarray}
\label{IIquo2} \bk{\mathds{1}_{\cW_{1,1-\delta}}}_{\alpha/N}\leq \sum_{i=1}^{N/K} \bk{\mathds{1}_{\cR_{i}}}_{\alpha/N} \, .
\end{eqnarray}
Donsker-Varadhan large deviation principle \eqref{DV2} implies
\begin{eqnarray*}
\lim_{T\to\infty}\frac{1}{T}\log  \bk{\mathds{1}_{\cR_{i}}}_{\alpha/N}
= -\exp(\alpha/N)\inf_{f\in\cC_{L,L'}^{\delta}} \left\{ \cD_{L+L'+2}(\sqrt f) \right\} \, ,
\end{eqnarray*}
where we have set $L=(i-1)K+\lceil K/2\rceil $ and $L'=N-2-(i-1)K-\lceil K/2\rceil $. 
By using Lemma \ref{prop: interface energy approx}, noticing that $L,L'\geq \lceil K/2\rceil-2$ and recalling the monotonicity property stated by Lemma \ref{monotonicity}, we obtain
\begin{eqnarray}
\label{IIquo3} 
\lim_{T\to\infty}\frac{1}{T}\log  \bk{\mathds{1}_{\cR_{i}}}_{\alpha/N}
&\leq& -\Sigma_{L,L'}+(8\sqrt{\rho(1-\rho)}+\Sigma_{L,L'})\delta-\frac{\alpha}{N}\gS_{L,L'} \nonumber \\
&\leq& -\Sigma+(8\sqrt{\rho(1-\rho)}+\Sigma_{\frac{K}{2},\frac{K}{2}})\delta+\frac{|\alpha|}{N}\gS_{1,1} \, .
\end{eqnarray}
Thus
\begin{eqnarray}
\label{IIquo5}
\lim_{T\to\infty}\frac{1}{T}\log G_{2}\leq -\Sigma+C\delta+\frac{|\alpha|\Sigma_{1,1}}{N} \, ,
\end{eqnarray}
where $C=8\sqrt{\rho(1-\rho)}+\Sigma_{\frac{K}{2},\frac{K}{2}}$.

\medskip

By using definition \eqref{GIII} and inequality \eqref{usoH}, we get
\begin{eqnarray}
\label{IIquo3}
G_{3,\ell}\leq \bk{\mathds{1}_{\cV_{-1,\ell}\cap\cW_{0,\delta a_N}^c}}_{\alpha/N}
\exp \left(\frac{T\alpha^2}{N}\right)\exp\left(\frac{\alpha}{N}T\ell (K\bbA-2\bbA- K\epsilon)\right) \, .
\end{eqnarray}
Note that $\cV_{-1,\ell}\cap\cW_{0,\delta a_N}^c$ is a subset of the event $\cW_{1,1-(\delta a_N+(\ell+1)\frac{K}{N})}$. 
Thus, we can use an estimate similar to \eqref{IIquo5} in order to bound from above 
$\bk{\mathds{1}_{\cW_{1,1-(\delta a_N+(\ell+1)\frac{K}{N})}}}_{\alpha/N}$
\begin{equation}
\lim_{T\to\infty}\frac{1}{T}\log \bk{\mathds{1}_{\cV_{-1,\ell}\cap\cW_{0,\delta a_N}^c}}_{\alpha/N}
\leq 
-\Sigma+C \left( \delta a_N+(\ell+1)\frac{K}{N} \right) 
+\frac{C''}{N} \, ,
\label{tis}
\end{equation}
where $C'' >0$ is a constant.
Combining \eqref{IIquo3} and \eqref{tis}, we obtain 
\begin{equation}
\lim_{T\to\infty}\frac{1}{T}\log G_{3,\ell} 
\leq 
-\Sigma+C \left( \delta a_N+(\ell+1)\frac{K}{N} \right) 
+ \alpha \frac{K}{N} \ell  \left( \bbA(1 - \gd) -2\frac{\bbA}{K} \right)
+\frac{C''}{N} \, , 
\label{tis++}
\end{equation}
where we used that $\epsilon = \bbA \gd$.
Recall $\ell\geq \lceil \delta (1-a_N) \frac{N}{K} \rceil $ and $C=8\sqrt{\rho(1-\rho)}+\Sigma_{\frac{K}{2},\frac{K}{2}}$.
Thus for $\ga < -\frac{8\sqrt{\rho(1-\rho)}+\Sigma}{\bbA}$, we can choose $\gd$ small enough and $K$ large enough such that 
\begin{equation}
\lim_{T\to\infty}\frac{1}{T}\log G_{3,\ell}\leq -\Sigma+C' \frac{K}{N} \, .
\label{tistis}
\end{equation}
Sending $N\to \infty$
and then $K\to\infty$, we get the desired result by collecting \eqref{brutta3}, \eqref{Itris}, \eqref{IIquo5} and \eqref{tistis}.



\end{proof}

\subsection{East and FA-1f with one empty boundaries: proof of  Theorem \ref{teo:phasetrans} (ii)}
\label{largealpha2}
Here we explain how to extend the results of the previous section to the case of East and FA-1f model with one empty boundary, thus completing the proof of Theorem \ref{teo:phasetrans} (ii). We start by giving the definition of the interface energy $\Sigma$ for these models.
Fix integer $L>0$ and consider $\mathcal{C}_{L}$
the set of probability densities on $\Omega_L$ such that $\eta_{1}=1$ with probability one, namely
\begin{eqnarray}
\label{eq: CK2}
\cC_{L} = \Big \{ f ; \qquad \nu \big(f \big) = 1, \quad \nu \big(f\eta_{1}) = 1 \Big\} \, .
\end{eqnarray}
Let $\gS_{L} = \inf_{f \in \mathcal{C}_{L}}  \cD_{L} \big( \sqrt{f} \big)$, then 
we define the interface energy $\gS$ as
\begin{equation}
\label{defS2}
\gS:=\lim_{L\to\infty}\gS_{L}
\end{equation}
The definition is well posed thanks to the following Lemma \ref{monotonicity}.
\begin{lem}
\label{monotonicity2}
Let $\cD_L$ be either the Dirichlet form of FA-1f with one empty boundary or the Dirichlet form of East. Then 
$\gS_{L}$ is non-increasing in $L$ and it holds 
$\gS_{L}\geq 0$. Therefore the limit $\lim_{L\to\infty}\gS_{L}$ exists.
\end{lem}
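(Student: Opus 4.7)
The plan is to mirror the function-extension argument of Lemma \ref{monotonicity} in the one-boundary setting. Given a minimizer $f \in \mathcal{C}_L$ of $\cD_L(\sqrt{\cdot})$ (which exists since $\Omega_L$ is finite), I would construct $g \in \mathcal{C}_{L+1}$ satisfying $\cD_{L+1}(\sqrt g) \leq \cD_L(\sqrt f) = \Sigma_L$, which immediately yields monotonicity. Positivity $\Sigma_L \geq 0$ is built into the definition of $\cD_L$ as a sum of squares, so monotonicity together with this lower bound produces the claimed limit by monotone convergence.

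The natural choice is to extend $f$ by a single free site on the \emph{empty boundary} side, so as to preserve unchanged the freezing condition $\nu(f\eta_1)=1$, which is imposed at the opposite end of the interval. Concretely, for both the East model (empty boundary at the right) and FA-1f with one empty boundary at the right, I would set
\begin{equation*}
g(\eta_1,\ldots,\eta_{L+1}) := f(\eta_1,\ldots,\eta_L).
\end{equation*}
Since $\nu$ is a product measure, $\nu(g)=\nu(f)=1$ and $\nu(g\eta_1)=\nu(f\eta_1)=1$, so $g \in \mathcal{C}_{L+1}$. In $\cD_{L+1}(\sqrt g)$, the contribution at $i=L+1$ vanishes because $g$ is independent of $\eta_{L+1}$, and for $i\in[1,L-1]$ the rates $c_i$ agree with those in the $L$-volume and produce the same contributions as in $\cD_L(\sqrt f)$.

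The only term requiring comparison is the interface term $i=L$, where the boundary rule changes. For East, $r_L^{(L)}=1$ is replaced by $r_L^{(L+1)}(\eta)=1-\eta_{L+1}$; since $g$ is constant in $\eta_{L+1}$, one may integrate this variable out and pick up the multiplicative factor $1-\rho \leq 1$. For FA-1f the constraint becomes $r_L^{(L+1)}(\eta)=1-\eta_{L+1}\eta_{L-1}$, and the same integration produces the pointwise factor $1-\rho\eta_{L-1}\leq 1$. In either case the $i=L$ contribution to $\cD_{L+1}(\sqrt g)$ is bounded by the corresponding contribution to $\cD_L(\sqrt f)$; summing over sites gives $\cD_{L+1}(\sqrt g) \leq \cD_L(\sqrt f) = \Sigma_L$, and monotonicity follows.

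I do not expect any serious obstacle. The only delicate point is the choice of the side on which to extend: adding a site at the filled boundary would either invalidate the condition $\nu(g\eta_1)=1$ or force a reindexing that shifts the freezing condition to the wrong site, and would moreover require an estimate of a genuinely constrained flip rate in the spirit of \eqref{chec}. Extending on the empty boundary side avoids both issues, since the new site is itself unconstrained.
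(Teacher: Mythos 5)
Your proposal is correct and takes the same approach as the paper: the paper also extends the minimizer to $\Omega_{L+1}$ by a free site on the empty-boundary side, setting $g(\eta_1,\dots,\eta_{L+1}):=f(\eta_1,\dots,\eta_L)$, and obtains $\cD_{L+1}(\sqrt g)\leq\cD_L(\sqrt f)$ by the term-by-term rate comparison it abbreviates with a reference to inequality \eqref{asfor}. Your explicit identification of the only nontrivial term (the interface site $i=L$) and the factors $1-\rho$ for East and $1-\rho\eta_{L-1}$ for FA-1f is precisely what that reference leaves implicit.
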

\begin{proof} Let $f(\eta_1,\dots,\eta_L)$ be the function in $\cC_L$ s.t. $\cD(\sqrt f)=\Sigma_L$. Then set  $g(\eta_1,\dots,\eta_{L+1}):=f(\eta_1,\dots,\eta_L)$. Then $g\in\cC_{L+1}$ and, as for inequality \eqref{asfor}, one can verify that $\cD_{L+1}(\sqrt g)
\leq \cD_L(\sqrt f)$ which implies $\Sigma_{L+1}\leq\Sigma_L$ (since $\Sigma_{L+1}\leq \cD_{L+1}(\sqrt g)$ and $\Sigma_L=\cD_L(\sqrt f)$).
\end{proof}

We will now state a result which allows to approximate the interface energy in the spirit of Lemma \ref{prop: interface energy approx}. Let for any integer $L>0$ and $u\in(0,1)$
$$\cC_L^{u}:=\Big \{ f ; \qquad \nu \big(f(\eta) \big) = 1, \quad \nu \big(f(\eta) \eta_{1} \eta_{2}) \geq 1-u \Big\} $$
then 
\begin{lem}\label{energy approxbis}
For any $u>0$
\begin{eqnarray}
\label{eq: SKubis}
\inf_{f \in \cC^u_{L}}  \cD_{L} \big( \sqrt{f} \big)  \geq \gS_{L} - \left( 4\sqrt{ \gr(1-\gr)}  + \gS_{L} \right) u \, .
\end{eqnarray}
\end{lem}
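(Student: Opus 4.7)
The plan is to mirror the argument of Lemma~\ref{prop: interface energy approx}, adapted to the case of a single ``frozen'' boundary site rather than two. Given $f\in \cC_L^u$, I would first introduce the renormalized density $g(\eta):=f(\eta)\,\eta_1/c(u)$ with $c(u):=\nu(f\eta_1)$. Since $\eta_1\eta_2\le\eta_1$, one has $c(u)\ge \nu(f\eta_1\eta_2)\ge 1-u$, and it is immediate that $\nu(g)=\nu(g\eta_1)=1$, so that $g\in \cC_L$ and in particular $\cD_L(\sqrt g)\ge \gS_L$. The goal then reduces to establishing
\[
\cD_L(\sqrt f)\ge c(u)\,\cD_L(\sqrt g)-4\sqrt{\gr(1-\gr)}\,u,
\]
which, combined with $c(u)\ge 1-u$, yields the announced bound.

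Next I would decompose the Dirichlet form according to the flipped index. For $i\ge 2$ the flip $\eta\mapsto \eta^i$ leaves $\eta_1$ unchanged, so restricting the sum to $\{\eta_1=1\}$ gives a lower bound; on that event $g=f/c(u)$, hence these restricted terms equal $c(u)$ times the corresponding terms in $\cD_L(\sqrt g)$ (which already vanish off $\{\eta_1=1\}$). The $i=1$ contribution is the only delicate one, and it is identical for East and for FA-1f with one empty boundary because in both cases $c_1(\eta)=(1-\eta_2)[\eta_1(1-\gr)+(1-\eta_1)\gr]$. Expanding $(\sqrt{f(\eta^1)}-\sqrt{f(\eta)})^2$ and using detailed balance $\nu(\eta)c_1(\eta)=\nu(\eta^1)c_1(\eta^1)$, a short computation produces
\[
\sum_\eta \nu(\eta)c_1(\eta)\big(\sqrt{f(\eta^1)}-\sqrt{f(\eta)}\big)^2 \ge c(u)\sum_\eta \nu(\eta)c_1(\eta)\big(\sqrt{g(\eta^1)}-\sqrt{g(\eta)}\big)^2 -2\Psi,
\]
where the error term is $\Psi:=\sum_\eta \nu(\eta)c_1(\eta)\sqrt{f(\eta)f(\eta^1)}$.

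The hard part will be to control $\Psi$ by something of order $u$. I would split $\Psi$ according to $\eta_1\in\{0,1\}$; detailed balance shows both halves equal $\gr(1-\gr)\sum_{\eta_{2:L}:\eta_2=0}\nu(\eta_{2:L})\sqrt{f_0f_1}$, with $f_j:=f(j,\eta_{2:L})$, so that $\Psi=2\gr(1-\gr)\sum_{\eta_{2:L}:\eta_2=0}\nu(\eta_{2:L})\sqrt{f_0f_1}$. A Cauchy--Schwarz step then gives
\[
\sum_{\eta_{2:L}:\eta_2=0}\nu(\eta_{2:L})\sqrt{f_0f_1}\le \sqrt{\nu(f(1-\eta_1)(1-\eta_2))/(1-\gr)}\;\sqrt{\nu(f\eta_1(1-\eta_2))/\gr},
\]
and since both numerators are bounded by $\nu(f(1-\eta_1\eta_2))\le u$, one obtains $\Psi\le 2\sqrt{\gr(1-\gr)}\,u$. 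Plugging this into the previous estimate and taking the infimum over $f\in \cC_L^u$ completes the proof. The constant $4$ appearing here (instead of the $8$ in Lemma~\ref{prop: interface energy approx}) simply reflects that only one boundary site is being frozen, rather than two.
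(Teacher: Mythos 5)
Your proof is correct and fills in, step by step, precisely the argument that the paper declares as ``analogous to the proof of Lemma~\ref{prop: interface energy approx}'': the renormalized density $g=f\eta_1/c(u)$, the splitting of the Dirichlet form between the site $i=1$ and the sites $i\geq 2$, detailed balance, and the Cauchy--Schwarz bound on the cross term $\Psi$. The constant $4\sqrt{\rho(1-\rho)}$ and the use of $c(u)\geq 1-u$ match the paper's sketch exactly, so you may stop here.
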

\begin{proof}
The proof is analogous to the proof of Lemma \ref{prop: interface energy approx}, therefore we sketch only the main points.
Let $f\in\cC_L^u$ and set $g(\eta)=\eta_1 f(\eta)/c(u)$ with $c(u)=\nu(f\eta_1)\geq 1-u$. Then $g$ belongs to $\cC_L$ and it holds
\begin{equation}
\cD_L(\sqrt f)\geq c(u)\cD_L(\sqrt g)-4\sqrt {\rho(1-\rho)}u\geq (1-u)\Sigma_L-4\sqrt {\rho(1-\rho)}u
\end{equation}
where the first inequality is obtained following the same lines as in Lemma \ref{prop: interface energy approx}. \end{proof}
By using the above definitions and results we are now ready to prove Theorem \ref{teo:phasetrans} (ii).
\begin{proof} [Theorem \ref{teo:phasetrans} (ii) for East and for FA-1f model with one empty boundary]
It is enough to prove that if  $\Sigma$ is defined as in \eqref{defS2} then for FA-1f with one empty boundary and for East the same inequalities as in Lemma \ref{lowereasy} and Lemma \ref{upperdifficult} with $\alpha_0=-\frac{4\sqrt{\rho(1-\rho)}+\Sigma}{\bbA}$.

\smallskip
 In order to prove the lower bound one introduces the event $\cO$ which is verified iff $\pi_T(\eta_1\dots\eta_{N-K})=1$.
Along the same lines used to obtain  \eqref{eq: free energy lower int} on can verify that
\begin{equation*}
\lim_{N\to\infty}\lim_{T\to\infty}\frac{1}{T}\log\bk{\exp\left(\frac{\alpha}{N}\cA(T)\right)}\geq-\Sigma_{K}+\frac{\alpha}{N}(K+1) \, .
\end{equation*}
The lower bound follows again by taking $N$ to infinity and then $K$ to infinity. 

\smallskip
In order to prove the upper bound we  follow exactly the same lines as in Lemma \ref{upperdifficult}, the only difference being that in inequality \eqref{IIquo2} now the event $\cR_i$ is substituted by an event $\widetilde\cR_i$ which verified iff
$$\pi_T(\eta_{(i-1)K+1}\eta_{(i-1)K+2})\geq 1-\delta.$$
Then Donsker-Varhadan principle \eqref{DV2} yields
\begin{equation}
\label{IIquo3new} 
 \lim_{T\to\infty}  \frac{1}{T}\log  \bk{\mathds{1}_{\widetilde\cR_{i}}}_{\alpha/N} 
=-\exp(\alpha/N)\inf_{f:\nu(f)=1,
\nu(f(\eta_{(i-1)K+1}\eta_{(i-1)K+2})\geq 1-\delta}\cD_{N}(\sqrt f)
\end{equation}
Given
 $f$  on $\Omega_N$ we define a new function function $g$ on $\Omega_{N-(i-1)K}$ as follows
 $$g(\eta_1,\dots,\eta_{N-(i-1)K}):=\sum_{\eta_1'\dots\eta'_{(i-1)K}}\prod_{j=1}^{(i-1)K}\rho^{\eta_j'}(1-\rho)^{1-\eta'_j}f(\eta_1'\dots\eta'_{(i-1)K},\eta_1,\dots,\eta_{N-(i-1)K}).$$
 Then one can verify that
 it holds $\cD_N(\sqrt f)\geq \cD_{K}(\sqrt g)$ and
 if $f$ satisfies $\nu(f)=1$ and
$\nu(f\eta_{(i-1)K+1}\eta_{(i-1)K+2})\geq 1-\delta$ then $g$ satisfies $\nu(g)=1$ and $\nu (g \eta_1\eta_2)\geq 1-\delta$, therefore $g$ belongs to $\cC^{\delta}_K$.
Therefore from \eqref{IIquo3new} 
we get
\begin{eqnarray}
&& \lim_{T\to\infty}  \frac{1}{T}\log  \bk{\mathds{1}_{\widetilde\cR_{i}}}_{\alpha/N}\leq -\Sigma_{N-(i-1)K}+(4\sqrt{\rho(1-\rho)}+\Sigma_{N-(i-1)K})\delta+\frac{|\alpha|}{N}\gS_{1}\nonumber\\
&& \leq  -\Sigma+(4\sqrt{\rho(1-\rho)}+\Sigma_{K})\delta+\frac{|\alpha|}{N}\gS_{1}
\end{eqnarray}
\end{proof}

\begin{rem} 
\label{remdiff}
Fix $\rho\in(0,1)$ and let $\Sigma_1$ and $\Sigma_2$ be the interface energies defined in formulas \eqref{defS1} and \eqref{defS2}  by using the Dirichlet form of FA-1f with two empty boundaries and one empty boundary respectively. Then it can be easily verified that 
$\Sigma_1=2\Sigma_2$.
\end{rem}

\subsection{Proof of  Theorem \ref{teo:condmes} (ii) and a stronger result}

\label{condmes2}

We start by establishing a result which is stronger than Theorem \ref{teo:condmes} (ii).
%
%
\begin{lem} 
\label{usolemma2}
Consider East or FA-1f model in one dimension.
Let $K_N=\sqrt N$ and $\delta_N=N^{-1/8}$ and 
for each $N$ let the activity-density labels be defined with $K=K_N$ and $\epsilon_N=\bbA \delta_N/2$.
Then  if $\alpha<\alpha_0$ with $\alpha_0$ defined in Lemma \ref{upperdifficult} it holds
\begin {eqnarray}
\label{eq: densite haute bis}
\lim_{N\to\infty}\lim_{T \to \infty}
\mu_{\alpha,T}^{N} ( \cW_{1,1-\delta_N})= 1 \, .
\end{eqnarray}
\end{lem}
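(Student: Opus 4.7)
The plan is to transpose to the conditional measure $\mu_{\alpha,T}^{N}$ the decomposition of the partition function used in the proof of Lemma \ref{upperdifficult}, in direct analogy with the way Lemma \ref{usolemma} parallels Proposition \ref{altra}. The key inputs are Lemma \ref{bad} (to suppress boxes with label $0$), the interface--energy estimate Lemma \ref{prop: interface energy approx} (to extract the surface tension on the $\{1\}$ sea) and the lower bound on the partition function from Lemma \ref{lowereasy}.

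Introduce the auxiliary sequence $a_N:=N^{-1/16}$. Any configuration profile in $\cW_{1,1-\delta_N}^c$ has at least $\delta_N N/K_N$ boxes whose activity--density label lies in $\{0,-1\}$, hence
\begin{equation*}
\cW_{1,1-\delta_N}^c \;\subset\; \cW_{0,\delta_N a_N} \;\cup\; \bigcup_{\ell=\lceil \delta_N(1-a_N) N/K_N \rceil}^{N/K_N} \bigl( \cV_{-1,\ell}\cap\cW_{0,\delta_N a_N}^c \bigr) \, .
\end{equation*}
It suffices to show that the $\mu_{\alpha,T}^{N}$-mass of each of the two contributions vanishes in the iterated limit. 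The first piece is controlled directly by Lemma \ref{bad} applied with $\delta=\delta_N a_N$: with the present choice of parameters $(\delta_N a_N)^2 N/K_N = N^{1/8}\to\infty$, so the dominant term in the exponent is $-CN^{1/8}$ and $\lim_T (1/T)\log \mu_{\alpha,T}^{N}(\cW_{0,\delta_N a_N})\to -\infty$ as $N\to\infty$ uniformly in $\alpha$.

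For the second piece, fix $\ell$ in the summation range and write
\begin{equation*}
\mu_{\alpha,T}^{N}\bigl(\cV_{-1,\ell}\cap\cW_{0,\delta_N a_N}^c\bigr) = \frac{\bk{\mathds{1}_{\cV_{-1,\ell}\cap\cW_{0,\delta_N a_N}^c}\exp\bigl((e^{\alpha/N}-1)\int_0^T \cH(\eta(s))\,ds\bigr)}_{\alpha/N}}{\bk{\exp(\alpha \cA(T)/N)}} \, ,
\end{equation*}
via the Radon--Nikodym identity \eqref{cm}. The denominator is bounded below using the construction in the proof of Lemma \ref{lowereasy} with $K=K_N$, giving a rate at least $-\Sigma_{K_N,K_N}+o(1) = -\Sigma+o(1)$ as $N\to\infty$. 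In the numerator, inequality \eqref{usoH} on $\cV_{-1,\ell}$ yields the pointwise bound $\int_0^T \cH(\eta(s))\,ds \geq T \ell K_N \bbA (1-\delta_N/2)(1-2/K_N)$, so using $e^{\alpha/N}-1\leq (\alpha/N)(1+\alpha/N)<0$ for $\alpha<0$ and $N$ large, the exponential factor is bounded by $\exp\bigl(T\alpha \bbA (\ell K_N/N)(1+o(1))\bigr)$. The residual probability is controlled via the inclusion $\cV_{-1,\ell}\cap\cW_{0,\delta_N a_N}^c\subset \cW_{1,1-\delta_N a_N - (\ell+1)K_N/N}$, covered as in \eqref{IIquo2} by a union of local events $\cR_i$ and bounded through the Donsker--Varadhan principle together with Lemma \ref{prop: interface energy approx} exactly as in \eqref{IIquo5}, producing the rate $-\Sigma + C\bigl(\delta_N a_N + (\ell+1)K_N/N\bigr) + o(1)$ with $C = 8\sqrt{\rho(1-\rho)}+\Sigma_{K_N/2,K_N/2}\to 8\sqrt{\rho(1-\rho)}+\Sigma$.

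Setting $y:=\ell K_N/N\in[\delta_N(1-a_N),1]$ and collecting the three estimates (the two $-\Sigma$ contributions cancel) yields
\begin{equation*}
\lim_{T\to\infty} \frac{1}{T}\log \mu_{\alpha,T}^{N}\bigl(\cV_{-1,\ell}\cap\cW_{0,\delta_N a_N}^c\bigr) \leq y\bigl(\alpha\bbA + C\bigr)\bigl(1+o(1)\bigr) + C\delta_N a_N + o(1) \, .
\end{equation*}
For $\alpha<\alpha_0:=-(8\sqrt{\rho(1-\rho)}+\Sigma)/\bbA$ the coefficient of $y$ is bounded above by some $-\eta<0$ for $N$ large, so the right-hand side is maximized at $y=\delta_N(1-a_N)$ and reduces to $-c\delta_N + o(1)$ for some $c>0$; the at most $N/K_N$ terms in $\ell$ contribute only a polynomial prefactor that disappears as $T\to\infty$, and then $N\to\infty$ completes the argument. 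The main obstacle lies in the simultaneous balancing of the three scales: $\delta_N a_N$ must dominate $\sqrt{K_N/N}$ (so that Lemma \ref{bad} kills the $0$--label contribution), $\delta_N$ must dominate $K_N/N$ (so that the boundary error $CK_N/N$ in the interface--energy approximation is absorbed into $-c\delta_N$), and $a_N$ must remain small enough that $C\delta_N a_N\ll c\delta_N$. The choices $K_N=\sqrt N$, $\delta_N=N^{-1/8}$, $a_N=N^{-1/16}$ are the simplest triple meeting all three requirements.
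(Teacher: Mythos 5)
Your proposal is correct and follows essentially the same route as the paper's proof: the same decomposition \eqref{ineevents3} of $\cW_{1,1-\delta_N}^c$ using the auxiliary scale $a_N=N^{-1/16}$, the same use of Lemma \ref{lowereasy} to control the denominator, \eqref{usoH} together with \eqref{cm} for the numerator on $\cV_{-1,\ell}$, and the interface-energy estimate (Lemma \ref{prop: interface energy approx} via \eqref{tis++}/\eqref{IIquo5}) to extract the $-\Sigma$ surface term, with the same three-scale balance $K_N/N \ll \delta_N a_N \ll \delta_N$. The only cosmetic difference is that you invoke Lemma \ref{bad} directly for the zero-label piece while the paper goes through Lemma \ref{lowereasy} plus \eqref{miserve}; both deliver the same $-C a_N^2\delta_N^2 N/K_N \to -\infty$ rate.
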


\begin{proof}

Fix $ N$ sufficiently large such that $K_N\geq\bar K(\frac{\delta_N}{2}, \bbA \frac{\delta_N}{2})$
with $\bar K$ defined in Lemma \ref{prop: bad blocksnew}. Since $\mu^N_{\alpha,T}$ is a measure from inequality \eqref{ineevents3} it holds
\begin{equation}\label{mone}
1-\mu_{\alpha,T}^N( \cW_{0,\delta_N a_N})- \sum_{\ell=\lceil \delta_N (1-a_N) \frac{N}{K} \rceil }^{N/K} \mu_{\alpha,T}^N( \cV_{-1,\ell}\cap \cW_{0,\delta_N a_N}^c)
\leq \mu_{\alpha,T}^N( \cW_{1,1-\delta_N })\leq 1 \, ,
\end{equation}
where we set $a_N=N^{-1/16}$.
We get from Lemma \ref{lowereasy}
\begin{equation}
\lim_{N\to\infty}\lim_{T\to\infty}\frac{1}{T}\log\mu_{\alpha,T}^N ( \cW_{0,a_N \delta_N})\leq \Sigma+\lim_{N\to\infty}\lim_{T\to\infty}\frac{1}{T}\log(G_1)\leq \Sigma -C\lim_{N\to\infty}a_N^2\delta_N^2\frac{N}{K_N} \, ,
\label{mone1}
\end{equation}
where $G_1$ has  been defined in \eqref{GIbis} and we used inequality \eqref{miserve}.

Choose $\ga$ such that $\alpha<  - 2 \frac{8\sqrt{\rho(1-\rho)}+\Sigma}{\bbA} -\delta_N$.
For  $G_{3,\ell}$ defined in \eqref{GIII} with $\ell \geq \lceil \delta_N (1-a_N) \frac{N}{K} \rceil$, one has by using  \eqref{tis++} for any $N$ large enough 
\begin{eqnarray}
\lim_{T\to\infty}\frac{1}{T}\log\mu_{\alpha,T}^N( \cV_{-1,\ell}\cap \cW_{0,a_N \delta_N}^c)
\leq
\Sigma + \lim_{T\to\infty}\frac{1}{T}\log (G_{3,\ell})
\leq - \frac{\bbA}{2}  \delta_N  + C'' \frac{K_N}{N}  \, .\nonumber\\
\label{mone2}
\end{eqnarray}
Thanks to  the hypothesis  on $K_N$, $a_N$, $\delta_N$, it holds $\lim_{N\to\infty}\delta_N\frac{N}{K_N}=\infty$ and $\lim_{N\to\infty}a_N^2\delta_N^2\frac{N}{K_N}=\infty$ 
thus by using \eqref{mone}, \eqref{mone1} and \eqref{mone2} the proof is concluded.
\end{proof}

\begin{proof}[Proof of Theorem \ref{teo:condmes} (ii)]
The event $ \cW_{1,1-\delta}$ implies 
$$
\pi_T( \sum_i \eta_i)\geq N(1-\delta), \quad \text{and} \quad  \pi_T( \sum_i c_i(\eta))\leq \delta N .
$$ 
Thus the result follows by taking $\gamma_N=\delta_N$ with $\delta_N$ defined in Lemma \ref{usolemma2}.
\end{proof}


\subsection{FA-1f in $d>1$: proof of Theorem \ref{poor}}
\label{secpoor}

\begin{proof}[Proof of Theorem \ref{poor}]
Recall that we have extended definition \ref{defW} to the higher dimensional case by substituting $N/K$ with $(N/K)^d$.
We start from inequality
\begin{equation}
\label{last}
\mathds{1}_{\cW_{0,\delta/2}}+\mathds{1}_{\cW_{1,1-\delta}}+\mathds{1}_{\cW_{-1,\delta/2}}\geq 1
\end{equation}
which leads to
\begin{equation}
\label{uffff}
1-\mu_{\alpha,T}^{N,d}(\mathds{1}_{\cW_{0,\delta/2}})-\mu_{\alpha,T}^{N,d}(\mathds{1}_{\cW_{-1,\delta/2}})\leq\mu_{\alpha,T}^{N,d} (\mathds{1}_{\cW_{1,1-\delta}})\leq 1.
\end{equation}
Lemma \ref{nonhopiu} guarantees that 
\begin{equation}
\lim_{T\to\infty}\lim_{N\to\infty}\mu_{\alpha,T}^{N,d}(\mathds{1}_{\cW_{0,\delta/2}})=0\label{ufffff}\end{equation}
Then we notice that  if at time zero the configuration is completely filled and the clocks on all the $N^c$ sites which are unconstrained do not ring up to time $T$ then $\cA(T)=0$. Therefore 
\begin{equation}
\left \bra \exp\left(\frac{\alpha}{N^{d-c}} \cA(T)\right) \right\ket
\geq\exp(-N^c T)\rho^{N^d}
\label{uffffff}
\end{equation}
and by inserting this bound in the denominator of the definition \eqref{mesureh} of the measure $\mu_{\alpha,T}^{N,d}$ and using
 inequality \eqref{usoH} (extended to dimension $d$), we get
$$\lim_{T\to\infty}\lim_{N\to\infty}\mu_{\alpha,T}^{N,d}(\mathds{1}_{\cW_{-1,\delta/2}})=0$$
provided $\alpha<-2/(\delta\bbA)$ and the result is proven by inserting \eqref{ufffff} and \eqref{uffffff} into \eqref{uffff}.
\end{proof}

\section{Large deviations for a reduced activity}
\label{new}

As we already recalled in the introduction, in absence of constraints (i.e. for the model defined by \eqref{gene} and \eqref{ci} with $r_i(\eta)\equiv 1$), the probability of observing a large deviation from the mean value scales as 
$$
\lim_{N\to\infty}\lim_{t\to\infty} \; \frac{1}{Nt} \log \left \bra  \frac{\cA(t)}{Nt} \simeq a \right \ket  =-f(a) \, ,
$$ 
with $0< f(a) <\infty$ for $a\neq 2\rho(1-\rho)$. 
In this section we will prove Theorems \ref{fluctu1} and \ref{fluctu2} which establish that a different scaling occurs for the large deviations of the activity below the mean value in the presence of constraints.

 \begin{proof}[Proof of Theorem \ref{fluctu1}]
 Let us start by the upper bound. 
 Let $\alpha_0$ be defined as in Theorem \ref{teo:phasetrans}, then
 \begin{equation}
\left \bra \frac{\cA(t)}{Nt}\sim u \bbA  \right \ket 
\leq 
\exp(-\alpha_0 u \bbA t) \; \left \bra \exp\left(\frac{\alpha_0}{N}\cA(t)\right) \right \ket \, .
 \end{equation}
 Therefore by taking the $\limsup_{N\to\infty}$ on the right and left hand side and using Theorem \ref{teo:phasetrans} (i), we obtain
 and the desired upper bound.
 
For the lower bound, we consider  FA-1f with two empty boundaries (the proof in the other cases is analogous). 
The contribution to  $\cA(t)/(Nt)$ can be decomposed into the contributions coming respectively from the configuration changes during the
time interval $[0,u t]$ and $[u t,t]$. 
With probability (w.r.t. the mean over the initial distribution $\nu$ and the evolution of the process) which
goes to one as $t$ goes to infinity the first contribution goes to $u\bbA$ and, thanks to the reversibility of $\nu$, the distribution at time $u t$ is still $\nu$.
Then we can impose that during the second time interval $[u t,t]$ the contribution to  $\cA(t)/(Nt)$ is at most $2/\sqrt N$  by requiring that $\eta_{\sqrt N+1	}(s)\dots\eta_{N-\sqrt N}(s)=1$ for any time $s$ in $[u t, t]$. 
Thus 
$$
\liminf_{N\to\infty}\lim_{t\to\infty} \; \frac{1}{t} \log \left \bra \frac{\cA(t)}{Nt}\sim u\bbA \right \ket
\geq 
\liminf_{N\to\infty}\lim_{t\to\infty} \; \frac{1}{t}  \log \left \bra \mathds{1}_{\cB} \right \ket \, ,
$$
where $\cB$ is the 
event which is verified iff starting from $\nu$ it holds $\pi_{(1-u)t}(\eta_{\sqrt N+1} \dots\eta_{N-\sqrt N})=1$.
As we did for the event $\cO$ in \eqref{whynot} we get here
$$
\liminf_{N\to\infty} \lim_{t\to\infty}\frac{1}{(1-u)t}\log  \bra \mathds{1}_{\cB} \ket \geq -\Sigma \, ,
$$
and the proof is completed.
%
%
%
 \end{proof}

 \begin{proof}[Proof of Theorem \ref{fluctu2}]
 The proof of the upper bound follows along the same line as for Theorem \ref{fluctu1}
 starting now from the inequality
 \begin{equation}
\left \bra \frac{\cA(t)}{N^d t}\sim u \bbA \right \ket
\leq \exp(-\alpha_0 u \, \bbA tN^c)  \left \bra \exp\left(\frac{\alpha_0}{N^{d-c}}\cA(t)\right) \right \ket \, .
 \end{equation}

 The lower bound is derived in the same way by freezing the configuration during the time interval $[ut,t]$. The probability of realizing this event can be bounded from below by the probability that the $N^c$ unconstrained sites (those which are in contact with the empty boundary) remain frozen equal to 1. 
\end{proof}

\subsection*{Acknowledgements}
We wish to thank  L.Bertini, B.Derrida, V.Lecomte, F. van Wijland and L.Zambotti for very useful discussions.
We acknowledge the support of the French Ministry of Education
through the ANR BLAN07-2184264. C.T acknowledges the support of ANR DynHet and of the ERC Advanced Grant  PTRELSS 228032.


\begin{thebibliography}{10}


\bibitem[AD]{AD} D.Aldous, P.Diaconis, 
{\sl The asymmetric one-dimensional constrained Ising  model: rigorous results} J.Stat.Phys {\bf 107}, 945  (2002)

\bibitem[BLT]{BLT} 
T. Bodineau, V. Lecomte, C. Toninelli, work in progress.


\bibitem[CMRT]{CMRT}
N. Cancrini, F. Martinelli, C. Roberto, C. Toninelli,
{\sl Kinetically constrained spin models}, Probability Theory and Related Fields \textbf{140}, 459--504,  (2008)

\bibitem[CMRT1]{CMRT1}
N. Cancrini, F. Martinelli, R. Schonmann, C. Toninelli,
{\sl Facilitated Oriented Spin Models: Some Non Equilibrium Results}, J.Stat.Phys {\bf 138}, 1109-1123  (2010)

\bibitem[CMRT2]{CMRT2}
N.Cancrini, F.Martinelli, C.Roberto, C.Toninelli, 
{\sl Facilitated spin models: recent and new results} in Methods of contemporary mathematical statistical physics, Lecture Notes in Mathematics, p.307-339 R.Kotecky Ed., Springer (2009);

\bibitem[DZ]{DZ} 
A.Dembo, O.Zeitouni, {\sl Large deviations techniques and applications}, series Stochastic modelling and applied probability, vol.38, Springer (1998)

\bibitem[FA1]{FA1}
    G.H. Fredrickson, H.C. Andersen,
     \sl{Kinetic Ising model of the glass transition}, Phys. Rev. Lett. {\bf 53}, 1244--1247,   (1984)

\bibitem[FA2]{FA2}
    G.H. Fredrickson, H.C. Andersen, \sl{Facilitated kinetic Ising models and the glass transition}
    J. Chem. Phys. {\bf 83}, 5822--5831 (1985)

\bibitem[GJLPDW1]{GJLPDW1}
J.P. Garrahan, R.L. Jack, V. Lecomte, E. Pitard, K. van Duijvendijk, F. van Wijland, {\sl First-order dynamical phase transition in models of glasses: an approach based on ensembles of histories },  J. Phys. A {\bf 42} 075007 (2009) 

\bibitem[GJLPDW2]{GJLPDW2}
J.P. Garrahan, R.L. Jack, V. Lecomte, E. Pitard, K. van Duijvendijk, F. van Wijland, {\sl Dynamic first-order transition in kinetically constrained models of glasses}  Phys.Rev.Lett.  {\bf 98}, 195702 (2007) 

\bibitem[GST]{GST} J.P. Garrahan, P. Sollich, C.Toninelli, {\sl Kinetically Constrained Models}, preprint arXiv:1009.6113

\bibitem[JGC]{JGC} R. Jack, J.P.  Garrahan, D. Chandler, 
{\sl Space-time thermodynamics and subsystem observables in kinetically constrained models of glassy materials} J.Chem.Phys. {\bf 125}, 184509, (2006)


\bibitem[JE]{JE}
J. J\"{a}ckle, S. Eisinger,
\textit{A hierarchically constrained kinetic Ising model}, Z. Phys. B: Condens. Matter, \textbf{84}, 115--124,  (1991)

\bibitem[KL]{KL}
C. Kipnis, C. Landim, {\it Scaling limits of interacting particle systems}, Grundlehren der Mathematischen Wissenschaften  {\bf 320} Springer (1999)

\bibitem[MGC]{MGC} 
M.Merolle, J.P.Garrahan, D.Chandler {\sl Space-time thermodynamics of the glass transition}, Proc.Matl.Acad.Sci. USA, {\bf 102}, 10837--10840 (2005)


 



%

%

%

\bibitem[RS]{RS} F. Ritort, P. Sollich, {\sl Glassy dynamics of kinetically constraint models}, Advances in Physics, {\bf 52},  219-342, (2003) 


\bibitem[SS]{SS}
R. Schonmann, S. Shlosman, 
{\em  Complete analyticity for $2$D Ising completed}, Comm. Math. Phys. {\bf 170}, no. 2, 453--482  (1995)
\end{thebibliography}
\end{document}